\documentclass[11pt]{article}

\title{Credit-Based vs. Discount-Based Congestion Pricing: A Comparison Study}
\author{Chih-Yuan Chiu\footnote{Georgia Institute of Technology, School of Electrical and Computer Engineering (Email: \texttt{cyc@gatech.edu})}, Devansh Jalota\footnote{Columbia University, Data Science Institute (Email: \texttt{
dj2757@columbia.edu})}, Marco Pavone\footnote{Stanford University, Department of Aeronautics and Astronautics (Email: \texttt{pavone@stanford.edu})}}

\usepackage{amsmath, amssymb}
\usepackage[utf8]{inputenc}
\usepackage{amsthm}

\usepackage{natbib}

\allowdisplaybreaks

\usepackage[dvipsnames]{xcolor}
\usepackage{graphicx}
\usepackage{bbm}
\usepackage[inline]{enumitem}
\usepackage{url}
\usepackage[colorlinks=true,
raiselinks=true,
linkcolor=Sepia,
citecolor=MidnightBlue,
urlcolor=Sepia]{hyperref}

\usepackage{algorithm}
\usepackage[noend]{algpseudocode}

\usepackage{geometry}
\usepackage{siunitx}

\usepackage{mathtools}
\mathtoolsset{mathic=true}
\usepackage{todonotes}
\usepackage[varqu, varl, scale=0.9]{inconsolata}
\usepackage{booktabs}
\usepackage[normalem]{ulem}
\usepackage{dirtytalk}
\usepackage{multirow}

\numberwithin{equation}{section}

\newtheorem{theorem}{Theorem}[section]
\newtheorem{lemma}[theorem]{Lemma}
\newtheorem{proposition}[theorem]{Proposition}
\newtheorem{corollary}[theorem]{Corollary}

\newtheorem{remark}{Remark}
\newtheorem{assumption}{Assumption}
\newtheorem{example}[theorem]{Example}

\newtheorem{definition}[theorem]{Definition}

\DeclarePairedDelimiterX\loro[1]\rbrack\lbrack{#1}
\DeclarePairedDelimiterX\lorc[1]\rbrack\rbrack{#1}
\DeclarePairedDelimiterX\lcro[1]\lbrack\lbrack{#1}
\DeclarePairedDelimiterX\lcrc[1]\lbrack\rbrack{#1}
\DeclarePairedDelimiterX\set[1]\lbrace\rbrace{#1}

\newcommand{\ra}{\ensuremath{\rightarrow}}

\DeclarePairedDelimiterX\norm[1]\lVert\rVert{#1}

\newcommand{\N}{\ensuremath{\mathbb{N}}}

\newcommand{\R}{\ensuremath{\mathbb{R}}}
\newcommand{\Set}{\ensuremath{\mathcal{S}}}

\newcommand{\eq}{\text{eq}}
\newcommand{\flows}{\mathcal{Y}}
\newcommand{\G}{\mathcal{G}}
\newcommand{\network}{\mathcal{N}}
\newcommand{\Y}{\mathcal{Y}}
\newcommand{\routes}{\textbf{R}}
\newcommand{\nodes}{I}
\newcommand{\edges}{E}

\newcommand{\In}{\text{in}}
\newcommand{\Out}{\text{out}}
\newcommand{\ELP}{\text{ELP}}
\newcommand{\ori}{\text{ori}}
\newcommand{\dest}{\text{dest}}

\newcommand{\boldtau}{\boldsymbol{\tau}}
\newcommand{\boldalpha}{\boldsymbol{\alpha}}
\newcommand{\boldlambda}{\boldsymbol{\lambda}}
\newcommand{\boldy}{\textbf{y}}

\newcommand{\boldr}{\textbf{r}}
\newcommand{\boldzero}{\textbf{0}}

\newcommand{\adm}{\text{adm}}
\newcommand{\iter}{\text{iter}}

\newcommand{\ODPair}{\mathcal{P}}
\newcommand{\city}{\text{city}}
\newcommand{\income}{\text{income}}
\newcommand{\paren}[1]{{({#1})}}

\newcommand{\Unif}{\text{Unif}}
\newcommand{\Sphere}{\mathcal{S}}
\newcommand{\proj}{\text{proj}}

\newcommand{\boldx}{\textbf{x}}

\newcommand{\boldB}{\textbf{B}}

\newcommand{\revision}[1]{\textcolor[rgb]{0,0,0}{#1}}
\newcommand{\revisionApp}[1]{\textcolor[rgb]{0,0,0}{#1}}

\usepackage{todonotes}

\begin{document}

\maketitle


\begin{abstract}
\textbf{Problem definition}: Credit-based congestion pricing (CBCP) and discount-based congestion pricing (DBCP), which respectively allot travel credits and toll discounts to subsidize low-income users' access to tolled roads, have emerged as promising policies for alleviating the societal inequity concerns of congestion pricing. However, since real-world deployments of CBCP and DBCP remain limited, transportation agencies lack clear guidance on which policy better achieves travel cost reduction and revenue generation objectives. In this work, we compare the efficacy of deploying CBCP and DBCP in reducing user costs and increasing toll revenues. \textbf{Methodology / results}: We first formulate a non-atomic congestion game in which low-income users receive a travel credit or toll discount for accessing tolled lanes. We establish that, in our formulation, Nash equilibrium flows always exist and can be computed or well approximated via convex programming. Our main result establishes a set of practically relevant conditions under which DBCP provably outperforms CBCP in inducing equilibrium outcomes that minimize a given societal cost, which encodes user cost reduction and toll revenue maximization. Finally, we validate our theoretical contributions via a case study of the 101 Express Lanes Project, a CBCP program implemented in the San Francisco Bay Area. \textbf{Implications}: Our results provide theoretical and computational foundations for evaluating equity-oriented congestion pricing policies and offer practical guidance for transportation agencies seeking to design congestion pricing programs that better balance travel cost reduction and revenue generation objectives.
\end{abstract}

\section{Introduction}
\label{sec: Introduction}


Congestion levels in urban centers have risen sharply around the world in recent years, resulting in extended commute times that exacerbate environmental pollution and cause steep financial losses. As policy makers explore measures to reduce vehicular volume during peak traffic, congestion pricing has attracted increasing attention as a promising mechanism for traffic management. 
The core tenet of congestion pricing is to levy tolls on regularly overcrowded roads, to shift the routing decisions of selfish users and promote the efficient use of traffic infrastructure. 
Beyond cordon-based congestion pricing policies, such as the one recently implemented in New York City, congestion tolls have been deployed on freeway express lanes in the San Francisco Bay Area and other urban centers 
\citep{CBCP-SanMateo, DBCP-ExpressLaneSTART, Seattle-CongestionPricingWebsite}
to offer users faster commutes through congestion zones.

Despite the potential of congestion fees to alleviate traffic congestion, current congestion pricing mechanisms have also attracted criticism for their disproportionate impact on low-income travelers, due to their regressive nature. Specifically, tolled express lanes on highways have been denounced as \say{elitist Lexus lanes} that levy \say{taxes on the working class,} for offering faster commute options to the wealthiest users at the expense of relegating lower-income users to slower lanes \citep{Patterson2008LexusOrCorolla, Klein2021AreTollLanesElististOrProgressive, Watterman2025CongestionPricingWillHurtWorkingClassCommuters, Paybarah2019CongestionPricingTaxOnWorkingClass}.

Given the fairness concerns surrounding current tolling policies, the design of \textit{equitable} congestion pricing policies has attracted increasing attention from policy makers who aim to reduce traffic congestion without worsening social inequities. For instance, traffic authorities in the San Francisco Bay Area \citep{CBCP-SanMateo} have implemented a pilot program for \textit{credit-based} congestion pricing (CBCP), which levies express lane tolls on segments of the US-101 highway while allocating travel credits to low-income users to access tolled express lanes. Meanwhile, \textit{discount-based} congestion pricing (DBCP) policies, which offer users toll discounts instead of travel credits to offset the financial burden of road tolls, have been deployed elsewhere in the San Francisco Bay Area \citep{DBCP-ExpressLaneSTART}
and other urban centers.

Although CBCP and DBCP both show promise in addressing the equity concerns of congestion pricing, limited research has studied the fundamental question of whether CBCP or DBCP outperforms the other in attaining a given societal objective, such as mitigating traffic congestion or alleviating inequities. 
To address this gap and guide the practical design of DBCP and CBCP policies, our work compares the effects of their deployment to manage traffic flow on 
highway networks with tolled express lanes,
in alignment with the real-world deployment of these policies.
Specifically, we present a set of practically relevant conditions on the network structure, user attributes, and societal cost, under which DBCP yields lower societal costs than CBCP, or vice versa.
The identification of these conditions is crucial for policy makers who aim to fully realize the potential of DBCP and CBCP for equitable and efficient traffic management.
\paragraph{Contributions:}
In this work, we study and compare the societal cost outcomes of deploying CBCP and DBCP policies over contiguous multi-lane highway segments with express and general-purpose (GP) lanes.
In alignment with the real-world deployment of CBCP and DBCP policies, we first present a mixed economy model, in which \textit{eligible} users receive either travel credits or toll discounts to access tolled express lanes, while \textit{ineligible} users must pay out-of-pocket to use express lanes. 
Our model generalizes existing CBCP formulations to account for traffic networks with multiple origin-destination pairs 
\citep{Jalota2022CreditBasedCongestionPricing}.
Similarly, we extend existing convex programming methods for tractably computing or approximating CBCP equilibrium flows to the multiple origin-destination setting. 
We also present a convex program to compute DBCP equilibria, based on the convex optimization methods commonly used to compute heterogeneous user equilibria.
As our main contribution, we analyze and contrast the impact of deploying CBCP and DBCP policies, quantified via a societal cost metric that accounts for the travel costs incurred by users of different income levels, as well as the amount of toll revenue generated. Concretely, we first present Assumptions \ref{Assum, h: zero sublevel set of h is bounded}-\ref{Assum: Eligible users cannot access express lane without subsidy under DBCP} on the network structure, user attributes, and societal cost, which are consistent with real-world deployments of CBCP and DBCP policies over highway corridors
in the San Francisco Bay Area. 
In particular, for analytical tractability, we compare CBCP and DBCP under a stylized chain network representing consecutive highway segments (Assumption \ref{Assum: Chain Network}).
Then,
we demonstrate that:
\begin{enumerate}
    \item Under Assumptions \ref{Assum, h: zero sublevel set of h is bounded}-\ref{Assum: Eligible users cannot access express lane without subsidy under DBCP}, if the societal cost is defined to prioritize toll revenue maximization over the minimization of eligible users' travel costs, then DBCP policies induce a lower or equal societal cost at equilibrium compared to CBCP policies (Thm. \ref{Thm: Main}).
    \revision{
    Our focus on revenue objectives reflects their key role in real-world congestion pricing, in which toll revenues are utilized to fund transit investment \citep{MTA2026CongestionPricingMakesForBetterTransit} or lump-sum payments to low-income users  \citep{Jalota2021WhenEfficiencyMeetsEquity}. We note that revenue objectives also underpin many public–private toll road partnerships \citep{ONeill2022PrivatisingAndFinancialisingRoads}.
    }
    

    \item 
    Under Assumptions \ref{Assum, h: zero sublevel set of h is bounded}-\ref{Assum: Eligible users cannot access express lane without subsidy under DBCP}, and an additional mild condition on the CBCP policies under consideration (Assumption \ref{Assum: Optimal CBCP Eq Flow admits eligible users}), if the societal cost \textit{strictly} prioritizes toll revenue maximization over the minimization of eligible users' travel costs, then DBCP policies induce a \textit{strictly} lower minimum societal cost at equilibrium compared to CBCP policies (Cor. \ref{Cor: Main Thm, Strictness}).

    \item If the societal cost is instead defined to prioritize the minimization of eligible users' travel costs over toll revenue maximization, it is possible for CBCP policies to outperform DBCP policies in minimizing the equilibrium societal cost (Prop. \ref{Prop: Assump lambda R geq lambda E removed}).
    

\end{enumerate}

To validate our theoretical contributions, we present a numerical study of the San Mateo 101 Express Lanes Project, in which we adapt existing zeroth-order gradient descent methods \citep{Maheshwari2024FollowerAgnosticLearninginStackelbergGames} to compute first-order stationary CBCP and DBCP policies for consecutive segments of the US-101 highway. We then compare the
users' travel costs and generated toll revenues realized by the stationary CBCP and DBCP policies
under equilibrium flows. Our experiment results indicate that, consistent with our main theoretical results (Thm. \ref{Thm: Main} and Cor. \ref{Cor: Main Thm, Strictness}), DBCP policies 
outperform CBCP policies under the conditions on the network structure, user attributes, and societal cost (Assumptions \ref{Assum, h: zero sublevel set of h is bounded}-\ref{Assum: Eligible users cannot access express lane without subsidy under DBCP}) stated above. 
Based on our theoretical and empirical analysis of DBCP and CBCP policies, we present policy recommendations and propose avenues of future research for improving efficiency and equity in
congestion pricing.

Our paper is structured as follows. Sec. \ref{sec: Related Work} surveys relevant literature. Sec. \ref{sec: CBCP and DBCP Policies} presents the traffic flow and mixed economy models analyzed in our work, and formulates CBCP and DBCP policies and their corresponding Nash equilibrium flow concepts, which we term CBCP and DBCP equilibria, respectively. Then, in Sec. \ref{sec: CBCP and DBCP Equilibrium Computation}, we present convex programs to tractably compute or approximate CBCP and DBCP equilibria. Next, in Sec. \ref{sec: CBCP and DBCP Comparison Study for the Chain Network Setting}, we compare CBCP and DBCP policies on a class of societal cost metrics. 
In Sec. \ref{sec: Numerical Experiments}, we present an empirical study of the San Mateo 101 Express Lanes Start program that validates theoretical results presented in Sec. \ref{sec: CBCP and DBCP Comparison Study for the Chain Network Setting}. In addition, we derive policy implications from our empirical results and discuss directions for future research. Finally, in Sec. \ref{sec: Conclusion}, we summarize our contributions.


\section{Related Work}
\label{sec: Related Work}

Our work is broadly inspired by prior research efforts to promote both efficiency and equity in resource allocation across a wide range of engineering domains, including communication network design \citep{Ogryczak2014FairOptimizationandNetworksASurvey, Pioro2003OnEfficientMaxMinFairRoutingAlgorithms} 
and traffic routing \citep{Jahn2005SystemOptimalRoutingofTrafficFlowswithUserConstraintsinNetworkswithCongestion}. 
Specifically, our work builds upon and contributes to existing research on equitable congestion pricing, which aims to balance efficiency and equity when designing incentives to influence users’ routing behaviors in real-world traffic. Although the literature on equitable congestion pricing is extensive, below we review prior works relevant to the two classes of policies compared in our work: 
credit-based and discount-based congestion pricing.

Credit-based congestion pricing (CBCP) refunds a fraction of toll revenues to low-income users as travel budgets that enable access to tolled resources
\citep{Li2020TrafficAndWelfareImpactsOfCBCP, Ferguson2020EffectivenessOfSubsidiesAndTolls}, and is thus related to the 
literature on
revenue redistribution policies, which similarly transfer toll revenues to low-income users as lump-sum payments  \citep{Daganzo1995AParetoOptimumCongestionReductionScheme, Guo2010ParetoImprovingCongestionPricing, Jalota2021WhenEfficiencyMeetsEquity}.
Prior works \citep{Small1992UsingRevenuesFromCongestionPricing, 
Guo2010ParetoImprovingCongestionPricing, Jalota2021WhenEfficiencyMeetsEquity} prescribed revenue redistribution policies that can attain system efficiency and address equity issues 
across a range of traffic congestion models.
Similarly, our CBCP model considers allocating a fraction of the toll revenue to low-income users. However, unlike prior works which propose distributing monetary payouts 
to users, our CBCP formulation allocates credits whose sole function is to enable express lane access. 

Aside from CBCP, our work also studies discount-based congestion pricing (DBCP) policies, which instead 
provide toll \textit{discounts} to disadvantaged users. DBCP policies impose distinct toll charges to different user groups, and are thus closely related to the heterogeneous tolling policies commonly studied in the congestion pricing literature. While existing work on heterogeneous tolling differentiates users based on various attributes, e.g., whether their vehicles operate autonomously or use renewable energy sources, the DBCP policies formulated in our work specifically subsidize \textit{low-income} users to address equity issues in existing tolling policies.

From a methodological perspective, our work contributes to the rich literature on characterizing heterogeneous user equilibria in routing games 
\citep{Fleischer2004TollsforHeterogeneousSelfishUsers, Maheshwari2024CongestionPricingforEfficiencyandEquity}. In particular, our work is closely aligned with \citet{Jalota2022CreditBasedCongestionPricing}, which analyzed equilibrium flow patterns induced by CBCP policies.
However, the equilibrium flow analysis in \citet{Jalota2022CreditBasedCongestionPricing}
was presented only for the \textit{single} highway segment setting.
In contrast, our work presents novel characterizations of CBCP and DBCP equilibria not found in 
\citet{Jalota2022CreditBasedCongestionPricing} (see Lemmas \ref{Lemma: DBCP Eq Decomposition, Chain Network}, \ref{Lemma: DBCP Equilibria and Effective VoTs}, and \ref{Lemma: Main, Explicit Construction of DBCP Policy}), to 
contrast the deployment of CBCP and DBCP policies across \textit{multiple consecutive} highway segments. 
Thus, compared to 
\citet{Jalota2022CreditBasedCongestionPricing}, our work offers a more realistic description of the equilibrium traffic patterns induced by congestion pricing mechanisms, most of which are currently deployed over consecutive highway segments.
Moreover, since our CBCP formulation enforces budget constraints (see Definition \ref{Def: CBCP Equilibria}), our work is also connected to the literature on equilibrium traffic models with side constraints (e.g., capacity constraints) \citep{LarssonPatriksson1999SideConstrainedEquilibriumModels, Correa2004SelfishRoutinginCapacitatedNetworks, Jahn2005SystemOptimalRoutingofTrafficFlowswithUserConstraintsinNetworkswithCongestion}. However, the side constraints introduced in the traffic models of \citet{LarssonPatriksson1999SideConstrainedEquilibriumModels, Correa2004SelfishRoutinginCapacitatedNetworks, Jahn2005SystemOptimalRoutingofTrafficFlowswithUserConstraintsinNetworkswithCongestion} merely impose restrictions on each user's routing decision along a \textit{single journey} from origin to destination. In contrast, the budget constraints in our CBCP policy formulation introduce dependencies between the routing decisions of each user \textit{across multiple journeys over periods} that cannot be disentangled, since,
in our traffic model, both the toll levied on each edge and each user's value of time may vary over periods.

Our work 
is also related to
the literature on 
artificial currency mechanisms, 
and zeroth-order descent methods for bilevel optimization, which we survey in App. \ref{sec: App, Additional Related Work}. 
We also describe the substantial ways in which we have built upon a preliminary conference version of this paper.


\section{CBCP and DBCP Policies}
\label{sec: CBCP and DBCP Policies}

In this section, we introduce the network, user population, and traffic flow models used in our work (Sec. \ref{subsec: Setup}). We then present discount-based congestion pricing (DBCP) and credit-based congestion pricing (CBCP) policies and corresponding equilibrium flow concepts (Sec. \ref{subsec: DBCP, Formulation}-\ref{subsec: CBCP, Formulation}). 
Our model formulation in this section, and the properties of DBCP and CBCP equilibria presented in Sec. \ref{sec: CBCP and DBCP Equilibrium Computation}, 
underpin our main theoretical results in Sec. \ref{sec: CBCP and DBCP Comparison Study for the Chain Network Setting}, which compare the societal cost at DBCP and CBCP equilibrium flows on highways with tolled express lanes.

\subsection{Setup}
\label{subsec: Setup}

We formulate and analyze CBCP and DBCP policies on an acyclic traffic network $\network = (\nodes, \edges)$, where $\nodes$ and $\edges$ denote the set of nodes and the set of edges in $\network$, respectively. 
We denote the set of all incoming edges and all outgoing edges at each node $i \in \nodes$ by $\edges_i^\In$ and $\edges_i^\Out$, respectively.
Let $\ODPair$ denote the set of origin-destination (o-d) pairs
in the network $\network$. Each o-d pair $p \in \ODPair$ is associated with an origin node $p_o$, a destination node $p_d$, and a set of routes $R(p)$. Concretely, each route associated with an o-d pair $p \in \ODPair$ is defined to be an ordered subset of edges that originates at $p_o$ and terminates at $p_d$. Each edge $e \in \edges$ in the network $\network$ comprises two lanes $k \in [2] := \{1, 2\}$, an \textit{express} lane that can be tolled ($k = 1$) and \textit{general-purpose} (GP) lanes which must remain toll-free ($k = 2$). 
Although we model each edge as two lanes for simplicity, we note in Remark \ref{Remark: Latency Function Model for Multiple GP Lanes} that our theoretical results 
generalize to the
setting in which each 
network edge comprises
one express lane and multiple GP lanes. 



Users commute through the traffic network $\network$ at each period $t \in [T]$ over a finite time horizon $[T] := \{1, \cdots, T\}$. We partition users into 
a finite set of \textit{eligible} user groups $G^E$ who receive travel subsidies (credits or discounts), and a finite set of \textit{ineligible} user groups $G^I$ who receive no subsidy. 
Each group $g \in G := G^E \cup G^I$
is identified with a specific o-d pair $p^g := (p_o^g, p_d^g) \in \ODPair$, a value of time $v_t^g$ at each period $t \in [T]$, and a travel demand $d^g$ describing the flow of users in group $g \in G$ traveling through $\network$ at each $t \in [T]$.
The travel demand of each group is assumed to be fixed over the $T$ periods, reflecting weekday rush hour traffic patterns observed, e.g., in the Caltrans' Performance Measurement System (PeMS) database \citep{pems-database}. For each edge $e \in \edges$, we denote by $G_e$ the set of all groups $g \in G$ whose associated o-d pair is connected by at least one route $r \in R(p^g)$ containing $e$, i.e., $G_e := \{ g \in G: \exists \ r \in R(p^g) \text{ s.t. } e \in r \}$. Moreover, we set $G_e^E := G_e \cap G^E$ and $G_e^I := G_e \cap G^I$. We also define $d_e := \sum_{g \in G_e} d^g$ to be the maximum possible demand experienced on edge $e$. Similarly, we define $d_e^I := \sum_{g \in G_e^I} d^g$ and $d_e^E := \sum_{g \in G_e^E} d^g$. We also define $n^E := \sum_{e \in \edges} |G_e^E|$ to be the total count of eligible groups across various edges (here, $|\cdot|$ denotes set cardinality),
and
set $n^I := \sum_{e \in \edges} |G_e^I|$ and $n := \sum_{e \in \edges} |G_e|$. 

Users' routing decisions, across all groups and periods,
generate a \textit{flow} $\boldy := (y_{e,k,t}^g: e \in \edges, k \in [2], t \in [T], g \in G_e) \in \R_{\geq 0}^{2nT}$,
which 
must be component-wise non-negative and satisfy flow continuity at each node $i \in \nodes$ and period $t \in [T]$. Concretely, $\boldy$ must lie within the admissible set $ \Y_\adm$ defined below, where $\textbf{1}(\cdot)$ returns 1 when the input argument is true, and 0 otherwise:
{
\setlength{\abovedisplayskip}{4pt}
\setlength{\belowdisplayskip}{4pt}
\begin{align} \label{Eqn: Admissible Flow Set}
    \Y_\adm &:= \Big\{ \boldy \in \R_{\geq 0}^{2nT}: d^g \ \textbf{1}\{i = p_o^g \} + 
    \sum_{\hat e \in \edges_i^\In}
    \sum_{k=1}^2 y_{\hat e,k,t}^g = d^g \ \textbf{1}\{i = p_d^g \} + 
    \sum_{e' \in \edges_i^\Out} 
    \sum_{k=1}^2 y_{e',k,t}^g, \\ \nonumber
    &\hspace{3.6cm} \forall \ i \in \nodes, g \in G, t \in [T]. \Big\}.
\end{align}
}
\hspace{-1mm}As we elucidate in Sec. \ref{subsec: DBCP, Formulation}-\ref{subsec: CBCP, Formulation} below, congestion pricing policies additionally restrict the set of feasible flows. Next, we denote the \textit{lane flows} corresponding to the flow $\boldy$ by $x := (x_{e,k,t}: e \in \edges, k \in [2], t \in [T]) \in \R_{\geq 0}^{2|\edges|T}$, where
$x_{e,k,t} := \sum_{g \in G_e} y_{e,k,t}^g, \ \forall e \in \edges, k \in [2], t \in [T].$
The lane flows induced by users' routing decisions generate latency over the network. Concretely, we model the travel time on either lane of each edge $e$, at each period $t \in [T]$, by the strictly increasing, convex function $\ell_e: \R \ra \R$. Since in practice, the express and GP lanes associated with an edge $e$ represent different lanes on the same road or highway segment, we assume in Sec. \ref{sec: CBCP and DBCP Policies}-\ref{sec: CBCP and DBCP Comparison Study for the Chain Network Setting} that the two lanes share the same latency function $\ell_e$. Our network model and theoretical contributions in Sec. 
\ref{sec: CBCP and DBCP Policies}-\ref{sec: CBCP and DBCP Comparison Study for the Chain Network Setting}
generalize straightforwardly to the setting where the lane index $k = 2$ corresponds to multiple, identical GP lanes on each edge of a real-world traffic network (see Remark \ref{Remark: Latency Function Model for Multiple GP Lanes}).

In Sec. \ref{subsec: DBCP, Formulation} and \ref{subsec: CBCP, Formulation}, we first define the set of feasible flows associated with a given CBCP or DBCP policy, which prescribes flow continuity, non-negativity, and budget constraints.
Then, we specify the latency and toll costs associated with each edge $e \in \edges$, lane $k \in [2]$, period $t \in [T]$, and group $g \in G_e$ for eligible and ineligible users, as a function of flows $\boldy$ that are admissible under a given CBCP or DBCP policy.
Finally, we define equilibrium flows associated with a given CBCP or DBCP policy, which describe the corresponding steady-state user behavior.


\vspace{-1mm}
\begin{remark} \label{Remark: Latency Function Model for Multiple GP Lanes}
Our latency model can be generalized to real-world networks in which each edge comprises one express lane and $n_{GP}$ general-purpose (GP) lanes for some $n_{GP} > 1$, where the impact of traffic flow on travel time is identical across all $n_{GP} + 1$ lanes. Concretely, for each edge $e$, we formulate two latency functions $\ell_{e,1}, \ell_{e,2}: \R_{\geq 0} \ra \R_{\geq 0}$ satisfying the constraint:
{
\setlength{\abovedisplayskip}{4pt}
\setlength{\belowdisplayskip}{4pt}
\begin{align} \label{Eqn: Latency Function, Multiple GP Lanes}
    \ell_{e,2}(z) = \ell_{e,1}\Big(\frac{z}{n_{GP}} \Big), \hspace{1cm} \forall z \geq 0.
\end{align}
}
\hspace{-1mm}In words, the impact of evenly distributing the traffic flow on edge $e$ outside the express lane (represented by $z$ above) over $n_{GP}$ identical GP lanes is identical to the impact of the flow $z/n_{GP}$ on the travel time over a single GP lane.
Likewise, our formulation generalizes to the setting where each edge of the traffic network comprises multiple express lanes and multiple GP lanes. 

Since our model formulation and analysis in Sec. \ref{sec: CBCP and DBCP Policies}, \ref{sec: CBCP and DBCP Equilibrium Computation}, and \ref{sec: CBCP and DBCP Comparison Study for the Chain Network Setting} can be adapted to accommodate any positive integer value of $n_{GP}$, we assume 
in Sec. \ref{sec: CBCP and DBCP Policies}, \ref{sec: CBCP and DBCP Equilibrium Computation}, and \ref{sec: CBCP and DBCP Comparison Study for the Chain Network Setting} that $n_{GP} = 1$. 
Then, by \eqref{Eqn: Latency Function, Multiple GP Lanes}, $\ell_{e,2} = \ell_{e,1}$ for every $e \in \edges$, so we replace $\ell_{e,1}$ and $\ell_{e,2}$ with the same edge latency function $\ell_e$.
\end{remark}

\subsection{Discount-Based Congestion Pricing (DBCP)}
\label{subsec: DBCP, Formulation}

A discount-based congestion pricing (DBCP) policy is characterized by a tuple $(\boldtau, \boldalpha)$, where $\boldtau := (\tau_{e,t}: e \in \edges, t \in [T]) \in \R_{\geq 0}^{|\edges|T}$ denotes the toll levied on express lanes across edges and periods, while $\boldalpha := (\alpha_{e,t}: e \in \edges, t \in [T]) \in [0, 1]^{|\edges|T}$ denotes the discount provided to each eligible group across each edge and period. 

At each period $t$, each ineligible user who accesses the express lane on an edge $e \in \edges$ must pay the full toll $\tau_{e,t}$ out of pocket, while each eligible user pays the \textit{discounted} toll $(1 - \alpha_{e,t}) \tau_{e,t}$ out of pocket. The $(\boldtau, \boldalpha)$-DBCP policy induces a congestion game $\G^D(\boldtau, \boldalpha)$ with a feasible flow set $\Y(\G^D(\boldtau, \boldalpha))$ encoding flow continuity and non-negativity constraints:
{
\setlength{\abovedisplayskip}{2pt}
\setlength{\belowdisplayskip}{2pt}
\begin{align} 
\label{Eqn: Flow Constraint Set, DBCP}
    \Y(\G^D(\boldtau, \boldalpha)) &= \Y_\adm.
\end{align}
}

Under the $(\boldtau, \boldalpha)$-DBCP policy, on each edge $e \in \edges$ and lane $k \in [2]$, at each period $t \in [T]$, the cost incurred for each user in group $g \in G_e$, which we denote by $c_{e,k,t}^g(\cdot; \G^D(\boldtau, \boldalpha) ): \Y(\G^D(\boldtau, \boldalpha)) \ra \R_{\geq 0}$, is given below as a function of user flow $\boldy$:
{
\setlength{\abovedisplayskip}{4pt}
\setlength{\belowdisplayskip}{4pt}
\begin{align} \label{Eqn: Costs, DBCP}
    &c_{e,k,t}^g(y; \G^D(\boldtau, \boldalpha)) := \begin{cases}
        v_t^g \ell_e(x_{e,k,t}) + \tau_{e,t}, \hspace{1cm} &g \in G^I, k = 1, \\
        v_t^g \ell_e(x_{e,k,t}) + (1 - \alpha_{e,t}) \tau_{e,t}, \hspace{1cm} &g \in G^E, k = 1, \\
        v_t^g \ell_e(x_{e,k,t}), &\text{else}
    \end{cases}.
\end{align}
}

In words, at each period $t \in [T]$, on each edge $e \in \edges$, each ineligible user in group $g \in G^I$ who uses the express lane ($k=1$) incurs the full toll cost $\tau_{e,t}$, and the cost of travel latency, $v_t^g \ell_e(x_{e,1,t})$. Meanwhile, each eligible user in group $g \in G^E$ who uses the express lane incurs the \textit{discounted} toll cost $(1 - \alpha_{e,t}) \tau_{e,t}$, and the same cost of travel latency,
i.e., $v_t^g \ell_e(x_{e,1,t})$. Since tolls are not levied on GP lanes ($k = 2$), eligible and ineligible users alike who use the GP lane on edge $e \in \edges$ at period $t \in [T]$ incur only the cost of travel latency, $v_t^g \ell_e(x_{e,2,t})$. Under DBCP policies, constraints in \eqref{Eqn: Flow Constraint Set, DBCP} decouple across periods $t \in [T]$ for \textit{both eligible and ineligible users}, and thus all users' travel decisions decouple across periods.

We now define the Nash equilibrium flows corresponding to a $(\boldtau, \boldalpha)$-DBCP policy, which we term $(\boldtau, \boldalpha)$-DBCP equilibrium flows. As described in Definition \ref{Def: DBCP Equilibria} below, given any $(\boldtau, \boldalpha)$-DBCP policy, the DBCP equilibrium concept captures the steady-state user flow pattern at which no user can strictly lower their cost via a unilateral deviation.

\begin{definition}(\textbf{DBCP Equilibrium}) \label{Def: DBCP Equilibria}
Given tolls $\boldtau \in \R_{\geq 0}^{|\edges|T}$ and discounts $\boldalpha \in [0, 1]^{|\edges|T}$, we call $\boldy^\star$ a ($\boldtau, \boldalpha$)-DBCP equilibrium if $\boldy^\star \in \Y(\G^D(\boldtau, \boldalpha))$ and for each group $g \in \G$:
{
\setlength{\abovedisplayskip}{4pt}
\setlength{\belowdisplayskip}{4pt}
\begin{align} \label{Eqn: DBCP Equilibrium, Def}
    &\sum_{e \in \edges} \sum_{k=1}^2 \sum_{t=1}^T (y_{e,k,t}^g - y_{e,k,t}^{g \star}) c_{e,k,t}^g \big(y^\star; \G^D(\boldtau, \boldalpha) \big) \geq 0, \hspace{1cm} \forall \ \boldy \in \Y(\G^D(\boldtau, \boldalpha)).
\end{align}
}
\hspace{-1mm}We denote the set of all $(\boldtau, \boldalpha)$-DBCP equilibrium flows by $\Y^\eq(\G^D(\boldtau, \boldalpha))$.
\end{definition}

If $\boldalpha = 0$, all users receive zero discount, and thus
the DBCP equilibrium concept reduces to the concept of the Nash equilibrium flow for non-atomic congestion games in which participating users have heterogeneous VoTs \citep{Cole2003PricingNetworkEdgesforHeterogeneousSelfishUsers}.

\subsection{Credit-Based Congestion Pricing (CBCP)}
\label{subsec: CBCP, Formulation}

A credit-based congestion pricing (CBCP) policy is specified by a tuple $(\boldtau, \boldB)$, where $\boldtau := (\tau_{e,t}: e \in \edges, t \in [T]) \in \R_{\geq 0}^{|\edges|T}$ denotes the toll levied on the express lanes in the network $\network$ across edges and periods, while $\boldB := (B^g: g \in G^E) \in \R_{\geq 0}^{|G^E|}$ denotes the total travel credit (i.e., budget) provided to each eligible group for use throughout the $T$ periods. At each period $t$, each ineligible user who accesses the express lane on an edge $e \in \edges$ must pay the toll $\tau_{e,t}$ out-of-pocket, while eligible users expend their available budget to access the express lane. 
\revision{
Consistent with real-world CBCP policies, in which eligible groups comprise low-income users who are generally unable to pay tolls out-of-pocket, we assume, as in prior work (e.g., \citet{Jalota2022CreditBasedCongestionPricing}), that eligible users must pay for express lane access solely via travel credits.
This assumption is without loss of generality when eligible users' VoTs are sufficiently low, since
such
users would not pay express lane tolls out-of-pocket
when traversing consecutive highway segments,
even if they are not explicitly prohibited from doing so. For details, see App. \ref{subsec: Low-VoT Users Will Not Voluntarily Pay Express Lane Tolls Out-of-Pocket in Chain Networks}.
}


The $(\boldtau, \boldB)$-CBCP policy induces a congestion game, denoted $\G^C(\boldtau, \boldB)$ below, with a feasible flow set $\Y(\G^C(\boldtau, \boldB))$, encoding flow continuity, non-negativity, and budget constraints:
{
\setlength{\abovedisplayskip}{4pt}
\setlength{\belowdisplayskip}{4pt}
\begin{align} \label{Eqn: Flow Constraint Set, CBCP}
    \Y(\G^C(\boldtau, \boldB)) &:= \Bigg\{ \boldy \in \R_{\geq 0}^{2nT}:
    \boldy \in \Y_\adm, \
    \sum_{t \in [T]} \sum_{e \in \edges} y_{e,1,t}^g \tau_{e,t} \leq B^g, \ \forall \ g \in G^E. \Bigg\}
\end{align}
}

Under the $(\boldtau, \boldB)$-CBCP policy, on each edge $e \in \edges$ and lane $k \in [2]$, at each period $t \in [T]$, the cost incurred for each user in group $g \in G_e$, which we denote by $c_{e,k,t}^g(\cdot; \G^C(\boldtau, \boldB) ): \Y(\G^C(\boldtau, \boldB)) \ra \R_{\geq 0}$, is defined below as a function of the user flow $\boldy \in \Y(\G^C(\boldtau, \boldB))$:
{
\setlength{\abovedisplayskip}{4pt}
\setlength{\belowdisplayskip}{1pt}
\begin{align} \label{Eqn: Costs, CBCP}
    &c_{e,k,t}^g(y; \G^C(\boldtau, \boldB)) := \begin{cases}
        v_t^g \ell_e(x_{e,k,t}) + \tau_{e,t}, \quad &g \in G^I, k = 1, \\
        v_t^g \ell_e(x_{e,k,t}), &\text{else}
    \end{cases}, \forall \ e \in \edges, k \in [2], t \in [T], g \in G_e.
\end{align}
}

In words, \eqref{Eqn: Costs, CBCP} asserts that at each period $t \in [T]$ and edge $e \in \edges$, each ineligible user from group $g \in G^I$ in the express lane ($k=1$) incurs the full toll $\tau_{e,t}$ 
and
the cost of travel latency, $v_t^g \ell_e(x_{e,1,t})$. Meanwhile, each eligible user from group $g \in G^E$ in the express lane pays the toll using their available budget, thus incurring only the cost of travel latency, $v_t^g \ell_e(x_{e,1,t})$. 
All users
who use the toll-free GP lane $(k=2)$ in each edge $e \in \edges$ and period $t \in [T]$ incur only the cost of travel latency, $v_t^g \ell_e(x_{e,2,t})$. We note that, since the constraints \eqref{Eqn: Flow Constraint Set, CBCP} decouple across periods $t \in [T]$ for ineligible users, their travel decisions can be made independently across periods.

With the above definitions of the feasible flow set and costs induced by a $(\boldtau, \boldB)$-CBCP policy, we proceed to define the Nash equilibrium flows under the $(\boldtau, \boldB)$-CBCP policy, which we term $(\boldtau, \boldB)$-CBCP equilibrium flows. As described in Definition \ref{Def: CBCP Equilibria} below, given any $(\boldtau, \boldB)$-CBCP policy, the CBCP equilibrium concept captures the steady-state user flow pattern at which no user can strictly lower their cost via a unilateral deviation.

\begin{definition}(\textbf{CBCP Equilibrium}) \label{Def: CBCP Equilibria}
Given tolls $\boldtau \in \R_{\geq 0}^{|\edges|T}$ and budgets $\boldB \in \R_{\geq 0}^{|G^E|}$, we call $\boldy^\star$ a ($\boldtau, \boldB$)-CBCP equilibrium  if $\boldy^\star \in \Y(\G^C(\boldtau, \boldB))$ and for each group $g \in \G$:
{
\setlength{\abovedisplayskip}{4pt}
\setlength{\belowdisplayskip}{4pt}
\begin{align} \label{Eqn: CBCP Equilibrium, Def}
    &\sum_{e \in \edges} \sum_{k=1}^2 \sum_{t=1}^T (y_{e,k,t}^g - y_{e,k,t}^{g \star}) c_{e,k,t}^g (y^\star; \G^C(\boldtau, \boldB)) \geq 0, \hspace{1cm} \forall \ \boldy \in \Y(\G^C(\boldtau, \boldB)).
\end{align}
}
We denote the set of all $(\boldtau, \boldB)$-CBCP equilibrium flows by $\Y^\eq(\G^C(\boldtau, \boldB))$.
\end{definition}

\begin{remark}
\revision{
While the above discussion defined CBCP policies, budget constraints, and equilibria over flows on network lanes, it is possible to formulate an alternative CBCP model over route-based flows (see App. \ref{subsec: Route Flow Formulation for Credit-Based Congestion Pricing (CBCP)}), in which budget constraints impose route restrictions on eligible groups, and are thus enforced on the level of each individual eligible user.
As described in App. \ref{subsec: Route Flow Formulation for Credit-Based Congestion Pricing (CBCP)}, our main theoretical results (Lemma \ref{Lemma: Main, Explicit Construction of DBCP Policy} and Thm. \ref{Thm: Main}) also hold under this alternate route-based CBCP model.
However, we present our main theoretical and empirical results over lane-based CBCP models, since, as described in App. \ref{subsec: Route Flow Formulation for Credit-Based Congestion Pricing (CBCP)}, it is in general intractable to compute CBCP equilibrium flows under route restrictions induced by budget constraints.
}


\end{remark}




\section{CBCP and DBCP Equilibrium Computation}
\label{sec: CBCP and DBCP Equilibrium Computation}

In this section, we study the properties of DBCP and CBCP equilibrium flows to analyze the steady-state effect of deploying DBCP and CBCP policies. 
The resulting characterization
serves a prominent role in Sec. \ref{sec: CBCP and DBCP Comparison Study for the Chain Network Setting}, wherein, as our main contribution, we compare the steady-state societal cost obtained under CBCP and DBCP policies on highways with tolled express lanes.

In Sec. \ref{subsec: Existence of DBCP and CBCP Equilibria}, we prove that each CBCP (respectively, DBCP) policy induces at least one corresponding CBCP (respectively, DBCP) equilibrium flow (Prop. \ref{Prop: Existence of DBCP Equilibria}-\ref{Prop: Existence of CBCP Equilibria}). In Sec. \ref{subsec: Convex Programs for Computing DBCP Equilibrium Flows}, we
present a convex program for computing DBCP equilibrium flows
(Prop. \ref{Prop: Convex Program, DBCP}). In Sec. \ref{subsec: Convex Programs for Computing CBCP Equilibrium Flows}, we prove that \textit{when eligible users' VoTs are time-invariant}, 
CBCP equilibrium flows can likewise be formulated as the minimizer set of a convex program (Prop. \ref{Prop: Convex Program, CBCP}). Moreover, when eligible users' VoTs vary across periods, the minimizer set of the same convex program \textit{approximates} the set of equilibrium flows corresponding to a given $(\boldtau, \boldB)$-CBCP policy, with the approximation error scaling linearly in the maximum variation in eligible users' VoTs across periods (Prop. \ref{Prop: CBCP Equilibria, Sensitivity}).

The convex programs we formulate in Sec. \ref{subsec: Convex Programs for Computing DBCP Equilibrium Flows} and \ref{subsec: Convex Programs for Computing CBCP Equilibrium Flows} to characterize DBCP and CBCP equilibria, respectively, differ from the convex programs used to compute CBCP equilibria in  
\citet{Jalota2022CreditBasedCongestionPricing}.
As described in Sec. \ref{sec: Related Work}, \citet{Jalota2022CreditBasedCongestionPricing} considers only the setting of single-edge networks.
In contrast, the convex programs presented below characterize CBCP and DBCP equilibrium flows over acyclic traffic networks with arbitrary structure, shared by groups traveling between different pairs of origin and destination nodes.

\subsection{Existence of DBCP and CBCP Equilibria}
\label{subsec: Existence of DBCP and CBCP Equilibria}

We begin by asserting the existence of CBCP and DBCP equilibria. 

\begin{proposition}(\textbf{Existence of DBCP Equilibria}) 
\label{Prop: Existence of DBCP Equilibria}
Given $\boldtau \in\R_{\geq 0}^{|\edges|T}$ and 
$\boldalpha \in [0, 1]^{|\edges|T}$,
the $(\boldtau, \boldalpha)$-DBCP policy admits a non-empty DBCP equilibrium flow set, i.e., $\Y^{\eq}(\G^D(\boldtau, \boldalpha)) \ne \emptyset$.
\end{proposition}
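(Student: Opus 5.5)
We will prove the statement by casting the $(\boldtau, \boldalpha)$-DBCP equilibrium condition \eqref{Eqn: DBCP Equilibrium, Def} as a variational inequality over a nonempty compact convex set with a continuous operator, and then appealing to the standard existence theorem for such variational inequalities (Hartman--Stampacchia, equivalently Brouwer's fixed point theorem applied to the projected map $\boldy \mapsto \proj_{\Y_\adm}(\boldy - F(\boldy))$).

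\textbf{Feasible set.} By \eqref{Eqn: Flow Constraint Set, DBCP}, the feasible set is $\Y_\adm$ from \eqref{Eqn: Admissible Flow Set}. It is an intersection of finitely many linear equalities with the nonnegative orthant, hence convex and closed. It is nonempty because each group $g$ has at least one route in $R(p^g)$: routing all demand $d^g$ along one such route on, say, the GP lane, at every period, satisfies flow continuity. It is bounded because the network is acyclic. Flow conservation, with a single source of mass $d^g$ per group and period and no cycles, forces every component to satisfy $y_{e,k,t}^g \le d^g$. One way to see this is to decompose the flow of each group into path flows, which is possible because the network has no cycles; the path flows then sum to $d^g$.

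\textbf{Operator and reduction.} Define $F(\boldy) := \big(c_{e,k,t}^g(\boldy; \G^D(\boldtau, \boldalpha))\big)_{e,k,t,g}$. Each component is an affine function of $\ell_e(x_{e,k,t})$, and $\ell_e$ is convex on $\R$, hence continuous. Since $x_{e,k,t}$ is linear in $\boldy$, $F$ is continuous. Hartman--Stampacchia then yields $\boldy^\star \in \Y_\adm$ with $\langle \boldy - \boldy^\star, F(\boldy^\star)\rangle \ge 0$ for all $\boldy \in \Y_\adm$.

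It remains to pass from this aggregate inequality to the group-wise condition \eqref{Eqn: DBCP Equilibrium, Def}. The key observation is that $\Y_\adm$ is a Cartesian product over groups $g$ of per-group sets, since the constraints in \eqref{Eqn: Admissible Flow Set} involve only $y^g$ for a fixed $g$. Fix a group $g$ and any $\boldy \in \Y_\adm$. Form $\tilde \boldy$ by replacing only the $g$-block of $\boldy^\star$ with that of $\boldy$; then $\tilde\boldy \in \Y_\adm$. Plugging $\tilde\boldy$ into the aggregate variational inequality leaves exactly the $g$-th sum in \eqref{Eqn: DBCP Equilibrium, Def}, which is therefore nonnegative. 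Hence $\boldy^\star \in \Y^\eq(\G^D(\boldtau, \boldalpha))$.

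\textbf{Expected obstacle.} The only step needing care is the boundedness of $\Y_\adm$, which relies on acyclicity. I would argue it through the path decomposition described above or, alternatively, by using a topological order of the nodes to propagate the bound $d^g$ edge by edge. Everything else is routine. An alternative route is a Beckmann-type convex potential. That potential is not directly available here because VoTs differ across groups, although dividing each group's cost by $v_t^g$ would give a heterogeneous-VoT potential. The variational inequality argument avoids this issue entirely.
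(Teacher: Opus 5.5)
Your proposal is correct and follows essentially the same route as the paper. The paper simply notes that the DBCP costs are continuous and that the feasible set $\Y(\G^D(\boldtau, \boldalpha)) = \Y_\adm$ is compact, and then invokes standard variational-inequality existence results; you additionally spell out two details the paper leaves implicit, namely the boundedness of $\Y_\adm$ via acyclicity and the passage from the aggregate variational inequality to the group-wise condition \eqref{Eqn: DBCP Equilibrium, Def} using the product structure of $\Y_\adm$ across groups.
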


\begin{proposition}(\textbf{Existence of CBCP Equilibria}) 
\label{Prop: Existence of CBCP Equilibria}
Given $\boldtau \in \R_{\geq 0}^{|\edges|T}$ and $\boldB \in\R_{\geq 0}^{|G^E|}$, the $(\boldtau, \boldB)$-CBCP policy admits a non-empty CBCP equilibrium flow set, i.e., $\Y^{\eq}(\G^C(\boldtau, \boldB)) \ne \emptyset$.
\end{proposition}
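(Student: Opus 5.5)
We will prove the proposition by recasting the CBCP equilibrium condition as a variational inequality (VI) over a compact convex set. We then invoke the standard existence theorem for VIs with continuous operators on compact convex domains (Hartman--Stampacchia, or equivalently Kakutani/Brouwer applied to the projection map).

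First we verify the properties of the feasible set. $\Y(\G^C(\boldtau, \boldB))$ is the intersection of $\Y_\adm$ with finitely many half-spaces $\sum_{t}\sum_{e} y_{e,1,t}^g \tau_{e,t} \le B^g$. It is therefore a polyhedron, hence closed and convex. It is bounded because flow conservation on the acyclic network, together with nonnegativity and fixed demands, forces $0 \le y_{e,k,t}^g \le d^g$. It is nonempty: take any admissible flow in which every eligible group routes entirely on GP lanes ($k=2$). Since each edge has a GP lane, the lane choice does not affect continuity. This flow spends zero credit and satisfies the budget constraints because $B^g \ge 0$. Importantly, this set is a product over groups, $\Y(\G^C(\boldtau,\boldB)) = \prod_{g \in G} \Y^g$. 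Each group's constraints (continuity for group $g$, and the budget constraint for $g$ if $g \in G^E$) involve only that group's own coordinates $y^g$.

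Next we define the operator $F(\boldy) := (c_{e,k,t}^g(\boldy; \G^C(\boldtau,\boldB)))$, indexed by $e,k,t,g$. Each component is $v_t^g \ell_e(x_{e,k,t})$, plus possibly the constant $\tau_{e,t}$. Since $\ell_e$ is continuous (it is convex on $\R$) and $x$ is linear in $\boldy$, $F$ is continuous. By Hartman--Stampacchia, there exists $\boldy^\star \in \Y(\G^C(\boldtau,\boldB))$ such that
\begin{align*}
\sum_{g}\sum_{e,k,t} (y^g_{e,k,t} - y^{g\star}_{e,k,t})\, c^g_{e,k,t}(\boldy^\star) \ge 0 \quad \text{for all feasible } \boldy .
\end{align*}

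The last step, and the only real subtlety, is passing from this aggregate VI to the per-group conditions \eqref{Eqn: CBCP Equilibrium, Def}. Because the feasible set is a product, we can fix a group $g$ and a feasible $\boldy$. Define $\tilde\boldy$ to agree with $\boldy$ in the $g$-coordinates and with $\boldy^\star$ elsewhere. Then $\tilde\boldy$ is feasible, since the budget constraints do not couple groups. Substituting $\tilde\boldy$ into the aggregate inequality leaves exactly the group-$g$ inequality \eqref{Eqn: CBCP Equilibrium, Def}. Hence $\boldy^\star \in \Y^{\eq}(\G^C(\boldtau,\boldB))$.

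The main obstacle is ensuring that the budget constraints keep the feasible set nonempty and product-structured. This is precisely where $\boldB \ge 0$ and the availability of toll-free GP lanes are used. The DBCP case (Prop. \ref{Prop: Existence of DBCP Equilibria}) follows identically with $\Y_\adm$ in place of the CBCP feasible set.
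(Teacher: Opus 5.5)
Your proof is correct and follows the same route as the paper: the paper observes that the CBCP equilibrium conditions form a variational inequality with continuous costs over a compact constraint set, and then appeals to standard existence results for variational inequalities. Your write-up supplies the details the paper leaves implicit. These are convexity and nonemptiness of the budget-constrained feasible set (via routing eligible users on GP lanes, which tacitly assumes each o-d pair is connected by some route), and the product structure that lets the aggregate Hartman--Stampacchia inequality split into the per-group conditions of the definition.
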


\looseness=-1The proofs of Prop. \ref{Prop: Existence of DBCP Equilibria} and \ref{Prop: Existence of CBCP Equilibria} follow by observing that in the variational inequalities that define DBCP and CBCP equilibria, i.e., \eqref{Eqn: DBCP Equilibrium, Def} and \eqref{Eqn: CBCP Equilibrium, Def}, respectively, the cost functions $c_{e,k,t}^g(\cdot; \Y(\G^D(\boldtau, \boldalpha)))$ and $c_{e,k,t}^g(\cdot; \Y(\G^C(\boldtau, \boldB)))$ are continuous, and the constraint sets $\Y(\G^D(\boldtau, \boldalpha))$ and $\Y(\G^C(\boldtau, \boldB))$ are compact. The existence of DBCP and CBCP equilibria thus follows from standard arguments in the literature on variational inequalities.



The 
proofs of Prop. \ref{Prop: Existence of DBCP Equilibria} and \ref{Prop: Existence of CBCP Equilibria} merely establish the \textit{existence} of CBCP and DBCP equilibria using 
variational inequalities,
without prescribing a tractable method for \textit{computing} CBCP and DBCP equilibria. Indeed, the problem of efficiently solving variational inequalities is known to be challenging in general
\citep{KinderlehrerStampacchia2000AnIntroductiontoVariationalInequalitiesandTheirApplications}.
Thus, in Sec. \ref{subsec: Convex Programs for Computing DBCP Equilibrium Flows}-\ref{subsec: Convex Programs for Computing CBCP Equilibrium Flows},
we instead characterize CBCP and DBCP equilibrium flows as the minimizer sets of suitably formulated convex programs, which allow us to tractably compute or approximate equilibrium flows.


\subsection{Convex Programs for Computing DBCP Equilibrium Flows}
\label{subsec: Convex Programs for Computing DBCP Equilibrium Flows}

In Prop. \ref{Prop: Convex Program, DBCP} below, we introduce a convex program to compute the equilibrium flows associated with a given $(\boldtau, \boldalpha)$-DBCP policy. We note that 
Prop. \ref{Prop: Convex Program, DBCP} is mathematically equivalent to the well-established convex program characterization for the Nash equilibrium flows in a traffic network with heterogeneous tolls \citep{Cole2003PricingNetworkEdgesforHeterogeneousSelfishUsers}.



\begin{proposition} \label{Prop: Convex Program, DBCP}
Given a $(\boldtau, \boldalpha)$-DBCP policy with $\boldtau \in\R_{\geq 0}^{|\edges|T}$ and $\boldalpha \in [0, 1]^{|\edges|T}$, a feasible flow $\textbf{y}^\star \in \Y(\G^D(\boldtau, \boldalpha))$ is a $(\boldtau, \boldalpha)$-DBCP equilibrium if and only if $\textbf{y}^\star \in \Y(\G^D(\boldtau, \boldalpha))$ is a minimizer of the following convex program:
{
\setlength{\abovedisplayskip}{4pt}
\setlength{\belowdisplayskip}{4pt}
\begin{subequations}
\label{Eqn: Convex Program Statement, DBCP}
\begin{align} 
    \min_{\substack{\textbf{y} \in \Y(\G^D(\boldtau, \boldalpha)) \\ \revision{\boldx \in \R^{2|\edges|T}}}} \hspace{2mm} &\sum_{t \in [T]} \sum_{e \in \edges} \left[ \sum_{k=1}^2 \int_0^{x_{e,k,t}} 
    \ell_e(w)
    \hspace{0.5mm} dw  + \sum_{g \in \G_e^I} \frac{y_{e,1,t}^g \tau_{e,t}}{v_t^g} + \sum_{g \in \G_e^E} \frac{y_{e,1,t}^g (1 - \alpha_{e,t}) \tau_{e,t}}{v_t^g} \right], \\
    \revision{\emph{s.t.}} \hspace{5mm} &\revision{x_{e,k,t} = \sum_{g \in G_e} y_{e,k,t}^g, \quad \forall \ e \in \edges, k \in [2], t \in [T]}.
\end{align}
\end{subequations}
}
\end{proposition}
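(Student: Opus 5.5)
We compare the KKT conditions of the convex program \eqref{Eqn: Convex Program Statement, DBCP} with the variational inequality \eqref{Eqn: DBCP Equilibrium, Def}, after dividing each group's inequality by its VoT.

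\textbf{Convexity and constraint qualification.} We first substitute the linear equality constraints defining $\boldx$, so the objective becomes a function $\Phi(\boldy)$ of $\boldy$ alone. Each term $\int_0^{x_{e,k,t}} \ell_e(w)\,dw$ is a convex function of a linear map of $\boldy$, since $\ell_e$ is increasing. The toll terms are linear in $\boldy$. Hence $\Phi$ is convex and continuously differentiable. The feasible set $\Y(\G^D(\boldtau,\boldalpha)) = \Y_\adm$ is a polyhedron, so linear constraint qualification holds and first-order optimality is necessary and sufficient. Thus $\boldy^\star$ minimizes \eqref{Eqn: Convex Program Statement, DBCP} if and only if
\begin{align*}
\nabla \Phi(\boldy^\star)^\top (\boldy - \boldy^\star) \geq 0, \qquad \forall\, \boldy \in \Y_\adm .
\end{align*}

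\textbf{The gradient.} A direct computation gives
\begin{align*}
\partial \Phi / \partial y_{e,k,t}^g = \ell_e(x^\star_{e,k,t}) + \mathbf{1}\{k=1\}\,\tilde\tau^g_{e,t}/v_t^g,
\end{align*}
where $\tilde\tau^g_{e,t}=\tau_{e,t}$ for $g\in G^I$ and $\tilde\tau^g_{e,t}=(1-\alpha_{e,t})\tau_{e,t}$ for $g\in G^E$. By \eqref{Eqn: Costs, DBCP}, this equals $c^g_{e,k,t}(\boldy^\star;\G^D(\boldtau,\boldalpha))/v_t^g$. Here we use $v_t^g>0$, which is implicit in the program's division by $v_t^g$.

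\textbf{Decoupling the variational inequality.} This is the main step. The set $\Y_\adm$ is a Cartesian product over pairs $(g,t)$ of per-group, per-period flow polytopes: the continuity constraints for each $(g,t)$ involve only $\{y^g_{e,k,t}\}_{e,k}$. Hence the aggregate inequality above holds for all $\boldy\in\Y_\adm$ if and only if, for each $(g,t)$,
\begin{align*}
\sum_{e,k} \big(y^g_{e,k,t}-y^{g\star}_{e,k,t}\big)\, c^g_{e,k,t}(\boldy^\star)/v_t^g \geq 0
\end{align*}
for all feasible $(g,t)$-blocks.

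\emph{Sufficiency of the blockwise form:} summing the blockwise inequalities recovers the aggregate one.

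\emph{Necessity:} we vary one block while keeping the others fixed at $\boldy^\star$.

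Within a fixed $(g,t)$ block we may multiply by the positive constant $v_t^g$. This shows the blockwise condition is equivalent to the per-period version of \eqref{Eqn: DBCP Equilibrium, Def} for group $g$.

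\textbf{Matching to Definition~\ref{Def: DBCP Equilibria}.} It remains to check that \eqref{Eqn: DBCP Equilibrium, Def}, which sums over $t$ for each $g$, is equivalent to the per-period conditions. Because constraints decouple across $t$, the same product-structure argument applies: summing over $t$ gives one direction, and varying a single period while fixing the others at $\boldy^\star$ gives the converse. Chaining these equivalences proves the proposition.

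The only real subtlety is the per-group rescaling by $1/v_t^g$. An aggregate VI weighted differently per block is not equivalent in general, and equivalence holds here only because of the product structure of $\Y_\adm$.
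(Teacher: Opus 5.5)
Your proof is correct and takes the route the paper indicates (the paper omits the details): you identify the first-order optimality condition of the convex program with the variational inequality \eqref{Eqn: DBCP Equilibrium, Def}, and your product-structure argument for $\Y_\adm$ over pairs $(g,t)$ is the step that lets the group- and period-dependent rescaling by $1/v_t^g$ be absorbed. Your constraint-qualification remark is harmless but unnecessary, since the variational characterization of minimizers of a differentiable convex function over a convex set holds without any constraint qualification.
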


Concretely, the proof of Prop. \ref{Prop: Convex Program, DBCP}, which we omit for brevity, follows by establishing that the first-order optimality conditions of the convex program \eqref{Eqn: Convex Program Statement, DBCP} describe the same subset of feasible flows as the variational inequality that defines DBCP equilibria, i.e., \eqref{Eqn: DBCP Equilibrium, Def}. 


\subsection{Convex Programs for Computing CBCP Equilibrium Flows}
\label{subsec: Convex Programs for Computing CBCP Equilibrium Flows}

We first demonstrate that if all eligible users have time-invariant VoTs, CBCP equilibria can be efficiently computed by solving a convex program (Prop. \ref{Prop: Convex Program, CBCP}). Then, we apply sensitivity analysis (Prop. \ref{Prop: CBCP Equilibria, Sensitivity}) to show that even when eligible users' VoTs vary across the $T$ periods, CBCP equilibria are still well approximated by minimizers of the convex program presented in Prop. \ref{Prop: Convex Program, CBCP}.

\begin{proposition}(\textbf{Convex Program for CBCP Equilibria}) 
\label{Prop: Convex Program, CBCP}
Given a $(\boldtau, \boldB)$-CBCP policy with $\boldtau \in\R_{\geq 0}^{|\edges|T}$ and $\boldB \in\R_{\geq 0}^{|G^E|}$, if each eligible group has time-invariant VoT, i.e., for each $g \in G^E$ and $t, t' \in [T]$, we have $v_t^g = v_{t'}^g$, then a feasible flow $\textbf{y}^\star \in \Y(\G^C(\boldtau, \boldB))$ is a $(\boldtau, \boldB)$-CBCP equilibrium if and only if 
it
is a minimizer of the following convex program:
{
\setlength{\abovedisplayskip}{4pt}
\setlength{\belowdisplayskip}{4pt}
\begin{subequations}
\label{Eqn: Convex Program Statement, CBCP}
\begin{align} 
    \min_{\substack{\textbf{y} \in \Y(\G^C(\boldtau, \boldB)) \\ \revision{\boldx \in \R^{2|\edges|T}}}} \hspace{2mm} &\sum_{t \in [T]} \sum_{e \in \edges} \left[ \sum_{k=1}^2 \int_0^{x_{e,k,t}} 
    \ell_e(w)
    \hspace{0.5mm} dw + \sum_{g \in \G_e^I} \frac{y_{e,1,t}^g \tau_{e,t}}{v_t^g} \right], \\
    \revision{\emph{s.t.}} \hspace{1cm} &\revision{x_{e,k,t} = \sum_{g \in G_e} y_{e,k,t}^g, \quad \forall \ e \in \edges, k \in [2], t \in [T]}.
\end{align}
\end{subequations}
}
\end{proposition}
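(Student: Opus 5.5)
The argument matches first-order optimality conditions of the convex program \eqref{Eqn: Convex Program Statement, CBCP} with the variational inequality \eqref{Eqn: CBCP Equilibrium, Def}, group by group. First, the program is convex and its feasible set $\Y(\G^C(\boldtau, \boldB))$ is a polytope. The integral terms are convex because each $\ell_e$ is increasing, and the toll terms are linear. Hence $\boldy^\star$ is a minimizer if and only if $\nabla \Phi(\boldy^\star)^\top (\boldy - \boldy^\star) \geq 0$ for all feasible $\boldy$, where $\Phi$ is the objective with $\boldx$ eliminated via the linear constraint. The partial derivatives are
\[
\partial \Phi / \partial y_{e,k,t}^g = \ell_e(x_{e,k,t}) + \textbf{1}\{g \in G^I, k=1\}\, \tau_{e,t}/v_t^g,
\]
so $\partial \Phi/\partial y^g_{e,k,t} = c^g_{e,k,t}(\boldy; \G^C(\boldtau, \boldB))/v^g_t$.

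Second, I use that the feasible set is a Cartesian product across groups. Flow continuity in $\Y_\adm$ is imposed separately for each $g$ and $t$, and each budget constraint involves only a single eligible group $g$. Therefore $\Y(\G^C(\boldtau, \boldB)) = \prod_{g \in G} \Y^g$, where $\Y^g$ is the set of admissible flows of group $g$ across all periods, intersected with that group's budget constraint if $g \in G^E$. By this product structure, the aggregate first-order condition holds if and only if, for each $g$ and every $\boldy^g \in \Y^g$,
\[
\sum_{e,k,t} (y^g_{e,k,t} - y^{g\star}_{e,k,t})\, c^g_{e,k,t}(\boldy^\star)/v^g_t \geq 0 .
\]
(For the forward direction, vary only group $g$'s component; for the converse, sum over $g$.)

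Third, I remove the $1/v_t^g$ weights group by group. For an ineligible group, feasibility and the costs decouple across periods, since $\Y^g$ is itself a product over $t$. So the condition splits into per-period inequalities, and in each one $v_t^g>0$ is a constant that can be multiplied out; reassembling the periods gives \eqref{Eqn: CBCP Equilibrium, Def} for $g$. For an eligible group, the budget couples periods, so this splitting fails. This is exactly where time-invariance is needed: $v^g_t \equiv v^g$ is a single positive scalar, so the weighted inequality is equivalent to the unweighted one \eqref{Eqn: CBCP Equilibrium, Def}.

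The main obstacle is this eligible-group step. The proof must make explicit that the budget prevents per-period decoupling, so a time-varying VoT would genuinely change the variational inequality, as Prop. \ref{Prop: CBCP Equilibria, Sensitivity} later quantifies. The auxiliary variable $\boldx$ needs only a brief remark: it is determined by $\boldy$ through the equality constraint, so minimizing jointly over $(\boldy,\boldx)$ is equivalent to minimizing over $\boldy$ alone.
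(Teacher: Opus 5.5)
Your proposal is correct and takes essentially the route the paper indicates: it matches the first-order optimality condition of the convex program, whose partial derivatives are $c^g_{e,k,t}/v^g_t$, with the CBCP variational inequality, using the group-wise product structure of $\Y(\G^C(\boldtau,\boldB))$. The paper omits these details and defers to Jalota et al.'s Thm.~2; your handling of the $1/v^g_t$ weights supplies them correctly---ineligible groups decouple per period, and for eligible groups, whose budget couples periods, time-invariance leaves a single positive scalar.
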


The proof of Prop. \ref{Prop: Convex Program, CBCP} follows by establishing that the first-order optimality conditions of the convex program \eqref{Eqn: Convex Program Statement, CBCP} describe the same feasible flow set as the variational inequality that defines CBCP equilibria, i.e., \eqref{Eqn: CBCP Equilibrium, Def}. 
The proof of Prop. \ref{Prop: Convex Program, CBCP} closely follows the proof of \citet[Thm. 2]{Jalota2022CreditBasedCongestionPricing},
and has been omitted for brevity.
Budget constraints in CBCP policies introduce dependencies in eligible users' travel decisions across periods $t \in [T]$, which in turn present challenges when establishing convex programs to efficiently compute CBCP equilibrium flows.

While DBCP equilibria can be computed via a convex program even when eligible users' VoTs are time-varying, to our knowledge, \textit{CBCP} equilibria can only be exactly computed via a convex program in the setting where all eligible users have \textit{time-invariant} VoTs. 
In
various real-world traffic management settings, it is reasonable to assume that users' VoTs are fixed across periods. For instance, in the context of modeling highway traffic flows during weekday rush hours, each user is likely to assign similar value to a unit of travel latency across weekdays. 

Moreover, as shown below, we can apply sensitivity analysis to illustrate that minimizers of the convex program \eqref{Eqn: Convex Program Statement, CBCP} approximate CBCP equilibria well even when eligible users' VoTs are time-varying, with the approximation error scaling linearly in the magnitude of the fluctuation of eligible users' VoTs across the $T$ periods.
To facilitate our sensitivity analysis, we first formalize the notion of an approximate CBCP equilibrium flow corresponding to a given CBCP policy that bounds the difference between the minimizer set of \eqref{Eqn: Convex Program Statement, CBCP} and the true equilibrium flow set, in scenarios where eligible users' VoTs vary across periods.

\begin{definition}(\textbf{$\epsilon$-$(\boldtau, \boldB)$-CBCP equilibrium})
\label{Def: epsilon CBCP Equilibria}
Given tolls $\boldtau \in \R_{\geq 0}^{|\edges|T}$ and budgets $\boldB \in \R_{\geq 0}^{|G^E|}$, we call $y^\star$ an $\epsilon$-($\boldtau, \boldB$)-CBCP equilibrium if $y^\star \in \Y(\G^C(\boldtau, \boldB))$ and for each group $g \in \G$:
{
\setlength{\abovedisplayskip}{4pt}
\setlength{\belowdisplayskip}{4pt}
\begin{align} \label{Eqn: epsilon CBCP Equilibrium, Def}
    &\sum_{e \in \edges} \sum_{k=1}^2 \sum_{t=1}^T (y_{e,k,t}^g - y_{e,k,t}^{g \star}) c_{e,k,t}^g (y^\star; \G^C(\boldtau, \boldB)) \geq -\epsilon.
\end{align}
}
\end{definition}

In words, an $\epsilon$-$(\boldtau, \boldB)$-CBCP equilibrium flow is an admissible flow under the $(\boldtau, \boldB)$-CBCP policy at which no group is able to lower their cost by more than $\epsilon$ by selecting routes differently, if all other groups' flows remain fixed. 
The notion of the $\epsilon$-$(\boldtau, \boldB)$-CBCP equilibrium flow is closely related to the well-studied concept of the $\epsilon$-Nash equilibrium in the game theory literature, which reduces to the Nash equilibrium when $\epsilon = 0$ \citep{Cole2003PricingNetworkEdgesforHeterogeneousSelfishUsers, Bubelis1979OnEquilibriainFiniteGames, Daskalakis2009TheComplexityofComputingaNashEquilibrium}.
Similarly, 
when $\epsilon = 0$, the sets of $(\boldtau, \boldB)$-CBCP equilibria and $\epsilon$-$(\boldtau, \boldB)$-CBCP equilibria likewise coincide.

Our sensitivity analysis of the minimizer set of the convex program \eqref{Eqn: Convex Program Statement, CBCP} proceeds as follows. Let $\delta_v$ be defined as the maximum amount of fluctuation in eligible users' VoTs across periods:
{
\setlength{\abovedisplayskip}{4pt}
\setlength{\belowdisplayskip}{4pt}
\begin{align} \label{Eqn: delta v, Def}
    \delta_v := \frac{1}{2} \max_{g \in G^E} \left( \max_{t \in [T]} v_t^g - \min_{t \in [T]} v_t^g \right).
\end{align}
}
\hspace{-1mm}Below, Prop. \ref{Prop: CBCP Equilibria, Sensitivity} states that any solution to the convex program \eqref{Eqn: Convex Program Statement, CBCP} is an $\epsilon$-$(\boldtau, \boldB)$-equilibrium flow, where $\epsilon$ is directly proportional to $\delta_v$, the horizon $T$, and the worst-case maximum total cost that an eligible user could incur during their commute, as given by $\max_{g \in G^E} \sum_{e \in \edges: g \in G^E} \ell(d_e)$.

\begin{proposition}(\textbf{Sensitivity Analysis for CBCP Equilibria}) 
\label{Prop: CBCP Equilibria, Sensitivity}
Given a $(\boldtau, \boldB)$-CBCP policy with $\boldtau \in \R_{\geq 0}^{|\edges|T}$ and $\boldB \in \R_{\geq 0}^{|G^E|}$, if a feasible flow $\bar \boldy \in \Y(\G^C(\boldtau, \boldB))$ is a minimizer of the convex program \eqref{Eqn: Convex Program Statement, CBCP},
then $\bar \boldy \in \Y(\G^C(\boldtau, \boldB))$ is an $\epsilon$-$(\boldtau, \boldB)$-CBCP equilibrium, where:
{
\setlength{\abovedisplayskip}{4pt}
\setlength{\belowdisplayskip}{4pt}
\begin{align} \label{Eqn: epsilon for CBCP equilibrium approximation, Def}
    \epsilon := \delta_v T \cdot \max_{g \in G^E} \Bigg\{ d^g \sum_{e \in \edges: g \in G_e^E} \ell(d_e) \Bigg\}.
\end{align}
}
\end{proposition}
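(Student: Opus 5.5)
The approach is to compare the true CBCP equilibrium inequality \eqref{Eqn: CBCP Equilibrium, Def} with the first-order optimality conditions of \eqref{Eqn: Convex Program Statement, CBCP} after rescaling eligible users' costs by a time-invariant surrogate VoT. For each eligible group $g \in G^E$, we set $\bar v^g := \tfrac12(\max_t v_t^g + \min_t v_t^g)$, so that $|v_t^g - \bar v^g| \le \delta_v$ for all $t$. Consider the auxiliary CBCP game in which each eligible group has constant VoT $\bar v^g$ and ineligible groups are unchanged. The objective of \eqref{Eqn: Convex Program Statement, CBCP} does not involve eligible VoTs at all; it contains only the latency potential and the ineligible toll terms. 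Hence it is the same program for the auxiliary game, and Prop. \ref{Prop: Convex Program, CBCP} applies: the minimizer $\bar\boldy$ is an exact CBCP equilibrium of the auxiliary game.

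\textbf{Ineligible groups.} Their costs coincide in the two games, so \eqref{Eqn: epsilon CBCP Equilibrium, Def} holds for them with slack $0$.

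\textbf{Eligible groups.} Fix $g \in G^E$ and any $\boldy \in \Y(\G^C(\boldtau,\boldB))$. The budget constraint and $\Y_\adm$ are identical in the two games. The auxiliary equilibrium condition gives
\[
\sum_{e,k,t} (y^g_{e,k,t}-\bar y^g_{e,k,t})\,\bar v^g \ell_e(\bar x_{e,k,t}) \ge 0 .
\]
Subtracting this from the true expression leaves the error term
\[
\sum_{e,k,t} (y^g_{e,k,t}-\bar y^g_{e,k,t})\,(v_t^g-\bar v^g)\,\ell_e(\bar x_{e,k,t}),
\]
which must be bounded below by $-\epsilon$.

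\textbf{Bounding the error.} First, we have $|v_t^g - \bar v^g| \le \delta_v$ for every $t$. Second, monotonicity of $\ell_e$ gives $0 \le \ell_e(\bar x_{e,k,t}) \le \ell_e(d_e)$, since lane flow on $e$ never exceeds $d_e$. Third, only edges with $g \in G_e^E$ carry group-$g$ flow.

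The key observation is that for each fixed $e$ and $t$, the quantity $\sum_k |y^g_{e,k,t}-\bar y^g_{e,k,t}|$ must be controlled by $d^g$, not $2d^g$. Write $a_k := y^g_{e,k,t}$ and $b_k := \bar y^g_{e,k,t}$. Since $a_1+a_2 = s$ and $b_1+b_2 = s'$ with $s,s' \le d^g$, I expect a bound of the form
\[
\Big|\sum_k (a_k-b_k)\,c_k\Big| \le d^g \max_k c_k
\]
for weights $c_k = (v_t^g-\bar v^g)\,\ell_e(\bar x_{e,k,t}) \in [-\delta_v \ell_e(d_e),\, \delta_v\ell_e(d_e)]$. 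A cleaner route avoids the difference altogether: bound $\sum_k a_k c_k$ and $\sum_k b_k c_k$ separately.

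\textbf{Main obstacle.} I expect this constant bookkeeping to be the delicate step. The sign of $c_k$ is arbitrary, so crude bounds give $2 d^g \delta_v \ell_e(d_e)$ per edge-period. If the factor of $2$ appears, I would absorb it by re-centering: use the shift invariance of the difference sum under adding a constant to all $c_k$ when $s = s'$ (which holds on non-origin edges within a common route structure), or refine the per-edge argument. Summing over $t \in [T]$ and over edges $e$ with $g \in G_e^E$ then yields a bound of
\[
\delta_v T\, d^g \sum_{e:\, g\in G_e^E} \ell_e(d_e).
\]
Taking the maximum over $g$ gives $\epsilon$ as in \eqref{Eqn: epsilon for CBCP equilibrium approximation, Def}, which completes the argument.
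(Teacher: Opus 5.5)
Your reduction matches the paper's. You use the midrange surrogate $\bar v^g$ and note that \eqref{Eqn: Convex Program Statement, CBCP} contains no eligible VoTs, so $\bar\boldy$ is an exact equilibrium of the auxiliary time-invariant game. Ineligible groups then have zero slack, and what remains is the error term $\sum_{e,k,t}(y^g_{e,k,t}-\bar y^g_{e,k,t})(v^g_t-\bar v^g)\ell_e(\bar x_{e,k,t})$. The gap is the final estimate, which you leave as an expectation, and the obstacle you name there is misdiagnosed. For fixed $(e,t)$ the weights are \emph{not} of arbitrary sign: $c_1$ and $c_2$ are both $(v^g_t-\bar v^g)$ times a latency in $[0,\ell_e(d_e)]$. (Your bound with $\max_k c_k$ instead of $\max_k|c_k|$ is also false as written when $v^g_t<\bar v^g$.) This common factor is the missing idea. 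Even your re-centering fix needs it: after shifting you must bound $|c_1-c_2|$, and with arbitrary signs that is again $2\delta_v\ell_e(d_e)$. What removes the factor $2$ is the factorization $c_1-c_2=(v^g_t-\bar v^g)[\ell_e(\bar x_{e,1,t})-\ell_e(\bar x_{e,2,t})]$ together with $|\ell_e(\bar x_{e,1,t})-\ell_e(\bar x_{e,2,t})|\le\ell_e(d_e)$.

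The paper closes the step exactly this way. It uses $\sum_k y^g_{e,k,t}=\sum_k\bar y^g_{e,k,t}=d^g$ to rewrite the error as $\sum_t(v^g_t-\tilde v^g)\sum_e(y^g_{e,1,t}-\bar y^g_{e,1,t})[\ell_e(\bar x_{e,1,t})-\ell_e(\bar x_{e,2,t})]$. It then bounds the three factors by $\delta_v$, $d^g$ and $\ell_e(d_e)$. That identity holds on every edge $g$ can use, origin edge included, whenever all routes of $g$ traverse it, as in a chain network. So your restriction to ``non-origin edges'' is unnecessary, but the identity does fail on general acyclic networks, which is where the proposition is stated.

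Your ``cleaner route'' avoids needing $s=s'$ altogether once you use the common sign. Both $\sum_k y^g_{e,k,t}\ell_e(\bar x_{e,k,t})$ and $\sum_k\bar y^g_{e,k,t}\ell_e(\bar x_{e,k,t})$ lie in $[0,d^g\ell_e(d_e)]$, because group flow on an edge is at most $d^g$ and $\bar x_{e,k,t}\le d_e$. Their difference is therefore at most $d^g\ell_e(d_e)$ in absolute value. Multiplying by $|v^g_t-\bar v^g|\le\delta_v$ and summing over $t\in[T]$ and edges with $g\in G_e^E$ gives $\delta_v T d^g\sum_{e:\,g\in G_e^E}\ell_e(d_e)\le\epsilon$. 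That completes your proof, and in slightly more generality than the paper's argument.
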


\begin{proof}{Proof Sketch}
Since $\bar \boldy$ is a minimizer of \eqref{Eqn: Convex Program Statement, CBCP}, and since CBCP equilibrium flows do not explicitly depend on eligible users' VoT values as long as their VoTs are time-invariant, Prop. \ref{Prop: Convex Program, CBCP} implies that $\bar \boldy$ is a $(\boldtau, \textbf{B})$-CBCP equilibrium in the scenario in which the VoT of each eligible group $g \in G^E$ is given by the following time-invariant value:
{
\setlength{\abovedisplayskip}{4pt}
\setlength{\belowdisplayskip}{4pt}
\begin{align} \label{Eqn: tilde v g, Def}
    \tilde v^g := \frac{1}{2} \Big( \max_{t \in [T]} v_t^g + \min_{t \in [T]} v_t^g \Big).
\end{align}
}
We then complete the proof of Prop. \ref{Prop: CBCP Equilibria, Sensitivity} by applying the definition of CBCP equilibria (Def. \ref{Def: CBCP Equilibria}) to $\bar\boldy$, to establish that $\bar\boldy$ satisfies \eqref{Eqn: epsilon CBCP Equilibrium, Def}, with $\epsilon$ given by \eqref{Eqn: epsilon for CBCP equilibrium approximation, Def}. \hfill $\square$
\end{proof}

For the full proof of Prop. \ref{Prop: CBCP Equilibria, Sensitivity}, see App. 
\ref{subsec: App, Proof of Prop, CBCP Equilibria, Sensitivity}.
We note that, if the VoT of each eligible group is time-invariant, then $\delta_v = 0$, and Prop. \ref{Prop: CBCP Equilibria, Sensitivity} reduces to Prop. \ref{Prop: Convex Program, CBCP}.


\section{CBCP and DBCP Comparison Study for the Chain Network Setting}
\label{sec: CBCP and DBCP Comparison Study for the Chain Network Setting}


Although both CBCP \citep{CBCP-SanMateo} and DBCP \citep{DBCP-ExpressLaneSTART} have been deployed for years in real-world traffic networks in California, comparisons between the relative merits of CBCP and DBCP policies have yet to be formalized.
Thus, in this section, we compare the 
impact of deploying CBCP and DBCP on traffic systems with tolled express lanes. 

To provide a concrete metric to compare the efficacy of deploying DBCP and CBCP policies, we first define societal costs as the weighted combinations of eligible users' travel costs, ineligible users' travel costs, and (the negative of) the overall toll revenue collected
(Sec. \ref{subsec: Societal Objective}). We then present conditions on 
the traffic network $\network$ and user VoTs that are consistent with real-world equitable congestion pricing (Assumptions \ref{Assum, h: zero sublevel set of h is bounded}-\ref{Assum: Eligible users cannot access express lane without subsidy under DBCP}). We establish that, under Assumptions \ref{Assum, h: zero sublevel set of h is bounded}-\ref{Assum: Eligible users cannot access express lane without subsidy under DBCP}, if the societal cost is defined to prioritize maximizing toll revenues over minimizing eligible users' travel costs, then DBCP policies induce lower or equal societal costs at equilibrium compared to CBCP policies (Sec. \ref{subsec: When DBCP Outperforms CBCP}).
Moreover, with an additional condition on the CBCP policies under consideration (Assumption \ref{Assum: Optimal CBCP Eq Flow admits eligible users}), if the societal cost assigns \textit{strictly} greater importance to toll revenue maximization than the minimization of eligible users' travel costs, then DBCP policies induce a \textit{strictly} lower societal cost at equilibrium compared to CBCP policies
(Cor. \ref{Cor: Main Thm, Strictness}). 
Then,
we present examples showing that, if the societal cost is instead defined to prioritize the minimization of eligible users' travel costs over the maximization of toll revenues, CBCP policies may outperform DBCP policies in minimizing the societal cost at equilibrium (Sec. \ref{subsec: When CBCP Can Outperform DBCP}).
The identification of these conditions is crucial for policy makers who aim to fully realize the potential of DBCP and CBCP for equitable and efficient traffic management.

\subsection{Societal Cost Objective}
\label{subsec: Societal Objective}

We begin by defining \textit{societal cost terms} and \textit{societal welfare constraints} which encode the efficiency and equity objectives of congestion pricing, and are commonly utilized in the literature on redistributive market mechanisms \citep{Kang2022OptimalDesignForRedistributionsAmongEndogenousBuyersandSellers, Dworczak2021RedistributionThroughMarkets, Akbarpour2020RedistributiveAllocationMechanisms}.
In particular, our societal costs and constraints encode travel cost reduction for eligible and ineligible users, toll revenue collection, and express lane access guarantees for eligible users.
We then formalize the comparison of the merits and drawbacks of CBCP and DBCP policies as the comparison between the minimum attainable societal cost at equilibrium under CBCP and DBCP policies, respectively, subject to a given set of societal welfare constraints.

Let $\network = (\nodes, \edges)$ denote a traffic network with the express and GP lane structure defined in Sec. \ref{sec: CBCP and DBCP Policies}, and let $G$
denote a group of users traversing $\network$. Given any $(\boldtau, \boldB)$-CBCP policy deployed over $T$ periods, and any set of feasible user flows $\boldy \in \Y(\G^C(\boldtau, \boldB))$ with associated lane flows $\boldx \in \R^{2|\edges|T}$, we consider the societal cost objective 
$f_{C, \boldlambda}: \R_{\geq 0}^{2nT} \times \R^{2|\edges|T} \times \R^{|\edges|T} \ra \R$, 
given by:
{
\setlength{\abovedisplayskip}{4pt}
\setlength{\belowdisplayskip}{4pt}
\begin{align} \label{Eqn: f C lambda, Def}
    f_{C, \boldlambda}(\boldy, \boldx, \boldtau)
    = \ &\lambda_E \cdot \sum_{t=1}^T \sum_{e \in \edges} \sum_{k=1}^2 \sum_{g \in G_e^E} v_t^g \ell_e(x_{e,k,t}) y_{e,k,t}^g - \lambda_R \cdot \sum_{t=1}^T \sum_{e \in \edges} \sum_{g \in G_e^I} \tau_{e,t} y_{e,1,t}^g \\ \nonumber
    &\hspace{1cm} + \lambda_I \cdot \sum_{t=1}^T \sum_{e \in \edges} \sum_{k=1}^2 \sum_{g \in G_e^I} \big( v_t^g \ell_e(x_{e,k,t}) + \textbf{1}\{k=1\} \tau_{e,t} \big) y_{e,k,t}^g,
\end{align}
}
\hspace{-1mm}where $\boldlambda := (\lambda_E, \lambda_R, \lambda_I)$, and $\lambda_E, \lambda_R, \lambda_I \geq 0$ are weights that encode the degree to which the eligible users' travel cost, total collected toll revenue, and ineligible users' travel cost contribute to the overall societal cost objective, respectively. 

Similarly, given any $(\hat \boldtau, \boldalpha)$-DBCP policy deployed over $T$ periods, and any set of feasible user flows $\boldy \in \Y(\G^D(\hat \boldtau, \boldalpha))$ with associated lane flows $\boldx \in \R^{2|\edges|T}$, we consider the societal cost objective 
$f_{D, \boldlambda}: \R_{\geq 0}^{2nT} \times \R^{2|\edges|T} \times \R^{|\edges|T} \times [0, 1] \ra \R$,
given by:
{
\setlength{\abovedisplayskip}{4pt}
\setlength{\belowdisplayskip}{4pt}
\begin{align} \label{Eqn: f D lambda, Def}
    f_{D, \boldlambda}(\hat \boldy, \hat \boldx, \hat \boldtau, \boldalpha) &= \lambda_E \cdot \sum_{t=1}^T \sum_{e \in \edges} \sum_{k=1}^2 \sum_{g \in G_e^E} \big( v_t^g \ell_e(\hat x_{e,k,t}) + \textbf{1}\{k=1\} (1 - \alpha_{e,t}) \hat \tau_{e,t} \big) \hat y_{e,k,t}^g \\ \nonumber
    &\hspace{1cm} + \lambda_I \cdot \sum_{t=1}^T \sum_{e \in \edges} \sum_{k=1}^2 \sum_{g \in G_e^I} \big( v_t^g \ell_e(\hat x_{e,k,t}) + \textbf{1}\{k=1\} \hat \tau_{e,t} \big) \hat y_{e,k,t}^g \\ \nonumber
    &\hspace{1cm} - \lambda_R \cdot \sum_{t=1}^T \sum_{e \in \edges} \Bigg( \sum_{g \in G_e^E} (1 - \alpha_{e,t}) \hat \tau_{e,t} \hat y_{e,1,t}^g + \sum_{g \in G_e^I} \hat \tau_{e,t} \hat y_{e,1,t}^g \Bigg).
\end{align}
}

We aim to compare the effectiveness of CBCP and DBCP at minimizing their respective societal cost metric while adhering to constraints 
that encode the CBCP and DBCP equilibrium conditions (Def. \ref{Def: DBCP Equilibria} and \ref{Def: CBCP Equilibria}), as well as additional specifications 
on the flows $\boldy$ and tolls $\boldtau$ that the traffic authority may wish to enforce. 
Concretely, let $h: \Y_\adm \times \R^{|\edges|T} \ra \R^{n_c}$ encode $n_c$ constraints on the total eligible user flow on each lane, the user flow of each ineligible group on each lane, and the values of tolls deployed on express lanes across all edges and periods. 

We define the optimal societal cost under CBCP with weights $\boldsymbol{\lambda} := (\lambda_E, \lambda_I, \lambda_R) \in \R_{\geq 0}^3$ by:
{
\setlength{\abovedisplayskip}{3pt}
\setlength{\belowdisplayskip}{1pt}
\begin{subequations} \label{Eqn: f C lambda star, Def}
\begin{align} \label{Eqn: f C lambda star, Def, Objective}
    f_{C, \boldlambda}^\star := 
    \inf_{\boldy, \boldx, \boldtau, \boldB} 
    \hspace{1cm} 
    &f_{C, \boldlambda}(\boldy, \boldx, \boldtau)
    \\ \label{Eqn: f C lambda star, Def, Constraint on non-negative tau, B, y}
    \text{s.t.} \hspace{1cm} &\boldtau \in \R_{\geq 0}^{|\edges|T}, \ \boldB \in \R_{\geq 0}^{|G^E|}, 
    \
    \boldy \in \Y^{\eq}(\G^C(\boldtau, \boldB)), \\ \label{Eqn: f C lambda star, Def, Constraint on x def}
    &x_{e,k,t} = \sum_{g \in G_e} y_{e,k,t}^g, \hspace{5mm} \forall e \in \edges, k \in [2], t \in [T], \\ \label{Eqn: f C lambda star, Def, Constraint of h}
    &h(\boldy, \boldtau) \leq 0.
\end{align}
\end{subequations} 
}
\hspace{-1mm}Above, \eqref{Eqn: f C lambda star, Def, Constraint on non-negative tau, B, y} enforces that the toll and budget vectors, $\boldtau$ and \boldB, are component-wise non-negative, and
that $\boldy$ must be a $(\boldtau, \boldB)$-CBCP equilibrium flow, \eqref{Eqn: f C lambda star, Def, Constraint on x def} encodes the definition of the lane flows $x$, and \eqref{Eqn: f C lambda star, Def, Constraint of h} describes the constraints encoded by the function $h$.
Similarly, we define the optimal societal cost under DBCP policies with weights $\boldsymbol{\lambda} = (\lambda_E, \lambda_I, \lambda_R) \in \R_{\geq 0}^3$ by:
{
\setlength{\abovedisplayskip}{4pt}
\setlength{\belowdisplayskip}{1pt}
\begin{subequations} \label{Eqn: f D lambda star, Def}
\begin{align} \label{Eqn: f D lambda star, Def, Objective}
    f_{D, \boldlambda}^\star := 
    \inf_{\hat \boldy, \hat \boldx, \hat \boldtau, \boldalpha}
    \hspace{1cm} &f_{D, \boldlambda}(\hat \boldy, \hat \boldx, \hat \boldtau, \boldalpha) \\ \label{Eqn: f D lambda star, Def, Constraint on non-negative tau, alpha}
    \text{s.t.} \hspace{1cm} &\hat \boldtau \in \R_{\geq 0}^{|\edges|T}, \ \boldalpha \in [0, 1]^{|\edges|T}, 
    \
    \hat \boldy \in \Y^{\eq}(\G^D(\hat \boldtau, \boldalpha)), \\ \label{Eqn: f D lambda star, Def, Constraint on x def}
    &\hat x_{e,k,t} = \sum_{g \in G_e} \hat y_{e,k,t}^g, \hspace{5mm} \forall e \in \edges, k \in [2], t \in [T], \\ \label{Eqn: f D lambda star, Def, Constraint of h}
    &h(\hat \boldy, \hat \boldtau) \leq 0.
\end{align}
\end{subequations}
}

To facilitate our theoretical analysis, 
we impose mild conditions on the constraint function $h$. 

\begin{assumption} \label{Assum, h: zero sublevel set of h is bounded}
The constraint function $h: \Y_\adm \times \R^{|\edges|T} \ra \R^{n_c}$
is continuous and has a bounded zero-sublevel set, i.e., the set $\{(\boldy, \boldtau) \in \Y_\adm \times \R^{|\edges|T}: h(\boldy, \boldtau) \leq 0 \}$ is bounded.
\end{assumption}


\begin{assumption} \label{Assum, h: Eligible user total express lane flow}
Given any toll $\boldtau \in \R_{\geq 0}^{|\edges|T}$, and any flows $\boldy, \bar \boldy \in \Y_\adm$ satisfying, across all $e \in \edges$, $k \in [2]$, and $t \in [T]$, (1) $\sum_{g \in G_e^E} y_{e,k,t}^g = \sum_{g \in G_e^E} \bar y_{e,k,t}^g$ and (2) $y_{e,k,t}^{g'} = \bar y_{e,k,t}^{g'}$ for all $g' \in G_e^I$, we have $h(\boldy, \boldtau) = h(\bar \boldy, \boldtau)$.
\end{assumption}

In words, we require $h$ to be continuous with a bounded zero-sublevel set
(Assumption \ref{Assum, h: zero sublevel set of h is bounded}). Moreover, we allow $h$ to have arbitrary dependence on ineligible user flows, 
but require $h$ to depend only upon the \textit{total eligible user flow} on each lane, edge, and period, i.e., $\{\sum_{g \in G_e^E} y_{e,k,t}^g: e \in \edges, k \in [2], t \in [T] \}$, and not on the individual eligible user flow components $\{y_{e,k,t}^g: e \in \edges, k \in [2], t \in [T], g \in G_e^E \}$
(Assumption \ref{Assum, h: Eligible user total express lane flow}).
Despite the restrictions imposed by Assumptions \ref{Assum, h: zero sublevel set of h is bounded} and \ref{Assum, h: Eligible user total express lane flow}, the function $h$ can still encode a wide range of constraints that hold practical significance in real-world congestion pricing, 
such as
express lane access guarantees for eligible users (Ex. \ref{Ex: h, Eligible User Express Lane Access Guarantees}), quality of service considerations (Ex. \ref{Ex: h, Quality of Service Guarantees for Express Lane}), and conditions on toll implementation (Ex. \ref{Ex: h, Upper Bounds for Express Lane Tolls}).

\begin{example}(\textbf{Express Lane Access Guarantees for Eligible Users}) 
\label{Ex: h, Eligible User Express Lane Access Guarantees}
To ensure a pre-specified, baseline level $m_y^E > 0$ of express lane access to eligible groups, $h$ can encode constraints of 
the form $\sum_{g \in G_e^E} y_{e,1,t}^g \geq m_y^E$ on each edge $e \in \edges$.
Enforcing $\sum_{g \in G_e^E} y_{e,1,t}^g \geq m_y^E$
prevents eligible users from being priced out of express lane access, 
an inequity concern at the source of much criticism directed towards existing congestion pricing schemes.
\end{example}

\begin{example}(\textbf{Quality of Service Guarantees for the Express Lane}) 
\label{Ex: h, Quality of Service Guarantees for Express Lane}
To uphold the service quality of the express lane, $h$ can enforce that express lane travel is strictly faster than travel on the GP lane by a pre-designed margin $m_\ell > 0$. Specifically, in our model, travel latencies depend only upon the lane flows $x_{e,k,t}$, which in turn only depend only on the \textit{total} eligible and ineligible user flows on each lane.
Thus, under Assumption \ref{Assum, h: Eligible user total express lane flow}, $h$ can enforce constraints of the form $\ell_e(x_{e,1,t}) \leq \ell_e(x_{e,2,t}) - m_\ell$, which
guarantees that express lane travel is reliably faster than travel on the GP lane (by a fixed margin $m_\ell$), a
key efficiency objective of congestion pricing.
\end{example}

\begin{example}(\textbf{Upper Bounds for Express Lane Tolls}) \label{Ex: h, Upper Bounds for Express Lane Tolls}
Since the express lane tolls $\boldtau$ are also inputs to $h$, the system designer can design $h$ to 
restrict tolls with a pre-specified range at each edge and period. Concretely, given upper bounds $M_{\boldtau} = ( (M_{\boldtau})_{e,t} \geq 0: e \in \edges, t \in [T]) \in \R_{\geq 0}^{|\edges|T}$ for toll values across edges and periods, $h$ can be designed to enforce constraints such as $\tau_{e,t} \leq (M_{\boldtau})_{e,t}, \forall e \in \edges, t \in [T]$.
In particular, by setting $(M_{\boldtau})_{e,t} = 0 \ \forall t \in [T]$ for all edges $e$ in a strict subset of edges $\edges_{\boldtau} \subset E$, 
the system designer could restrict 
express lane tolls to be levied only on the edges in $\edges \backslash \edges_{\boldtau}$, effectively instituting a \textit{second-best} tolling mechanism, i.e., a tolling mechanism in which a pre-determined subset of lanes must remain toll-free.
\end{example}

We let $S_{C, \boldlambda}$ and $S_{D, \boldlambda}$ denote the feasible sets of the optimization programs \eqref{Eqn: f C lambda star, Def} and \eqref{Eqn: f D lambda star, Def}
respectively, and let $S_{C, \boldlambda}^\star$ and $S_{D, \boldlambda}^\star$ denote the sets of minimizers of \eqref{Eqn: f C lambda star, Def} and \eqref{Eqn: f D lambda star, Def}, respectively. 

We now establish that under Assumption \ref{Assum, h: zero sublevel set of h is bounded}, \eqref{Eqn: f C lambda star, Def} and \eqref{Eqn: f D lambda star, Def} yield non-empty minimizer sets whenever their respective feasible sets are non-empty (Prop. \ref{Prop: CBCP Minimizer Set is Non-empty} and \ref{Prop: DBCP Minimizer Set is Non-empty}). The existence of minimizers for \eqref{Eqn: f C lambda star, Def} and \eqref{Eqn: f D lambda star, Def} will be invoked when we establish conditions under which DBCP policies are strictly more effective than CBCP policies at reducing societal cost (Cor. \ref{Cor: Main Thm, Strictness}).

\begin{proposition} \label{Prop: CBCP Minimizer Set is Non-empty}
Under Assumption \ref{Assum, h: zero sublevel set of h is bounded}, if
$S_{C, \boldlambda} \ne \emptyset$, then $S_{C, \boldlambda}^\star \ne \emptyset$, i.e.,
if the optimization program \eqref{Eqn: f C lambda star, Def} is feasible,
then \eqref{Eqn: f C lambda star, Def} admits at least one minimizer $(\boldy^\star, \boldx^\star, \boldtau^\star, \boldB^\star)$.
\end{proposition}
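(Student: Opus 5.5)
The plan is a Weierstrass-type argument. I will show that the infimum in \eqref{Eqn: f C lambda star, Def} can be taken over a compact set on which $f_{C, \boldlambda}$ is continuous.

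\emph{Continuity of the objective.} $f_{C, \boldlambda}$ is continuous in $(\boldy, \boldx, \boldtau)$, since each $\ell_e$ is continuous (indeed convex) and the remaining terms are bilinear. It does not depend on $\boldB$ at all.

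\emph{Boundedness.} The flow $\boldy$ lies in $\Y_\adm$, so $0 \le y_{e,k,t}^g \le d^g$, and $\boldx$ is then determined by \eqref{Eqn: f C lambda star, Def, Constraint on x def}. By Assumption~\ref{Assum, h: zero sublevel set of h is bounded}, the pair $(\boldy, \boldtau)$ ranges over a bounded set, say $\norm{\boldtau}_\infty \le M$. The budgets $\boldB$ are a priori unbounded, so I first reduce to bounded budgets. For every $\boldy \in \Y_\adm$,
\[
\sum_{t}\sum_e y_{e,1,t}^g \tau_{e,t} \le d^g \sum_t \sum_e \tau_{e,t} =: \bar B^g(\boldtau).
\]
Hence replacing $B^g$ by $\min\{B^g, \bar B^g(\boldtau)\}$ leaves $\Y(\G^C(\boldtau, \boldB))$ unchanged, and so leaves $\Y^\eq(\G^C(\boldtau, \boldB))$ and the objective value unchanged. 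Thus we may restrict to $0 \le B^g \le d^g |\edges| T M$ without changing $f_{C,\boldlambda}^\star$, and the restricted feasible set is bounded.

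\emph{Closedness.} Take a feasible sequence $(\boldy_n, \boldx_n, \boldtau_n, \boldB_n)$ whose objective values tend to $f_{C,\boldlambda}^\star$. By compactness, pass to a subsequence converging to $(\boldy, \boldx, \boldtau, \boldB)$. The constraints $h \le 0$, \eqref{Eqn: f C lambda star, Def, Constraint on x def}, nonnegativity, and $\boldy \in \Y_\adm$ pass to the limit by continuity and closedness. The budget inequality also passes to the limit, since it is continuous in $(\boldy, \boldtau, \boldB)$. It remains to show that $\boldy \in \Y^\eq(\G^C(\boldtau, \boldB))$.

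Fix a group $g$ and a deviation $\boldy' \in \Y(\G^C(\boldtau, \boldB))$. I want deviations $\boldy'_n \in \Y(\G^C(\boldtau_n, \boldB_n))$ with $\boldy'_n \to \boldy'$. The VI \eqref{Eqn: CBCP Equilibrium, Def} holds at each $n$ with the deviation $\boldy'_n$; the costs $c^g(\boldy_n; \G^C(\boldtau_n,\boldB_n))$ converge to $c^g(\boldy; \G^C(\boldtau,\boldB))$, so the VI passes to the limit. This is the lower semicontinuity of the feasible-set map $(\boldtau, \boldB) \mapsto \Y(\G^C(\boldtau, \boldB))$, and \textbf{I expect it to be the main obstacle}. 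Only eligible groups' budget constraints change with $n$, so only $g \in G^E$ needs care; ineligible groups' feasible sets are just $\Y_\adm$.

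The first construction I would try is a lane swap. Shifting eligible express flow on an edge $e$ to the GP lane of the same edge preserves flow continuity in \eqref{Eqn: Admissible Flow Set}. Write $s_n := \sum_{t,e} y'^{g}_{e,1,t} \tau_{n,e,t}$ for the spending of $\boldy'$ under $\boldtau_n$, so $s_n \to s \le B^g$. If $s_n > B_n^g$, move a fraction $\theta_n := (s_n - B_n^g)/s_n$ of the eligible express flow to the GP lane; this makes $\boldy'_n$ budget-feasible.

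\begin{itemize}
\item If $B^g > 0$, then $\theta_n \to 0$, because $s_n \to s \le B^g$ and $B_n^g \to B^g > 0$. Hence $\boldy'_n \to \boldy'$.
\item If $B^g = 0$, then $s = 0$. In this case I would only swap flow on those edge-periods with $\tau_{n,e,t} > 0$, where the amount that must be removed is at most $s_n / \tau_{n,e,t}$.
\end{itemize}

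The second case can fail: if $\tau_{e,t} = 0$ but $\tau_{n,e,t} \downarrow 0$ while $B_n^g = 0$, the whole express flow on that edge must be removed, and $\boldy'_n \not\to \boldy'$. If this degenerate case genuinely arises, I would handle it by modifying the minimizing sequence rather than the deviation. Since $f_{C,\boldlambda}$ is independent of $\boldB$ and continuous in $\boldtau$, I would either perturb the $\boldB_n$ upward by a vanishing amount, or set the vanishing tolls to exactly zero. Either change keeps the objective values converging to $f_{C,\boldlambda}^\star$. The new sequence would then satisfy $B^g > 0$ or $\tau_{n,e,t} = \tau_{e,t}$ on the problematic edges, so the limit equilibrium argument above applies. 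Establishing that such a modification preserves the equilibrium property along the sequence is the delicate point, and that is where I expect the real work of the proof to lie.
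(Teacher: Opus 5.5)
\textbf{Overall.} Your route is the same as the paper's sketch: bound $\boldtau$ through $h$, bound $\boldB$, then use compactness of the equilibrium graph. Your truncation $\min\{B^g,\bar B^g(\boldtau)\}$ is a clean way to bound $\boldB$, and your lower-semicontinuity argument for $B^g>0$ is fine. The paper simply asserts that $\{(\boldy,\boldx,\boldtau,\boldB):\boldy\in\Y^{\eq}(\G^C(\boldtau,\boldB))\}$ is compact once $(\boldtau,\boldB)$ is bounded. The step you flagged is exactly where that assertion breaks, and your repair does not close it.

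\textbf{Why the repair fails.} A vanishing upward perturbation of $B^g_n$ still has limit $B^g=0$, so it does not remove the degenerate case. More fundamentally, changing $\boldB_n$ or zeroing a toll changes the game $\G^C(\boldtau_n,\boldB_n)$, so $\boldy_n$ is no longer an equilibrium. The re-equilibrated flow can have a very different objective value and may violate $h\le 0$. That $f_{C,\boldlambda}$ does not depend on $\boldB$ does not help, because $\boldB$ enters through the equilibrium constraint.

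\textbf{The gap cannot be closed.} Under Assumption \ref{Assum, h: zero sublevel set of h is bounded} alone the statement is false. Take one edge $e$, $T=1$, an eligible group $E$ and an ineligible group $I$ with $0<d^I<d^E$. Set $\lambda_E=\lambda_R=0$, $\lambda_I=1$, and $h(\boldy,\tau)=\tau^2-1$. For $\tau_n=1/n$ and $B_n=0$, the budget forces $y^E_{e,1,1}=0$. Since $v^I\ell_e(d^I)+1/n<v^I\ell_e(d^E)$ for large $n$, the flow with all of $I$ on the express lane and all of $E$ on the GP lane is an equilibrium. Its cost $d^I(v^I\ell_e(d^I)+1/n)$ tends to $d^I v^I\ell_e(d^I)$.

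This value is never attained. At any equilibrium, $f_{C,\boldlambda}=d^I\min\{v^I\ell_e(x_{e,1,1})+\tau,\ v^I\ell_e(x_{e,2,1})\}$. Suppose this is at most $d^I v^I\ell_e(d^I)$; there are two cases.

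If the GP term is at most $v^I\ell_e(d^I)$, then $x_{e,2,1}\le d^I$, so $x_{e,1,1}\ge d^E>x_{e,2,1}$. Since $x_{e,1,1}>d^I$, some eligible users are on the slower express lane, and moving to GP is budget-feasible. That contradicts equilibrium.

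Otherwise all of $I$ strictly prefers and uses the express lane. This forces $\tau=0$ and $x_{e,1,1}=d^I<x_{e,2,1}$. With $\tau=0$ the budget constraint is vacuous, so eligible users would move to the express lane, again a contradiction.

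Hence $S_{C,\boldlambda}\ne\emptyset$ but $S^\star_{C,\boldlambda}=\emptyset$. The minimizing sequence converges to exactly your degenerate limit: tolls $\downarrow 0$ with $B^g=0$. Zeroing the toll, one of your proposed fixes, moves the equilibrium to $x_{e,1,1}=x_{e,2,1}=(d^E+d^I)/2$, which has a strictly larger cost.

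\textbf{What an extra hypothesis buys.} A correct statement needs more than Assumption \ref{Assum, h: zero sublevel set of h is bounded}. For instance, take Assumptions \ref{Assum: Chain Network}--\ref{Assum: Eligible users cannot access express lane without subsidy under DBCP}. The latter forces $d_e^I\ge d_e/2$ on every edge, which the example above violates. One can then check that every edge-period whose toll vanishes in the limit has $x_{e,1,t}=x_{e,2,t}$ in the limit, so eligible users are indifferent there. Use deviations that copy $\boldy_n$ on those edge-periods, and your scaling argument on the rest. With these, the variational inequality passes to the limit. That suffices for the paper's later use of existence in Cor.~\ref{Cor: Main Thm, Strictness}.
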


\begin{proof}{Proof Sketch}
Since $f_{C, \boldlambda}$ is continuous, to prove that \eqref{Eqn: f C lambda star, Def} admits at least one minimizer $(\boldy^\star, \boldx^\star, \boldtau^\star, \boldB^\star)$, it suffices to prove that the constraint set $S_{C, \boldlambda}$, as given by \eqref{Eqn: f C lambda star, Def, Constraint on non-negative tau, B, y}-\eqref{Eqn: f C lambda star, Def, Constraint of h}, can be restricted to a compact set without affecting the minimum societal objective encoded by \eqref{Eqn: f C lambda star, Def}. 
We first prove that, given any $\boldlambda$,
we may restrict our attention to bounded tolls $\boldtau$ and budgets $\boldB$, since increasing tolls or budgets past a certain range can never strictly lower the societal cost.
We then prove that, when $\boldtau$ and $\boldB$ are bounded by compact sets, the set $\{(\boldy, \boldx, \boldtau, \boldB): \boldy \in \Y^{\eq}(\G^C(\boldtau, \boldB))\}$ is likewise compact, and thus the constraint set \eqref{Eqn: f C lambda star, Def, Constraint on non-negative tau, B, y} can be replaced by a compact set without affecting the minimum societal cost encoded by \eqref{Eqn: f C lambda star, Def}. Finally, by Assumption \ref{Assum, h: zero sublevel set of h is bounded}, the zero sub-level set of the continuous map $h$ is bounded, hence compact, so the constraint sets 
\eqref{Eqn: f C lambda star, Def, Constraint on x def} and \eqref{Eqn: f C lambda star, Def, Constraint of h} are compact. The proof is then complete.
\hfill $\square$
\end{proof}

\begin{proposition} \label{Prop: DBCP Minimizer Set is Non-empty}
Under Assumption \ref{Assum, h: zero sublevel set of h is bounded}, if
$S_{D, \boldlambda} \ne \emptyset$, then $S_{D, \boldlambda}^\star \ne \emptyset$,
i.e., if the optimization program \eqref{Eqn: f D lambda star, Def} is feasible,
then \eqref{Eqn: f D lambda star, Def} admits at least one minimizer $(\boldy^\star, \boldx^\star, \boldtau^\star, \boldalpha^\star)$.
\end{proposition}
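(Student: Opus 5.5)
The plan mirrors the proof of Prop. \ref{Prop: CBCP Minimizer Set is Non-empty}. I will show that the feasible set $S_{D, \boldlambda}$ is compact, or can be replaced by a compact subset without changing the infimum. Since $f_{D, \boldlambda}$ is continuous in $(\hat \boldy, \hat \boldx, \hat \boldtau, \boldalpha)$ (the $\ell_e$ are continuous, and the remaining terms are polynomial), the Weierstrass theorem then gives a minimizer.

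The DBCP case is actually simpler than the CBCP case, because the bound on tolls comes for free. By Assumption \ref{Assum, h: zero sublevel set of h is bounded}, the set $\{(\hat \boldy, \hat \boldtau) \in \Y_\adm \times \R^{|\edges|T} : h(\hat \boldy, \hat \boldtau) \leq 0\}$ is bounded. By continuity of $h$, and since $\Y_\adm$ is closed (in fact compact, being a polytope of non-negative flows with fixed demands), this set is closed and hence compact. Every feasible $\hat \boldtau$ therefore lies in a bounded set, so no truncation argument on tolls is needed. The discounts satisfy $\boldalpha \in [0,1]^{|\edges|T}$, which is already compact. The lane flows $\hat \boldx$ are a continuous (linear) image of $\hat \boldy$ via \eqref{Eqn: f D lambda star, Def, Constraint on x def}, so they are also bounded.

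It remains to show that the graph
$\mathcal{E}_D := \{(\hat \boldy, \hat \boldtau, \boldalpha) : \hat \boldtau \geq 0, \ \boldalpha \in [0,1]^{|\edges|T}, \ \hat \boldy \in \Y^{\eq}(\G^D(\hat \boldtau, \boldalpha))\}$
is closed. This is the main step. I will use the variational inequality characterization \eqref{Eqn: DBCP Equilibrium, Def}. Take a sequence $(\hat \boldy^n, \hat \boldtau^n, \boldalpha^n) \to (\hat \boldy, \hat \boldtau, \boldalpha)$ in $\mathcal{E}_D$. Then $\hat \boldy \in \Y_\adm$, because $\Y_\adm$ is closed and does not depend on the policy, by \eqref{Eqn: Flow Constraint Set, DBCP}. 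Fix any $\boldy \in \Y_\adm$. For each $n$, the inequality $\sum (y^g_{e,k,t} - \hat y^{g,n}_{e,k,t}) c^g_{e,k,t}(\hat \boldy^n; \G^D(\hat \boldtau^n, \boldalpha^n)) \geq 0$ holds. The cost \eqref{Eqn: Costs, DBCP} is jointly continuous in $(\hat \boldy, \hat \boldtau, \boldalpha)$, so passing to the limit yields \eqref{Eqn: DBCP Equilibrium, Def} at the limit point. Hence $\mathcal{E}_D$ is closed. Here the fixed feasible set is what makes DBCP easier than CBCP, where the budget constraint set varies with $(\boldtau, \boldB)$ and one needs lower semicontinuity of the constraint correspondence.

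Finally, $S_{D, \boldlambda}$ is the intersection of the closed set $\mathcal{E}_D$ (lifted with $\hat \boldx$ defined by the closed linear constraint) with the compact zero-sublevel set of $h$ and with $[0,1]^{|\edges|T}$. It is therefore compact, and nonempty by hypothesis. Continuity of $f_{D, \boldlambda}$ then gives a minimizer $(\boldy^\star, \boldx^\star, \boldtau^\star, \boldalpha^\star) \in S_{D, \boldlambda}^\star$.
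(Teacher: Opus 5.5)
Your proof is correct and follows the same strategy the paper indicates. The paper omits the DBCP proof as analogous to its sketch of Prop.~\ref{Prop: CBCP Minimizer Set is Non-empty}: bound the policy parameters, show the set of (policy, equilibrium) pairs is closed, intersect with the compact zero-sublevel set of $h$, and apply Weierstrass to the continuous objective. You depart from this only by streamlining two steps, both valid: you take the toll bound directly from Assumption~\ref{Assum, h: zero sublevel set of h is bounded} rather than from an argument that tolls beyond some range cannot lower the cost, and you use that $\Y(\G^D(\hat\boldtau,\boldalpha)) = \Y_\adm$ is independent of the policy, so the variational inequality passes to the limit directly.
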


The full proofs of Prop. \ref{Prop: CBCP Minimizer Set is Non-empty} and \ref{Prop: DBCP Minimizer Set is Non-empty}, which are similar, have been omitted for brevity.



\subsection{When DBCP Outperforms CBCP in Societal Cost Minimization}
\label{subsec: When DBCP Outperforms CBCP}

Below,
we study conditions under which either CBCP or DBCP outperforms the other in minimizing the corresponding equilibrium societal cost, defined in Sec. \ref{subsec: Societal Objective} as $f_{C, \boldlambda}^\star$ and $f_{D, \boldlambda}^\star$. Since $f_{C, \boldlambda}^\star$ and $f_{D, \boldlambda}^\star$ are in general difficult to compute (Remark \ref{Remark: Direct Comparison of f C star and f D star}), we compare $f_{C, \boldlambda}^\star$ and $f_{D, \boldlambda}^\star$ via a \textit{constructive} approach, under additional conditions on the network structure and user population which realistically reflect facets of real-world congestion pricing over highway systems. 
Specifically, $f_{D, \boldlambda}^\star \leq f_{C, \boldlambda}^\star$ would hold if, given any $(\boldtau, \boldB)$-CBCP policy that is feasible under the constraints in \eqref{Eqn: f C lambda star, Def}, we can \textit{construct} a $(\hat \boldtau, \boldalpha)$-DBCP policy, feasible under the constraints \eqref{Eqn: f D lambda star, Def}, that induces a lower societal cost compared to the societal cost induced by the $(\boldtau, \boldB)$-CBCP policy. Motivated by the above observation, in Sec. \ref{subsubsec: Key Assumptions and Their Implications} we introduce key conditions on the network structure (Assumption \ref{Assum: Chain Network}) and eligible users' VoTs (Assumption \ref{Assum: Eligible users cannot access express lane without subsidy under DBCP}). Next, in Sec. \ref{subsubsec: Main Result}, we prove that 
when Assumptions \ref{Assum, h: zero sublevel set of h is bounded}-\ref{Assum: Eligible users cannot access express lane without subsidy under DBCP} hold \textit{and the societal cost definition prioritizes maximizing toll revenue over minimizing eligible users' costs (i.e., $\lambda_R \geq \lambda_E$)}, given any feasible CBCP policy, we can construct a feasible DBCP policy that admits an equilibrium flow attaining a lower societal cost (Lemma \ref{Lemma: Main, Explicit Construction of DBCP Policy}). We then illustrate how the insights of Lemma \ref{Lemma: Main, Explicit Construction of DBCP Policy} serve a crucial role in establishing that under Assumptions \ref{Assum, h: zero sublevel set of h is bounded}-\ref{Assum: Eligible users cannot access express lane without subsidy under DBCP} and the condition $\lambda_R \geq \lambda_E$, DBCP outperforms CBCP in minimizing the societal cost, i.e., $f_{D, \boldlambda}^\star \leq f_{C, \boldlambda}^\star$ (Thm. \ref{Thm: Main}).


\begin{remark}(\textbf{Challenges of Directly Computing and Comparing $f_{C, \boldlambda}^\star$ and $f_{D, \boldlambda}^\star$})
\label{Remark: Direct Comparison of f C star and f D star}
Directly computing and comparing $f_{C, \boldlambda}^\star$ and $f_{D, \boldlambda}^\star$, which entail solving
\eqref{Eqn: f C lambda star, Def} and \eqref{Eqn: f D lambda star, Def}, pose several challenges. First, the constraint $h(\cdot) \leq 0$ in \eqref{Eqn: f C lambda star, Def} and \eqref{Eqn: f D lambda star, Def} can in general encode a non-convex constraint set, since $h$ is not required to be convex. Further, the equilibrium flow sets $\Y^\eq(\G^C(\boldtau, \boldB))$ and $\Y^\eq(\G^D(\hat \boldtau, \boldalpha))$ in \eqref{Eqn: f C lambda star, Def, Constraint on non-negative tau, B, y} and \eqref{Eqn: f D lambda star, Def, Constraint on non-negative tau, alpha}, respectively, are generally encoded either using variational inequalities (Def. \ref{Def: DBCP Equilibria} and \ref{Def: CBCP Equilibria}), or as minimizer sets of convex programs (Prop. \ref{Prop: Convex Program, DBCP} and \ref{Prop: Convex Program, CBCP}). Thus, to analytically characterize $f_{C, \boldlambda}^\star$ and $f_{D, \boldlambda}^\star$, one must in general solve either a semi-infinite or bilevel optimization problem with non-convex constraints, which are in general NP-hard.
\end{remark}


\subsubsection{Key Assumptions and Their Implications}
\label{subsubsec: Key Assumptions and Their Implications}

We begin by specifying Assumptions \ref{Assum: Chain Network}-\ref{Assum: Eligible users cannot access express lane without subsidy under DBCP}, and noting their practical relevance to the design of real-world CBCP and DBCP policies.

First, Assumption \ref{Assum: Chain Network} asserts that the traffic network under study, $\network$, consists of a consecutive sequence of edges (see Fig. \ref{fig: Chain Network}), and thus forms a chain (or \textit{path}) in a graph-theoretic sense. We henceforth refer to traffic networks satisfying Assumption \ref{Assum: Chain Network} as \textit{chain networks}.
\vspace{-1mm}

\begin{assumption} \label{Assum: Chain Network}
For each $e \in \edges$, we require: 
(1) Either $\edges_{i_e}^\In = \{e'\}$ for some $e' \in \edges$, or $\edges_{i_e}^\In = \emptyset$, and (2) either $\edges_{j_e}^\Out = \{e'\}$ for some $e' \in \edges$, or $\edges_{j_e}^\Out = \emptyset$. 
Here, $\emptyset$ denotes the empty set.
\end{assumption}

\vspace{-1mm}


\revision{
Since our work focuses on comparing the societal costs induced by real-world DBCP and CBCP policies, which are often deployed over consecutive highway segments \citep{CBCP-SanMateo, DBCP-ExpressLaneSTART}, we compute DBCP and CBCP equilibria over chain networks to model steady-state user flows on tolled highway express lanes.
}
In particular, the chain network structure 
captures highway systems with multiple origin and destination nodes, at which users may enter and exit.
\revision{
Further, our main theoretical results (Lemma \ref{Lemma: Main, Explicit Construction of DBCP Policy} and Thm. \ref{Thm: Main}) extend to more general network models that capture congestion beyond highway segments, by abstracting off-highway roads as a congestible outside option.
For details, see App. \ref{subsec: Latency Characteristics of Outside Options in Chain Networks}.
}


\begin{figure}
    \centering
    \includegraphics[width=0.79\linewidth]{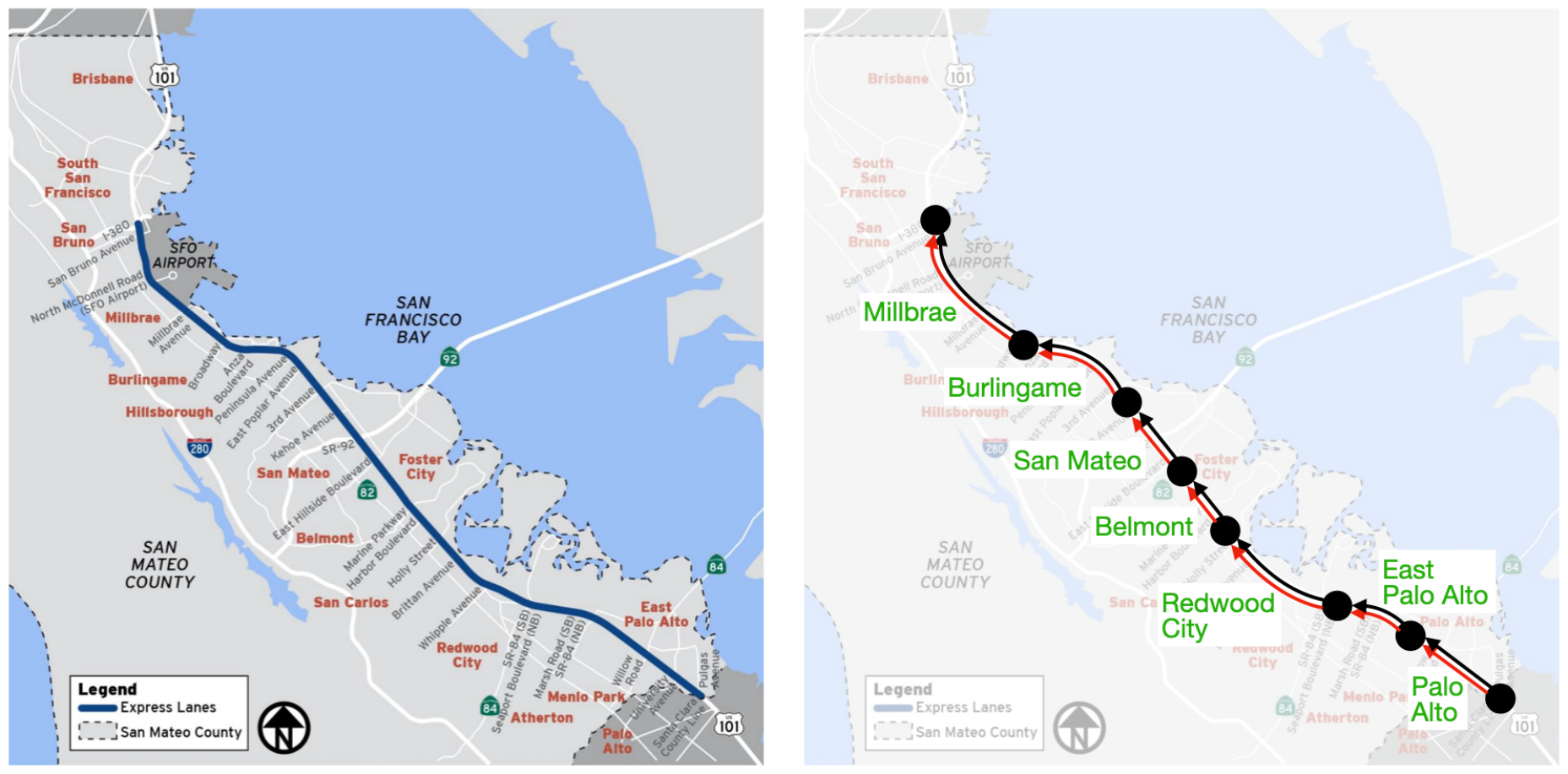}
    \caption{\small \sf
    (Left) Schematic for the
    101 Express Lanes Project \citep{CBCP-SanMateo}, a CBCP policy deployed on the Northbound US-101 freeway from Palo Alto to San Bruno. (Right) Chain network representation of the 
    101 Express Lanes Project, consisting of consecutive edges representing cities (in green) within the 101 Express Lanes Project's jurisdiction. Each edge comprises an express lane that may be tolled (red, $k=1$) and a toll-free GP lane (black, $k=2$).
    }
    \label{fig: Chain Network}
    \vspace{-3mm}
\end{figure}



Next, Assumption \ref{Assum: Eligible users cannot access express lane without subsidy under DBCP} states that at any DBCP equilibrium, no eligible user is able to access the express lane without subsidy.

\vspace{-2mm}
\begin{assumption} \label{Assum: Eligible users cannot access express lane without subsidy under DBCP}
Given any $\boldtau \geq 0$ and $(\boldtau, \textbf{0})$-DBCP equilibrium flow $\boldy^\star \in \Y^{\eq}(\G^D(\boldtau, \textbf{0}))$, for each $e \in \edges$ and $t \in [T]$ such that $\tau_{e,t} > 0$, we have $y_{e,1,t}^{g\star} = 0$ for each $g \in G_e^E$.
\end{assumption}
\vspace{-2mm}

\revision{
Assumption \ref{Assum: Eligible users cannot access express lane without subsidy under DBCP} is without loss of generality when eligible users have sufficiently low VoTs, since such users, without receiving subsidy, would be priced out of express lane access at equilibrium (App. \ref{subsec: Low-VoT Users Will Not Voluntarily Pay Express Lane Tolls Out-of-Pocket in Chain Networks}, Prop. \ref{Prop: (DBCP) No Eligible User with Sufficiently Low VoT will Pay Express Lane Tolls Out-of-Pocket}). 
Thus, from a policy design perspective, Assumption \ref{Assum: Eligible users cannot access express lane without subsidy under DBCP} restricts eligibility to low-VoT users who would be priced out of express lane access at equilibrium without being provided subsidy. Assumption \ref{Assum: Eligible users cannot access express lane without subsidy under DBCP} carries practical significance, since a core objective of equitable congestion pricing is to provide tolled lane access to users who would otherwise have been priced out of efficient but financially costly travel options \citep{CBCP-SanMateo, DBCP-ExpressLaneSTART}.
}

\subsubsection{Main Result}
\label{subsubsec: Main Result}


Our main theoretical contribution establishes that under Assumptions \ref{Assum, h: zero sublevel set of h is bounded}-\ref{Assum: Eligible users cannot access express lane without subsidy under DBCP}, if the societal cost is defined to prioritize maximizing toll revenue over minimizing eligible users' travel costs, i.e., \textit{$\lambda_R \geq \lambda_E$}, the minimum societal cost attainable under DBCP policies is no worse than its counterpart under CBCP policies (Thm. \ref{Thm: Main}). 
Remark \ref{Remark: lambda R geq lambda E} explains the practical significance of the condition $\lambda_R \geq \lambda_E$ in the context of congestion pricing on tolled highways.
Moreover, \textit{if $\lambda_R > \lambda_E$}, and a mild regularity assumption on express lane flows (Assumption \ref{Assum: Optimal CBCP Eq Flow admits eligible users}) holds in addition to Assumptions \ref{Assum, h: zero sublevel set of h is bounded}-\ref{Assum: Eligible users cannot access express lane without subsidy under DBCP}, the optimal DBCP policy achieves strictly lower societal cost compared to any CBCP policy (Cor. \ref{Cor: Main Thm, Strictness}). 
We will demonstrate in Sec. \ref{subsec: When CBCP Can Outperform DBCP} that, if the assumption $\lambda_R \geq \lambda_E$ were violated,
the conclusion of Thm. \ref{Thm: Main} may fail to hold, i.e., the optimal CBCP policy may outperform the optimal DBCP policy in minimizing societal costs.

\begin{theorem} \label{Thm: Main}
Under Assumptions \ref{Assum, h: zero sublevel set of h is bounded}-\ref{Assum: Eligible users cannot access express lane without subsidy under DBCP}, if $\lambda_R \geq \lambda_E$, then $f_{D, \boldlambda}^\star \leq f_{C, \boldlambda}^\star$.
\end{theorem}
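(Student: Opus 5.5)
The plan is constructive. Take any feasible point $(\boldy,\boldx,\boldtau,\boldB)\in S_{C,\boldlambda}$. I will build a DBCP policy $(\hat\boldtau,\boldalpha)$ with $\hat\boldtau=\boldtau$, together with a $(\hat\boldtau,\boldalpha)$-DBCP equilibrium $\hat\boldy$, such that:
\begin{itemize}
\item the lane flows are unchanged, $\hat\boldx=\boldx$;
\item every ineligible group's flow is unchanged, $\hat y^{g'}_{e,k,t}=y^{g'}_{e,k,t}$ for all $g'\in G_e^I$;
\item the total eligible flow $\sum_{g\in G_e^E}\hat y^g_{e,k,t}=\sum_{g\in G_e^E} y^g_{e,k,t}$ is unchanged on every edge, lane and period.
\end{itemize}
By Assumption \ref{Assum, h: Eligible user total express lane flow}, this gives $h(\hat\boldy,\boldtau)=h(\boldy,\boldtau)\le 0$, so the DBCP point is feasible in \eqref{Eqn: f D lambda star, Def}. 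If I can also show $f_{D,\boldlambda}(\hat\boldy,\hat\boldx,\boldtau,\boldalpha)\le f_{C,\boldlambda}(\boldy,\boldx,\boldtau)$, then taking the infimum over $S_{C,\boldlambda}$ yields $f^\star_{D,\boldlambda}\le f^\star_{C,\boldlambda}$. If $S_{C,\boldlambda}=\emptyset$ the claim is trivial.

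\textbf{Decomposition.} Under Assumption \ref{Assum: Chain Network}, each o-d pair has a single route, so a group's only decision is its lane on each edge. Under DBCP, period decisions also decouple. Hence a flow is a DBCP equilibrium iff, for every $(e,t)$ and $g\in G_e$, group $g$ uses only lanes minimizing $c^g_{e,k,t}$. Fix $(e,t)$ and write $\Delta:=\ell_e(x_{e,2,t})-\ell_e(x_{e,1,t})$ and $m:=\sum_{g\in G_e^E}y^g_{e,1,t}$.
\begin{itemize}
\item \emph{Ineligible users.} Their CBCP costs \eqref{Eqn: Costs, CBCP} coincide with the DBCP costs \eqref{Eqn: Costs, DBCP} at the same $\boldx$ and $\boldtau$, so their CBCP equilibrium conditions carry over unchanged.
\item \emph{Eligible users, case $\Delta>0$.} Choose the discounted price $p=(1-\alpha_{e,t})\tau_{e,t}$ as a threshold. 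Sort eligible groups in $G_e^E$ by $v_t^g$ in decreasing order and fill the express lane with mass $m$ in that order, splitting the marginal group. Set $p=v^{g_0}_t\Delta$, where $g_0$ is the marginal group, or $p=\tau_{e,t}$ (i.e., $\alpha_{e,t}=0$) if $m=0$. Each eligible group then weakly prefers its assigned lane.
\item \emph{Eligible users, case $\Delta\le 0$.} The CBCP equilibrium forces ineligible users off the express lane whenever $\tau_{e,t}>0$ and $\Delta<0$, so I expect $\Delta<0$ to force $m=0$. The subcase $\Delta=0$ is handled with $\alpha_{e,t}=1$ and the same split as in CBCP.
\end{itemize}

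\textbf{Main obstacle: $\alpha_{e,t}\in[0,1]$.} Nonnegativity of $\alpha_{e,t}$ requires $p\le\tau_{e,t}$ whenever $m>0$ and $\tau_{e,t}>0$. This is where Assumption \ref{Assum: Eligible users cannot access express lane without subsidy under DBCP} enters. Suppose some eligible $g$ had $v_t^g\Delta>\tau_{e,t}$. I would form a $(\boldtau,\textbf{0})$-DBCP configuration and argue it is an equilibrium in which $g$ uses the express lane, contradicting the assumption. The configuration agrees with the constructed one except that eligible groups with $v^g_t\Delta\ge\tau_{e,t}$ ride the express lane; by the decomposition it can be built edge-by-edge. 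The delicate point is that removing subsidies changes the lane flows, so $\Delta$ is not preserved. I would try a monotonicity/continuity argument on the single-edge DBCP equilibrium in the eligible express mass, in the spirit of Lemma \ref{Lemma: DBCP Equilibria and Effective VoTs}. Failing that, I would show the threshold construction can always be truncated at $p=\tau_{e,t}$ without changing $m$.

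\textbf{Cost comparison.} The ineligible terms and the ineligible revenue are identical, so
\begin{align*}
f_D-f_C=\lambda_E\Big[\sum_{e,t}\sum_{k}\sum_{g\in G_e^E}v_t^g\ell_e(x_{e,k,t})(\hat y^g_{e,k,t}-y^g_{e,k,t})\Big]+(\lambda_E-\lambda_R)\sum_{e,t}(1-\alpha_{e,t})\tau_{e,t}\,m_{e,t}.
\end{align*}
The second term is $\le 0$ because $\lambda_R\ge\lambda_E$. For the first term, fix $(e,t)$ with the express lane weakly faster. Among all splits with express mass $m$, the highest-VoT-first assignment minimizes $\sum_g v^g_t\ell_e(x_{e,k,t})y^g_{e,k,t}$ by a rearrangement argument, so the bracket is $\le 0$ edge by edge.
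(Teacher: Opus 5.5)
Your construction and cost comparison are the paper's (Lemma~\ref{Lemma: Main, Explicit Construction of DBCP Policy} and its use in Thm.~\ref{Thm: Main}). You keep the same tolls, the same ineligible flows and the same per-lane eligible totals, so $\hat\boldx=\boldx$ and Assumption~\ref{Assum, h: Eligible user total express lane flow} gives feasibility. You refill the eligible express mass in decreasing VoT order with $(1-\alpha_{e,t})\tau_{e,t}=v_t^{g_0}\Delta$, and conclude by rearrangement plus the sign of the $(\lambda_E-\lambda_R)$ revenue term.

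The gap is the step you flag and leave open: you never prove $v_t^g\Delta\le\tau_{e,t}$ for all $g\in G_e^E$. This inequality is exactly what makes $\alpha_{e,t}\ge 0$ when $m>0$. It is equally needed when $m=0$, where your claim that every eligible group weakly prefers the GP lane at $p=\tau_{e,t}$ is the same inequality. Your fallback of truncating at $p=\tau_{e,t}$ cannot rescue it. If the marginal group had $v_t^{g_0}\Delta>\tau_{e,t}$, it would strictly prefer the express lane at every admissible discounted price, so it cannot be split while $\boldx$ and $m$ are held fixed.

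The paper handles this step in one line: it reads $v_t^g\ell_e(x_{e,1,t})+\tau_{e,t}\le v_t^g\ell_e(x_{e,2,t})$ off the CBCP lane flows and invokes Assumption~\ref{Assum: Eligible users cannot access express lane without subsidy under DBCP}. As you rightly note, that assumption concerns $(\boldtau,\boldzero)$-DBCP equilibria, whose lane flows differ, but a short comparison closes the step.

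Fix $(e,t)$ with $\tau:=\tau_{e,t}>0$ and $\Delta>0$. Take any $(\boldtau,\boldzero)$-DBCP equilibrium (Prop.~\ref{Prop: Existence of DBCP Equilibria}) with express flow $x^D_{e,1,t}$ and latency gap $\Delta^D$; on a chain, $x^D_{e,1,t}+x^D_{e,2,t}=d_e=x_{e,1,t}+x_{e,2,t}$.
\begin{enumerate}
\item Suppose $x^D_{e,1,t}>x_{e,1,t}$. Then $\Delta^D<\Delta$, so every group using the express lane in that equilibrium has $v_t^g\ge\tau/\Delta^D>\tau/\Delta$.
\item Ineligible groups with such VoTs strictly prefer the express lane at the CBCP equilibrium, so they already ride it fully there. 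Their total demand is therefore at most $x_{e,1,t}-m$, which is below $x^D_{e,1,t}$.
\item This forces positive eligible express flow at a $(\boldtau,\boldzero)$-DBCP equilibrium, contradicting the assumption. Hence $x^D_{e,1,t}\le x_{e,1,t}$ and $\Delta^D\ge\Delta$.
\item Now an eligible group with $v_t^g\Delta>\tau$ would strictly prefer, and therefore fully use, the express lane at that equilibrium, again a contradiction.
\end{enumerate}
With $v_t^g\Delta\le\tau_{e,t}$ on $G_e^E$ in hand, $\alpha_{e,t}\in[0,1]$ and the rest of your argument goes through.

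Also tidy the degenerate cases. $\Delta<0$ is impossible outright: all users would leave the express lane, giving $x_{e,1,t}=0$ and hence $\Delta\ge 0$. And $\tau_{e,t}=0$ forces $\Delta=0$, because the budget does not bind on that edge-period.
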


For a sketch of the proof of Thm. \ref{Thm: Main}, see Sec. \ref{subsubsec: Proof Sketch of Thm, Main}; the full proof is provided in App. \ref{subsec: App, Proof of Thm, Main}.

\begin{remark} \label{Remark: lambda R geq lambda E}
The condition $\lambda_R \geq \lambda_E$ in Thm. \ref{Thm: Main} can be interpreted as follows. Although toll revenues can finance traffic infrastructure improvements and travel subsidies, tolls also burden eligible travelers, whose access to tolled resources hinges upon receiving travel credits or discounts. Thus, the net impact of deploying tolls could be viewed as either positive or negative, depending on whether the system designer prioritizes collecting toll revenue or minimizing eligible users' travel costs. The assumption $\lambda_R \geq \lambda_E$ describes the view that, \textit{provided the design of the constraint function $h$ guarantees eligible users some express lane access without incurring prohibitively high tolls} (Ex. \ref{Ex: h, Eligible User Express Lane Access Guarantees} and \ref{Ex: h, Upper Bounds for Express Lane Tolls}), it is reasonable to prioritize toll revenue collection no less than the minimization of eligible users' costs.
Our emphasis on revenue collection encoded by the assumption $\lambda_R \geq \lambda_E$ carries practical significance, since revenue generation for transit or traffic infrastructure investment is
an explicit objective of real-world congestion pricing policies across many urban areas \citep{NYC-CongestionPricingWebsite, NYC-CongestionPricingRevenue, MTA2026CongestionPricingMakesForBetterTransit, Singapore-CongestionPricingCaseStudy}.

The condition $\lambda_R \geq \lambda_E$ serves a vital role in establishing our main theoretical contribution that DBCP policies outperform CBCP policies in minimizing the societal cost (Thm. \ref{Thm: Main}). We further verify the necessity of assuming $\lambda_R \geq \lambda_E$ for establishing Thm. \ref{Thm: Main} in Sec. \ref{subsec: When CBCP Can Outperform DBCP}, where we demonstrate that when $\lambda_R < \lambda_E$, CBCP policies can become more effective at minimizing the societal cost compared to their DBCP counterparts (Prop. \ref{Prop: Assump lambda R geq lambda E removed}).
\end{remark}

Two key reasons underlie the conclusion of Thm. \ref{Thm: Main} that under Assumptions \ref{Assum, h: zero sublevel set of h is bounded}-\ref{Assum: Eligible users cannot access express lane without subsidy under DBCP}, DBCP policies generate lower societal cost at equilibrium than CBCP policies. 
First, compared to users with lower VoTs, higher-VoT users associate each unit of travel latency with a higher monetary cost, and thus experience a greater reduction in latency-associated costs when provided with express lane access.
Therefore, since DBCP prioritizes allocating express lane access to high-VoT eligible users (Lemma \ref{Lemma: DBCP Equilibria and Effective VoTs}), DBCP is more effective than CBCP at reducing eligible users' total travel latency cost. Second, 
at the same flow $\boldy$, DBCP generates higher toll revenues than CBCP,
since eligible users accessing an express lane under DBCP policies must pay a fraction of the lane toll, while eligible users accessing an express lane under CBCP policies do not. Since toll revenue collection is considered a net societal benefit when $\lambda_R \geq \lambda_E$, DBCP is capable of inducing lower societal costs at equilibrium compared to CBCP.

We now establish an additional mild condition on the cost weights $\boldsymbol{\lambda}$ and the equilibrium flows of the optimal CBCP policy (Assumption \ref{Assum: Optimal CBCP Eq Flow admits eligible users}), which, together with Assumptions \ref{Assum, h: zero sublevel set of h is bounded}-\ref{Assum: Eligible users cannot access express lane without subsidy under DBCP}, guarantee that the optimal DBCP policy \textit{strictly} outperforms the optimal CBCP policy in minimizing the societal cost, i.e., $f_{D, \boldlambda}^\star < f_{C, \boldlambda}^\star$. 

\begin{assumption} \label{Assum: Optimal CBCP Eq Flow admits eligible users}
There exists $(\boldy^\star, \boldx^\star, \boldtau^\star, \boldB^\star) \in S_{C, \boldlambda}^\star$, $e \in \edges$, $t \in [T]$, and $g \in G_e^E$ such that $(\boldy^\star)_{e,1,t}^g > 0$ and $x_{e,1,t}^\star < x_{e,2,t}^\star$.
\end{assumption}

In other words, Assumption \ref{Assum: Optimal CBCP Eq Flow admits eligible users} asserts that there exists an optimal CBCP equilibrium flow $\boldy^\star$ whose express lane flow in some edge $e$ and period $t$ admits some eligible users (i.e., $(\boldy^\star)_{e,1,t}^g > 0$ for some $g \in G_e^E$) while remaining strictly less than the GP lane flow (i.e., $x_{e,1,t}^\star < x_{e,2,t}^\star$).

\begin{corollary} \label{Cor: Main Thm, Strictness}
Under Assumptions \ref{Assum, h: zero sublevel set of h is bounded}-\ref{Assum: Optimal CBCP Eq Flow admits eligible users}, if $\lambda_R > \lambda_E$ and $S_{C, \boldlambda}, S_{D, \boldlambda} \ne \emptyset$, then $f_{D, \boldlambda}^\star < f_{C, \boldlambda}^\star$.

\end{corollary}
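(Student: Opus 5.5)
Starting from the optimal CBCP solution provided by Assumption \ref{Assum: Optimal CBCP Eq Flow admits eligible users}, I would build a DBCP policy with \emph{strictly} smaller societal cost. The inequality $f_{D, \boldlambda}^\star \leq f_{C, \boldlambda}^\star$ of Thm. \ref{Thm: Main} is the baseline. Since $S_{C,\boldlambda} \ne \emptyset$, Prop. \ref{Prop: CBCP Minimizer Set is Non-empty} ensures minimizers exist, and Assumption \ref{Assum: Optimal CBCP Eq Flow admits eligible users} selects one, $(\boldy^\star, \boldx^\star, \boldtau^\star, \boldB^\star) \in S_{C,\boldlambda}^\star$, together with $(e,t,g)$ such that $(\boldy^\star)^g_{e,1,t} > 0$ and $x^\star_{e,1,t} < x^\star_{e,2,t}$. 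Hence $f_{C,\boldlambda}^\star = f_{C,\boldlambda}(\boldy^\star,\boldx^\star,\boldtau^\star)$. Likewise, $S_{D,\boldlambda} \ne \emptyset$ and Prop. \ref{Prop: DBCP Minimizer Set is Non-empty} let me treat $f_{D,\boldlambda}^\star$ as an attained minimum, though the argument only needs one feasible DBCP point with cost strictly below $f_{C,\boldlambda}^\star$.

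For this optimal CBCP point I would use the constructive step behind Thm. \ref{Thm: Main} (Lemma \ref{Lemma: Main, Explicit Construction of DBCP Policy}). It produces a DBCP policy $(\hat\boldtau, \boldalpha)$ and equilibrium $\hat\boldy$ that satisfy three properties. First, the lane flows are the same, $\hat\boldx = \boldx^\star$. Second, the total eligible flow on each lane and the ineligible flows are unchanged, so $h$ is preserved by Assumption \ref{Assum, h: Eligible user total express lane flow}. Third, tolls are unchanged, $\hat\boldtau = \boldtau^\star$, and on edges with eligible express-lane usage the discount is set so the effective eligible toll equals the marginal-VoT latency gap: $(1-\alpha_{e,t})\tau_{e,t} = \bar v_{e,t}\,(\ell_e(x_{e,2,t}) - \ell_e(x_{e,1,t}))$, where $\bar v_{e,t}$ is the VoT of the marginal eligible user. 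I would then split the cost difference $f_{D,\boldlambda} - f_{C,\boldlambda}$ into three parts. Ineligible costs are identical, because flows, latencies and tolls coincide. Eligible latency cost is weakly lower, because DBCP sorts the highest-VoT eligible users into the express lane (Lemma \ref{Lemma: DBCP Equilibria and Effective VoTs}). What remains is $\lambda_E$ times the eligible tolls paid minus $\lambda_R$ times the additional revenue, which together equal
\begin{align*}
(\lambda_E - \lambda_R) \sum_{t}\sum_{e} (1-\alpha_{e,t})\,\tau_{e,t} \sum_{g \in G_e^E} \hat y^g_{e,1,t}.
\end{align*}

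Strictness then requires this sum to be strictly positive, since $\lambda_E - \lambda_R < 0$. At the edge-period $(e,t)$ from Assumption \ref{Assum: Optimal CBCP Eq Flow admits eligible users}, the total eligible express flow is positive and is preserved by the construction, and $\ell_e$ strictly increasing with $x^\star_{e,1,t} < x^\star_{e,2,t}$ gives a strictly positive latency gap. The eligible users on the express lane have positive VoT, so the effective eligible toll $(1-\alpha_{e,t})\tau_{e,t}$ is strictly positive. The term is therefore positive, so $f_{D,\boldlambda}^\star \le f_{D,\boldlambda}(\hat\boldy,\hat\boldx,\hat\boldtau,\boldalpha) < f_{C,\boldlambda}^\star$.

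I expect the main obstacle to be confirming that the Lemma's construction really assigns a \emph{positive} out-of-pocket eligible toll at that $(e,t)$. This has to be consistent with the DBCP equilibrium conditions: eligible users split between lanes, so their indifference forces the effective toll to equal $v\,(\ell_e(x_{e,2,t})-\ell_e(x_{e,1,t}))>0$ for the marginal user. If the Lemma instead chooses $\alpha_{e,t}=1$, I would modify it locally: raise $1-\alpha_{e,t}$ to that indifference level, argue the equilibrium remains the same, and check that $\tau_{e,t}>0$. Positivity of $\tau_{e,t}$ should follow because, under CBCP, with $x_{e,1,t}<x_{e,2,t}$ and $\tau_{e,t}=0$, ineligible users would move to the express lane.
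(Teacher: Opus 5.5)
Your proposal is correct and follows the paper's proof essentially step for step: you apply the Lemma's construction to the optimal CBCP point of Assumption~\ref{Assum: Optimal CBCP Eq Flow admits eligible users}, and you show that the extra-revenue term $(\lambda_R-\lambda_E)\sum_{t}\sum_{e}(1-\alpha_{e,t})\tau_{e,t}\sum_{g\in G_e^E}\hat y^g_{e,1,t}$ is strictly positive at the distinguished $(e,t)$, where your identity $(1-\alpha_{e,t})\tau_{e,t}=\bar v^E_{e,t}\big(\ell_e(x_{e,2,t})-\ell_e(x_{e,1,t})\big)$ is exactly the paper's Case~1 choice of discount. Since $\tau^\star_{e,t}>0$ and $x^\star_{e,1,t}<x^\star_{e,2,t}$ put that $(e,t)$ in Case~1, where $\alpha_{e,t}\in(0,1)$, your local-modification fallback is never needed; and to get $\tau^\star_{e,t}>0$ you should argue, as the paper's Case~3 does, that \emph{any} user on the GP lane would deviate to the express lane, not just ineligible ones (eligible users can too, because a zero toll leaves the budget constraint unaffected), since the GP flow there need not contain any ineligible users.
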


The proof of Cor. \ref{Cor: Main Thm, Strictness}  follows by retracing the proof of Thm. \ref{Thm: Main}, and applying the two additional constraints in the statement of Cor. \ref{Cor: Main Thm, Strictness}. Concretely, we demonstrate that under the conditions listed 
in the statement of Cor. \ref{Cor: Main Thm, Strictness}, given any optimal CBCP policy satisfying $(\boldy^\star)_{e,1,t}^g > 0$ and $x_{e,1,t}^\star < x_{e,2,t}^\star$, the proof of Thm. \ref{Thm: Main} allows us to construct a DBCP policy that generates \textit{strictly} higher revenue at equilibrium, which translates into strictly lower societal cost, since $\lambda_R > \lambda_E$.

For the full proof of Cor. \ref{Cor: Main Thm, Strictness}, see App. \ref{subsec: App, Proof of Cor, Main Thm, Strictness}.

\subsubsection{Proof Sketch of Thm. \ref{Thm: Main}}
\label{subsubsec: Proof Sketch of Thm, Main}

\looseness=-1
The central challenge of proving Thm. \ref{Thm: Main} is that 
under CBCP policies, the equilibrium routing decisions of eligible users are coupled across edges $e$ and periods $t$, which poses difficulties for the comparison of CBCP equilibrium flows against their DBCP counterparts.
To overcome this challenge, 
we demonstrate that under Assumptions \ref{Assum, h: zero sublevel set of h is bounded}-\ref{Assum: Eligible users cannot access express lane without subsidy under DBCP}, unlike CBCP equilibrium flows, \textit{DBCP equilibrium flows} admit structural properties that facilitate the comparison of the effectiveness of CBCP and DBCP policies at minimizing societal cost.
Specifically, we
prove that in the setting of a chain network (i.e., under Assumption \ref{Assum: Chain Network}), the computation of DBCP equilibria can be decoupled across different edges $e \in \edges$ and periods $t \in [T]$ (Lemma \ref{Lemma: DBCP Eq Decomposition, Chain Network}). We then prove that, on each edge $e \in \edges$ of a chain network on which a DBCP policy has been deployed, the emergent equilibrium flow at period $t \in [T]$ assigns express lane access to groups via an order that depends upon the groups' VoTs and the discount offered
(Lemma \ref{Lemma: DBCP Equilibria and Effective VoTs}). The modular structure and flow assignment order of DBCP equilibria for chain networks, characterized in Lemmas \ref{Lemma: DBCP Eq Decomposition, Chain Network}-\ref{Lemma: DBCP Equilibria and Effective VoTs}, facilitate the edge-by-edge construction of $(\boldtau, \boldalpha)$-DBCP policies utilized in the proof of Lemma \ref{Lemma: Main, Explicit Construction of DBCP Policy}, which 
is crucial for establishing Thm. \ref{Thm: Main}.

\begin{lemma} \label{Lemma: DBCP Eq Decomposition, Chain Network}
Under Assumption \ref{Assum: Chain Network}, given any $(\boldtau, \boldalpha)$-DBCP policy,
the flow $\boldy \in \Y(\G^D(\boldtau, \boldalpha))$ is a $(\boldtau, \boldalpha)$-DBCP equilibrium flow if and only if for each edge $e \in \edges$ and period $t \in [T]$, the flow $y_{e,t} := (y_{e,k,t}^g: k \in [2], g \in G_e)$ is an equilibrium flow for the corresponding $(\tau_{e,t}, \alpha_{e,t})$-DBCP game defined over the 
groups $G_e$.
\end{lemma}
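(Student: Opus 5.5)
The key observation is that in a chain network each group's route is unique. Under Assumption \ref{Assum: Chain Network}, every node has at most one incoming and one outgoing edge, and the network is acyclic. Hence for each o-d pair $p^g$ the route set $R(p^g)$ contains exactly one route $r^g$, the consecutive edges from $p_o^g$ to $p_d^g$; we also have $G_e = \{g : e \in r^g\}$. The plan is therefore first to show that the flow continuity constraints in $\Y_\adm$ reduce, on a chain, to the per-edge demand constraints
\[
y_{e,1,t}^g + y_{e,2,t}^g = d^g, \qquad y_{e,k,t}^g \geq 0, \qquad \forall\, e \in \edges,\ t \in [T],\ g \in G_e.
\]
I would prove this by induction along the chain starting from $p_o^g$. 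At $p_o^g$ there is no incoming edge used by $g$, so the total outflow on the single outgoing edge equals $d^g$. At each intermediate node, conservation with one incoming and one outgoing edge forces the same total to pass through. At $p_d^g$ the flow exits. Flow of $g$ on edges outside $r^g$ is not defined, since only groups $g \in G_e$ carry variables on edge $e$. Consequently $\Y(\G^D(\boldtau,\boldalpha)) = \Y_\adm$ is exactly the Cartesian product over pairs $(e,t)$ of the sets $\Y_{e,t} := \{y_{e,t} \geq 0 : y_{e,1,t}^g + y_{e,2,t}^g = d^g\ \forall g \in G_e\}$. Each $\Y_{e,t}$ is the feasible set of the single-edge $(\tau_{e,t},\alpha_{e,t})$-DBCP game over the groups $G_e$.

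Next, I would note that by \eqref{Eqn: Costs, DBCP} the cost $c_{e,k,t}^g$ depends on $\boldy$ only through $x_{e,k,t}$, and hence only through $y_{e,t}$. The variational inequality \eqref{Eqn: DBCP Equilibrium, Def} for group $g$ therefore reads
\[
\sum_{(e,t)} \sum_{k} (y_{e,k,t}^g - y_{e,k,t}^{g\star})\, c_{e,k,t}^g(y_{e,t}^\star) \geq 0 .
\]
This is a sum of terms, each involving only the $(e,t)$-block of $\boldy^g$, over a product feasible set.

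For the ``if'' direction, suppose each $y_{e,t}^\star$ satisfies the per-edge variational inequality for every $g \in G_e$. Then summing over $(e,t)$ gives \eqref{Eqn: DBCP Equilibrium, Def} for any feasible $\boldy$, because the product structure guarantees that $y_{e,t}$ lies in $\Y_{e,t}$ for each block. For the ``only if'' direction, fix $(e',t')$ and an arbitrary $\tilde y_{e',t'} \in \Y_{e',t'}$. Build a competitor $\boldy$ that equals $\boldy^\star$ on every block except $(e',t')$, where it equals $\tilde y_{e',t'}$. This competitor is feasible by the product structure, and plugging it into \eqref{Eqn: DBCP Equilibrium, Def} kills all other terms, which yields the per-edge inequality.

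A subtle point is that the per-edge game's equilibrium condition is stated per group over the per-group feasible set. This is consistent here because the constraints are separable across groups as well as across blocks, so the block-wise deviation argument applies group by group.

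The main obstacle is the first step: rigorously showing that $\Y_\adm$ factorizes on a chain. This requires care about groups whose o-d pair is not connected, and about the fact that flow variables exist only for $g \in G_e$; the induction along the unique route handles both. Once this factorization is in place, the rest is routine decomposition of the variational inequality.
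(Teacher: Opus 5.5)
Your proof is correct, but it takes a different route from the paper's.

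The paper goes through Prop.~\ref{Prop: Convex Program, DBCP}. It observes that on a chain the objective \eqref{Eqn: Convex Program Objective, DBCP} is a sum of terms, each depending only on the $(e,t)$-block of the flow, and that $\Y(\G^D(\boldtau,\boldalpha))$ decouples across $(e,t)$. So a flow minimizes the full program iff each block minimizes its own single-edge program, which by the same proposition means each block is a single-edge DBCP equilibrium.

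You instead decompose the variational inequality \eqref{Eqn: DBCP Equilibrium, Def} directly. You use only the product structure of the feasible set and the locality of the costs: summing block inequalities gives the ``if'' direction, and a single-block deviation gives the ``only if'' direction.

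Your argument is more elementary and self-contained. It does not depend on Prop.~\ref{Prop: Convex Program, DBCP}, whose proof the paper omits, and it uses no monotonicity or convexity of $\ell_e$. It would therefore work for any lane costs that depend only on block-local flows. You also prove the factorization of $\Y_\adm$ along each group's unique route, which the paper only asserts. The paper's route is shorter once the convex program is available, and it leads directly into the effective-VoT rewriting \eqref{Eqn: Convex Program for DBCP, using effective VoT} used next in Lemma~\ref{Lemma: DBCP Equilibria and Effective VoTs}.

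One caveat applies to both proofs. You read Assumption~\ref{Assum: Chain Network} as saying every node has in- and out-degree at most one. That matches the paper's description of a chain, but the literal wording only bounds the in-degree of each edge's tail and the out-degree of each edge's head. Read literally, it would allow, for example, two parallel edges from a source to a sink. What both arguments actually need is that each group has a unique route.
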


\begin{proof}{Proof}
By Prop. \ref{Prop: Convex Program, DBCP}, $\boldy^\star \in \Y(\G^D(\boldtau, \boldalpha))$ is a $(\boldtau, \boldalpha)$-DBCP equilibrium flow if and only if it minimizes the following objective, first presented in \eqref{Eqn: Convex Program Statement, DBCP}, over the constraint set $\Y(\G^D(\boldtau, \boldalpha))$:
{
\setlength{\abovedisplayskip}{4pt}
\setlength{\belowdisplayskip}{4pt}
\begin{align} \label{Eqn: Convex Program Objective, DBCP}
    \sum_{t \in [T]} \sum_{e \in \edges} \left[ \sum_{k=1}^2 \int_0^{x_{e,k,t}} \ell_{e,k}(w) \hspace{0.5mm} dw  + \sum_{g \in \G_e^I} \frac{y_{e,1,t}^g \tau_{e,t}}{v_t^g} + \sum_{g \in \G_e^E} \frac{y_{e,1,t}^g (1 - \alpha_{e,t}) \tau_{e,t}}{v_t^g} \right].
\end{align}
}
Thus, to establish Lemma \ref{Lemma: DBCP Eq Decomposition, Chain Network}, it suffices to observe that when $\network$ is a chain network (i.e., under Assumption \ref{Assum: Chain Network}), the following conditions hold: (1) The objective \eqref{Eqn: Convex Program Objective, DBCP} can be expressed as the sum of individual cost terms, each of which only depends upon flow variables identified with a particular edge $e \in \edges$ and period $t \in [T]$; (2) The constraint set $\Y(\G^D(\boldtau, \boldalpha))$ likewise decouples across edges $e \in \edges$ and times $t \in [T]$.
This completes the proof of Lemma \ref{Lemma: DBCP Eq Decomposition, Chain Network}.
\end{proof}

Next, we show that DBCP equilibrium flows on a chain network can be characterized using the notion of the \textit{effective VoT}, which we define below.

\begin{definition}(\textbf{Effective VoTs under a DBCP policy}) 
\label{Def: Effective VoTs}
Given a $(\boldtau, \boldalpha)$-DBCP policy, we define the corresponding effective VoT, on each edge $e \in \edges$ at each period $t \in [T]$, for group $g \in G_e$ by $\hat v_{e,t}^g = \frac{v_t^g}{1 - \alpha_{e,t}}$ for each $g \in G_e^E$, and $\hat v_{e,t}^g =v_t^g$ for each $g \in G_e^I$.
\end{definition}

In other words, the VoT and effective VoT of each ineligible group $g \in G^I$ are identical, while the effective VoT of an eligible group $g \in G_e$ on edge $e \in \edges$ at period $t \in [T]$ equals its VoT scaled by $\frac{1}{1 - \alpha_{e,t}}$. To motivate the definition of the effective VoT, we recall that given a $(\boldtau, \boldalpha)$-DBCP policy, an admissible flow $\boldy \in \Y(\G^D(\boldtau, \boldalpha))$ is a $(\boldtau, \boldalpha)$-DBCP equilibrium flow if and only if $\boldy$ is a minimizer of the convex program \eqref{Eqn: Convex Program Statement, DBCP}. Using the definition of the effective VoT, we can rewrite the objective function in the convex program \eqref{Eqn: Convex Program Statement, DBCP} as follows:
{
\setlength{\abovedisplayskip}{4pt}
\setlength{\belowdisplayskip}{4pt}
\begin{align} \label{Eqn: Convex Program for DBCP, using effective VoT}
    &\sum_{t \in [T]} \sum_{e \in \edges} \left[ \sum_{k=1}^2 \int_0^{x_{e,k,t}} \ell_{e,k}(w) \hspace{0.5mm} dw  + \sum_{g \in \G_e} \frac{y_{e,1,t}^g \tau_{e,t}}{\hat v_{e,t}^g} \right],
\end{align}
}
\hspace{-1mm}where $\hat v_{e,t}^g$ is as defined in Def. \ref{Def: Effective VoTs}. In words, providing eligible users with discounts $\boldalpha := (\alpha_{e,t} \in [0, 1]: e \in \edges, t \in [T])$ achieves the same effect, on eligible users' equilibrium behavior, as the effect of artificially inflating each eligible user's VoT by the factor $\frac{1}{1 - \alpha_{e,t}}$ on each edge $e$ and period $t$. Loosely speaking, on each edge $e \in \edges$ at each period $t \in [T]$, each eligible user from group $g \in G_e$ \textit{effectively} takes action as though they face the same toll price $\tau_{e,t}$ as other ineligible users,
but have an inflated value of time given by $\frac{1}{1-\alpha_{e,t}} v_t^g$. We note that, although the VoT of each group does not vary across edges, the effective VoT of eligible groups under a DBCP policy may vary across edges on which different discount values are provided, since steeper discounts would increase eligible users' willingness to pay for express lane access.

We now present Lemma \ref{Lemma: DBCP Equilibria and Effective VoTs}, which establishes that a $(\boldtau, \boldalpha)$-DBCP equilibrium flow $\boldy$ must assign users, across groups $g \in G_e$ on each edge $e \in \edges$ and period $t \in [T]$, to the express lane in decreasing order of their effective VoT 
with respect to the corresponding $(\boldtau, \boldalpha)$-DBCP policy.

\begin{lemma} \label{Lemma: DBCP Equilibria and Effective VoTs}
Suppose Assumption \ref{Assum: Chain Network} holds. Let a $(\boldtau, \boldalpha)$-DBCP policy be given, where $\boldtau \in \R_{\geq 0}^{|\edges|T}$ and $\boldalpha \in [0, 1]^{|\edges|T}$, and let 
$\boldy$
be any $(\boldtau, \boldalpha)$-DBCP equilibrium flow. Then on each edge $e \in \edges$ and $t \in [T]$, $\boldy$ assigns users to the express lane in decreasing order of their effective VoT.
\end{lemma}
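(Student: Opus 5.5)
By Lemma \ref{Lemma: DBCP Eq Decomposition, Chain Network}, it suffices to fix a single edge $e \in \edges$ and period $t \in [T]$ and to study the restricted flow $y_{e,t} = (y_{e,k,t}^g : k \in [2], g \in G_e)$. This flow is an equilibrium of the single-edge $(\tau_{e,t}, \alpha_{e,t})$-DBCP game over the groups $G_e$. On a single edge each group $g \in G_e$ simply splits its demand $d^g$ between the two lanes, with the constraint $y_{e,1,t}^g + y_{e,2,t}^g = d^g$.

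The goal is to make precise the statement ``assigns users to the express lane in decreasing order of effective VoT'' as follows. If $g, g' \in G_e$ satisfy $\hat v_{e,t}^g > \hat v_{e,t}^{g'}$ and $y_{e,1,t}^{g'} > 0$, then $y_{e,2,t}^{g} = 0$. In words, a lower effective-VoT group uses the express lane only once every strictly higher effective-VoT group has been fully admitted. We may assume $\tau_{e,t} > 0$. Otherwise all groups face identical costs up to a positive scaling, and the ordering is vacuous: any split is consistent with ties, since at equilibrium every used lane has minimal latency.

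The key step is to rewrite the equilibrium condition using effective VoTs. The single-edge restriction of the variational inequality \eqref{Eqn: DBCP Equilibrium, Def} says that each group uses only cost-minimizing lanes. For any group $g$ we have $c_{e,1,t}^g = v_t^g \ell_e(x_{e,1,t}) + \tau_{e,t} v_t^g/\hat v_{e,t}^g$ and $c_{e,2,t}^g = v_t^g \ell_e(x_{e,2,t})$. This holds for eligible groups because $(1-\alpha_{e,t})\tau_{e,t} = \tau_{e,t} v_t^g / \hat v_{e,t}^g$, and trivially for ineligible groups. Dividing by $v_t^g > 0$, group $g$ uses the express lane only if
\[
\ell_e(x_{e,1,t}) - \ell_e(x_{e,2,t}) + \frac{\tau_{e,t}}{\hat v_{e,t}^g} \le 0,
\]
and uses the GP lane only if the reverse inequality ($\ge 0$) holds. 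The same conclusion also follows from the KKT conditions of the rewritten objective \eqref{Eqn: Convex Program for DBCP, using effective VoT}.

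Set $\Delta := \ell_e(x_{e,2,t}) - \ell_e(x_{e,1,t})$, which is a single scalar common to all groups. The conditions then read: $y_{e,1,t}^g > 0 \Rightarrow \hat v_{e,t}^g \ge \tau_{e,t}/\Delta$, and $y_{e,2,t}^g > 0 \Rightarrow \hat v_{e,t}^g \le \tau_{e,t}/\Delta$. We must take care with the case $\Delta \le 0$. In that case no group can use the express lane, since $\tau_{e,t} > 0$ makes the express-lane inequality impossible, so the claim holds trivially.

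When $\Delta > 0$, the threshold $\tau_{e,t}/\Delta$ is well defined. Groups strictly above it are fully on the express lane, groups strictly below it are fully on the GP lane, and only groups exactly at the threshold may split. The claimed ordering follows: if $y_{e,1,t}^{g'} > 0$ then $\hat v_{e,t}^{g'} \ge \tau_{e,t}/\Delta$. Any $g$ with $\hat v_{e,t}^g > \hat v_{e,t}^{g'}$ then lies strictly above the threshold, so $y_{e,2,t}^g = 0$.

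I expect the main obstacle to be this threshold step. It requires handling the edge cases $\Delta \le 0$ and $\alpha_{e,t} = 1$, where $\hat v_{e,t}^g = \infty$ and such eligible groups must be treated as always preferring the express lane whenever $\Delta \ge 0$. It also requires confirming that the single-edge reduction from Lemma \ref{Lemma: DBCP Eq Decomposition, Chain Network} carries the lane-choice optimality condition exactly. Since $\ell_e$ is strictly increasing, ties in lane costs are governed solely by $\Delta$, and the argument goes through.
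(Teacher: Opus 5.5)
Your argument is correct for $\tau_{e,t}>0$ and has the same skeleton as the paper's proof. You reduce to a single pair $(e,t)$ via Lemma~\ref{Lemma: DBCP Eq Decomposition, Chain Network}, then absorb the discount into the effective VoT, so that after normalizing by $v_t^g$ each group compares $\ell_e(x_{e,1,t})+\tau_{e,t}/\hat v_{e,t}^g$ with $\ell_e(x_{e,2,t})$.

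The difference is the final step:
\begin{itemize}
\item The paper notes that \eqref{Eqn: Convex Program for DBCP, using effective VoT} is the unsubsidized heterogeneous-VoT equilibrium program, and cites \citet[Prop.~2.4]{Cole2003PricingNetworkEdgesforHeterogeneousSelfishUsers} for the ordering.
\item You derive the ordering directly from the lane-level optimality conditions through the common threshold $\tau_{e,t}/\Delta$.
\end{itemize}
Your version is self-contained and exposes edge cases the citation hides; the paper's is shorter.

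There is one small slip in that part. When $\Delta=0$ and $\alpha_{e,t}=1$, the express-lane inequality reads $0\le 0$. So eligible groups with $\hat v_{e,t}^g=\infty$ \emph{can} use the express lane, contrary to your sentence that no group can. The conclusion survives, because no group has strictly larger effective VoT than they do.

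The more substantive correction concerns $\tau_{e,t}=0$. Under your own formalization this case is not vacuous: the claim is false there. With $\tau_{e,t}=0$, every equilibrium has $x_{e,1,t}=x_{e,2,t}$, and every split with equal lane totals is an equilibrium. For example, take two ineligible groups with VoTs $2$ and $1$ and unit demand, with the VoT-$1$ group entirely on the express lane and the VoT-$2$ group entirely on the GP lane. This is an equilibrium, yet it violates ``$\hat v_{e,t}^g>\hat v_{e,t}^{g'}$ and $y_{e,1,t}^{g'}>0$ imply $y_{e,2,t}^g=0$.''

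So you should state the result only for $(e,t)$ with $\tau_{e,t}>0$, or weaken the formalization to allow ties. The paper's one-line appeal to the ordering result glosses over the same point. However, the paper only uses the lemma where $\tau_{e,t}>0$, e.g., in the proof of Prop.~\ref{Prop: Assump lambda R geq lambda E removed}, where $\tau\ge M_\tau>0$.
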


\begin{proof}{Proof}
By Lemma \ref{Lemma: DBCP Eq Decomposition, Chain Network}, the computation of DBCP equilibrium flows over chain networks is decomposable across edges and periods. Thus, to prove Lemma \ref{Lemma: DBCP Equilibria and Effective VoTs}, it suffices to show that for any $(\tau_{e,t}, \alpha_{e,t})$-DBCP equilibrium flow $y_{e,t} := (y_{e,k,t}^g: g \in G_e, k \in [2])$ corresponding to the $(\tau_{e,t}, \alpha_{e,t})$-DBCP game on edge $e \in \edges$ at period $t \in [T]$, the flow $\boldy$ assigns users in groups $g \in G_e$ to the express lane in descending order of effective VoT. 
To establish this condition, observe that \eqref{Eqn: Convex Program for DBCP, using effective VoT} is precisely the convex program whose minimizers are Nash equilibrium flows under the setting in which all users are ineligible, and have VoTs replaced by effective VoTs across edges and periods. 
We then apply the fact that, without subsidy, users access tolled lanes in descending order of VoT \citep[Prop. 2.4]{Cole2003PricingNetworkEdgesforHeterogeneousSelfishUsers}.
\hfill $\square$


\end{proof}


Finally, we
establish Lemma \ref{Lemma: Main, Explicit Construction of DBCP Policy}, which states that given any $(\boldtau, \boldB)$-CBCP policy and corresponding equilibrium flow $\boldy$, we can construct a $(\boldtau, \boldalpha)$-DBCP policy and corresponding equilibrium flow $\hat \boldy$ with the following two properties across all edges $e$ and periods $t$: (1) The \textit{total eligible user} express lane flow assigned by $\boldy$ and $\hat \boldy$ are equal, and (2) the express lane flow \textit{for each ineligible group} assigned by $\boldy$ and $\hat \boldy$ are equal. Recall that DBCP equilibria on chain networks assign express lane access to eligible users in order of decreasing VoT (Lemma \ref{Lemma: DBCP Equilibria and Effective VoTs}). Lemma \ref{Lemma: Main, Explicit Construction of DBCP Policy} thus implies that across all edges and periods, $\hat \boldy$ allocates 
express lane access to ineligible users using the same assignment prescribed by $\boldy$, while allocating express lane access to eligible users in decreasing order of VoT until the total eligible user flow on the express lane matches the same amount prescribed by $\boldy$.
By definition of VoT, the same reduction in travel time translates to greater monetary savings for eligible users with higher VoTs compared to those with lower VoTs. As a result, Lemma \ref{Lemma: Main, Explicit Construction of DBCP Policy} implies that DBCP policies are \textit{more effective at lowering eligible users' total travel latency-associated costs} compared to CBCP policies,
an insight that will provide a critical contribution to our proof of Thm. \ref{Thm: Main}.

\begin{lemma} \label{Lemma: Main, Explicit Construction of DBCP Policy}
Suppose Assumptions \ref{Assum, h: zero sublevel set of h is bounded}-\ref{Assum: Eligible users cannot access express lane without subsidy under DBCP} hold. For any $\boldtau, \boldB \geq 0$, given a $(\boldtau, \boldB)$-CBCP equilibrium flow $\boldy \in \Y$ and corresponding lane flows $\boldx \in \R_{\geq 0}^{2|\edges|T}$ for which $(\boldy, \boldx, \boldtau, \boldB) \in S_{C, \boldlambda}^\star$, there exists $\boldalpha \in [0, 1]^{|\edges|T}$ and a $(\boldtau, \boldalpha)$-DBCP equilibrium flow $\hat \boldy$
such that $(\hat \boldy, \boldx, \boldtau, \boldalpha) \in S_{D, \boldlambda}^\star$ and:
{
\setlength{\abovedisplayskip}{4pt}
\setlength{\belowdisplayskip}{4pt}
\begin{align} \label{Eqn: Main Lemma, matching total eligible user express lane flow}
    \sum_{g \in G_e^E} \hat y_{e,1,t}^g &= \sum_{g \in G_e^E} y_{e,1,t}^g, \hspace{5mm} \forall \ e \in \edges, t \in [T], \\ \label{Eqn: Main Lemma, matching ineligible user express lane flow}
    \hat y_{e,1,t}^g &= y_{e,1,t}^g, \hspace{5mm} \forall \ e \in \edges, t \in [T], \ g \in G_e^I.
\end{align}
}
\end{lemma}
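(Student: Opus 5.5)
The plan is to build the DBCP policy edge by edge and period by period, keeping the tolls $\boldtau$ and the lane flows $\boldx$ of the given CBCP equilibrium unchanged. Lemma \ref{Lemma: DBCP Eq Decomposition, Chain Network} reduces the check that $\hat\boldy$ is a $(\boldtau,\boldalpha)$-DBCP equilibrium to a family of single-edge, single-period games. I read the conclusion "$(\hat\boldy,\boldx,\boldtau,\boldalpha)\in S_{D,\boldlambda}^\star$" as membership in the feasible set $S_{D,\boldlambda}$, which is what the comparison in Thm. \ref{Thm: Main} needs. Two of the three feasibility requirements are then immediate. First, $\hat\boldx=\boldx$ by \eqref{Eqn: Main Lemma, matching total eligible user express lane flow}--\eqref{Eqn: Main Lemma, matching ineligible user express lane flow} together with flow conservation on a chain. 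Second, $h(\hat\boldy,\boldtau)=h(\boldy,\boldtau)\le 0$ follows from Assumption \ref{Assum, h: Eligible user total express lane flow}. The real content is choosing $\alpha_{e,t}$ so that $\hat\boldy$ is an equilibrium.

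\emph{Construction.} Under Assumption \ref{Assum: Chain Network}, each group $g\in G_e$ has a unique route, so on edge $e$ at period $t$ it sends exactly $d^g$ across the two lanes. Fix $(e,t)$ and set $\Delta_{e,t}:=\ell_e(x_{e,2,t})-\ell_e(x_{e,1,t})$ and $m_{e,t}:=\sum_{g\in G_e^E}y^g_{e,1,t}$. For ineligible groups, define $\hat y^g_{e,k,t}:=y^g_{e,k,t}$. For eligible groups, sort $G_e^E$ by decreasing $v_t^g$ and fill the express lane greedily up to total mass $m_{e,t}$. This leaves at most one marginal group $g^\ast$ split between the lanes; if the filling ends exactly at a group boundary, take $g^\ast$ to be the last admitted group. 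All remaining eligible users go to the GP lane.

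Next choose $\alpha_{e,t}$, considering three cases.
\begin{itemize}
\item If $\tau_{e,t}=0$, any $\alpha_{e,t}$ works. Eligible and ineligible users then face identical costs, and the CBCP conditions force $\Delta_{e,t}=0$ whenever both lanes carry flow.
\item If $\tau_{e,t}>0$ and $m_{e,t}>0$, set $(1-\alpha_{e,t})\tau_{e,t}=v_t^{g^\ast}\Delta_{e,t}$. Then every eligible user on the express lane has $(1-\alpha_{e,t})\tau_{e,t}\le v_t^g\Delta_{e,t}$, and every eligible user on the GP lane has the reverse inequality, which is exactly the single-edge equilibrium condition.
\item If $\tau_{e,t}>0$ and $m_{e,t}=0$, take $\alpha_{e,t}=0$.
\end{itemize}
The ineligible equilibrium inequalities transfer verbatim from the CBCP equilibrium. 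This is because ineligible costs in \eqref{Eqn: Costs, CBCP} and \eqref{Eqn: Costs, DBCP} coincide, the lane flows are the same, and on a chain the CBCP variational inequality for an ineligible group decouples edge by edge.

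\emph{Main obstacle.} The hard step is showing that the chosen $\alpha_{e,t}$ lies in $[0,1]$, and that in the case $m_{e,t}=0$ no eligible user wants to deviate to the express lane. Two facts are needed.

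\textbf{(i) $\Delta_{e,t}\ge 0$ when $m_{e,t}>0$.} In CBCP an eligible user on the express lane could move to the GP lane, which only frees budget. So the CBCP variational inequality \eqref{Eqn: CBCP Equilibrium, Def} forces $\ell_e(x_{e,1,t})\le\ell_e(x_{e,2,t})$. This gives $\alpha_{e,t}\le 1$.

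\textbf{(ii) $v_t^{g}\Delta_{e,t}\le\tau_{e,t}$ for every eligible $g$.} This gives $\alpha_{e,t}\ge 0$, and in the case $m_{e,t}=0$ it is exactly the no-deviation condition. I would derive it from Assumption \ref{Assum: Eligible users cannot access express lane without subsidy under DBCP} by contradiction.
\begin{itemize}
\item Suppose $v_t^{g}\Delta_{e,t}>\tau_{e,t}$ for some eligible $g$.
\item Consider the single-edge $(\tau_{e,t},0)$-DBCP game on $G_e$ and take its equilibrium, which exists by Prop. \ref{Prop: Existence of DBCP Equilibria}.
\item By Lemma \ref{Lemma: DBCP Equilibria and Effective VoTs}, this equilibrium admits users to the express lane in decreasing VoT order. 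By Assumption \ref{Assum: Eligible users cannot access express lane without subsidy under DBCP}, it admits no eligible user. Its express lane therefore carries only ineligible flow, and at its threshold $\tau_{e,t}/\Delta'$, with $\Delta'$ the latency gap in that equilibrium, every eligible VoT lies below the cutoff.
\item Compare this with the CBCP configuration, which has at least as much express flow and hence a smaller gap, $\Delta_{e,t}\le\Delta'$. Since $g$ is eligible and admitted to no express flow in that equilibrium, $v_t^{g}\Delta'\le\tau_{e,t}$, and so $v_t^{g}\Delta_{e,t}\le v_t^{g}\Delta'\le\tau_{e,t}$. This contradicts the supposition.
\end{itemize}
I expect the monotonicity comparison $\Delta_{e,t}\le\Delta'$ between the two configurations to need the most care. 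I would establish it through the ineligible threshold structure: the set of ineligible users on the express lane is monotone in $\Delta$, and the latencies $\ell_e$ are strictly increasing.

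Once (i) and (ii) are in place, assembling the pieces over all $(e,t)$ through Lemma \ref{Lemma: DBCP Eq Decomposition, Chain Network} completes the argument.
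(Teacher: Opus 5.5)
Your proposal is correct and uses the paper's construction: keep $\boldtau$ and the ineligible lane assignments, refill the eligible express-lane mass in decreasing order of VoT, and pick $\alpha_{e,t}$ with $(1-\alpha_{e,t})\tau_{e,t}=v_t^{g^\ast}\Delta_{e,t}$. For $m_{e,t}>0$ this is exactly the paper's $\alpha_{e,t}=1-\bar v^E_{e,t}/\bar v^I_{e,t}$ in its Case 1, and it reduces to the paper's $\alpha_{e,t}=1$ in Case 2, where $\Delta_{e,t}=0$. Reading $S^\star_{D,\boldlambda}$ as feasibility in $S_{D,\boldlambda}$ is also right: that is all the paper's proof establishes, and all Thm.~\ref{Thm: Main} uses.

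The one real difference is how you obtain $v_t^g\Delta_{e,t}\le\tau_{e,t}$ for eligible $g$, which is the paper's $\bar v^E_{e,t}<\bar v^I_{e,t}$. The paper notes that groups with $v_t^g\ge\bar v^I_{e,t}$ would use the express lane unsubsidized at the CBCP lane flows $\boldx$, and then invokes Assumption~\ref{Assum: Eligible users cannot access express lane without subsidy under DBCP} directly. But that assumption only concerns $(\boldtau,\boldzero)$-DBCP equilibria, whose lane flows generally differ from $\boldx$. Your comparison with an actual $(\tau_{e,t},0)$-DBCP equilibrium supplies the missing link. You should embed that equilibrium network-wide via Lemma~\ref{Lemma: DBCP Eq Decomposition, Chain Network} so the assumption literally applies.

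The monotonicity step works as you expect. Suppose $\Delta_{e,t}>\Delta'$. Then every ineligible group with $v\Delta'\ge\tau_{e,t}$ has $v\Delta_{e,t}>\tau_{e,t}$ and sits entirely on the CBCP express lane. The DBCP express lane holds only such groups, so $x_{e,1,t}\ge x'_{e,1,t}$. This contradicts the strict decrease of $x\mapsto\ell_e(d_e-x)-\ell_e(x)$, which forces $x_{e,1,t}<x'_{e,1,t}$.

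The only cost is strictness: you get $\alpha_{e,t}\in[0,1]$ rather than $(0,1)$. That is all the lemma needs, and Cor.~\ref{Cor: Main Thm, Strictness} only uses $\alpha_{e,t}<1$, which follows from $\Delta_{e,t}>0$.

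One small tidy-up: for $\tau_{e,t}=0$, state that $\Delta_{e,t}=0$ holds outright, not only when both lanes carry flow. At zero toll every user, eligible or not, can feasibly switch lanes in either direction, so a configuration with only one lane occupied and $d_e>0$ cannot be a CBCP equilibrium.
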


\begin{proof}{Proof Sketch}
By Lemmas \ref{Lemma: DBCP Eq Decomposition, Chain Network} and \ref{Lemma: DBCP Equilibria and Effective VoTs}, the DBCP equilibrium flows on a chain network assign eligible users to express lanes in descending order of VoT, and can be computed in a manner that is decomposable across edges and periods. Thus, we arbitrarily select an edge $e \in \edges$ and $t \in [T]$, and judiciously select a discount level $\alpha_{e,t} \in [0, 1]$ such that \eqref{Eqn: Main Lemma, matching total eligible user express lane flow} and \eqref{Eqn: Main Lemma, matching ineligible user express lane flow} hold. Specifically, we choose $\alpha_{e,t} \in [0, 1]$ such that the $(\tau_{e,t}, \alpha_{e,t})$-DBCP policy admits an equilibrium flow $\hat \boldy$ which assigns eligible users from groups $G_e^E$ to the express lane in descending order of VoT until the total eligible users' express lane flow equals the level prescribed by $\boldy$, i.e., $\sum_{g \in G_e^E} \hat y_{e,1,t}^g = \sum_{g \in G_e^E} y_{e,1,t}^g$, while preserving the same assignment of ineligible users to the express lane as the assignment provided by $\boldy$, i.e., $\hat y_{e,1,t}^g = y_{e,1,t}^g$ for each $g \in G_e^I$.
\hfill $\square$
\end{proof}

For the full proof of Lemma \ref{Lemma: Main, Explicit Construction of DBCP Policy}, see App. \ref{subsec: App, Proof of Lemma, Main, Explicit Construction of DBCP Policy}.

Armed with Lemma \ref{Lemma: Main, Explicit Construction of DBCP Policy}, we now outline the proof of Thm. \ref{Thm: Main}. 


\begin{proof}{Proof Sketch for Thm. \ref{Thm: Main}}
First, we use Lemma \ref{Lemma: Main, Explicit Construction of DBCP Policy} to establish that compared to CBCP policies, DBCP policies allow eligible users to attain lower travel latency-associated costs on each edge $e$ at each period $t$. Moreover, we establish that compared to CBCP policies, DBCP policies generate equivalent or higher toll revenue by compelling eligible users to pay a fraction of the express lane tolls.
When $\lambda_R \geq \lambda_E$, higher toll revenue is considered a net societal benefit, and thus translates to lower societal cost. We thus conclude that, under Assumptions \ref{Assum, h: zero sublevel set of h is bounded}-\ref{Assum: Eligible users cannot access express lane without subsidy under DBCP}, if $\lambda_R \geq \lambda_E$, DBCP policies outperform CBCP policies at minimizing the societal cost, i.e., $f_{D, \boldlambda}^\star \leq f_{C, \boldlambda}^\star$.
\hfill $\square$
\end{proof}

For the full proof of Thm. \ref{Thm: Main}, see App. \ref{subsec: App, Proof of Thm, Main}.

In addition to directions of future research presented in Sec. \ref{subsec: Discussion and Future Research Directions}, it would be worthwhile to explore whether Thm. \ref{Thm: Main} holds under model extensions 
in which the traffic network has arbitrary structure, or in which the user population has a continuous VoT distribution.

\subsection{When CBCP Can Outperform DBCP in Societal Cost Minimization}
\label{subsec: When CBCP Can Outperform DBCP}

Thm. \ref{Thm: Main} and Cor. \ref{Cor: Main Thm, Strictness} state that DBCP outperforms CBCP in minimizing the equilibrium value of a societal cost whose weights $\boldlambda$ satisfies $\lambda_R \geq \lambda_E$, provided that the traffic network and user population satisfy Assumptions \ref{Assum, h: zero sublevel set of h is bounded}-\ref{Assum: Eligible users cannot access express lane without subsidy under DBCP}.
We now show that if the condition $\lambda_R \geq \lambda_E$ were violated,
i.e., if $\lambda_R < \lambda_E$,
the conclusions of Thm. \ref{Thm: Main} may not hold, i.e., the optimal CBCP policy may outperform the optimal DBCP policy in minimizing societal costs. 

\begin{proposition} \label{Prop: Assump lambda R geq lambda E removed}
Consider a single-edge network $\network = (E, I)$, with $\edges = \{e\}$ and $I = \{i_e, j_e\}$, 
serving
ineligible groups $G_e^I$ and a single eligible group $g^E$. 
Suppose $T = 1$, $\lambda_R < \lambda_E$,
$S_{C, \boldlambda}, S_{D, \boldlambda} \ne \emptyset$,
and 
Assumptions \ref{Assum, h: zero sublevel set of h is bounded}-\ref{Assum: Eligible users cannot access express lane without subsidy under DBCP}
hold. Moreover, suppose the constraint $h(\cdot) \leq 0$ in \eqref{Eqn: f C lambda star, Def} and \eqref{Eqn: f D lambda star, Def} enforces that, for some $m_y^I$, $M_y^I$, $m_y^E > 0$, $M_y^E \in (0, d^{g^E})$ and $M_\boldx \in \left[0, \frac{1}{2}d_e \right)$:
{
\setlength{\abovedisplayskip}{3pt}
\setlength{\belowdisplayskip}{1pt}
\begin{subequations}
\begin{align} \label{Eqn: Counterexample 1, Eligible y bound}
    m_y^E &\leq y_{e,1,1}^{g^E} \leq M_y^E, \\ \label{Eqn: Counterexample 1, Ineligible y bound}
    m_y^I &\leq \sum_{g \in G_e^I} y_{e,1,1}^g \leq M_y^I, \\ \label{Eqn: Counterexample 1, x bound}
    x_{e,1,1} &\leq M_x.
\end{align}
\end{subequations}
}
Then there exists some $M_{\boldtau} > 0$ and $m_\alpha \in (0, 1)$ such that:
{
\setlength{\abovedisplayskip}{2pt}
\setlength{\belowdisplayskip}{2pt}
\begin{align} \label{Eqn: Counterexample 1, optimal f C bound}
    f_{C, \boldlambda}^\star &\leq f_{D, \boldlambda}^\star - (\lambda_E - \lambda_R) (1 - m_\alpha) M_{\boldtau} m_y^E < f_{D, \boldlambda}^\star.
\end{align}
}
\end{proposition}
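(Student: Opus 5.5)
The plan is to start from an optimal DBCP policy and turn it into a CBCP policy that induces the \emph{same} flow. We then compare the two societal costs term by term, and bound the one term that differs away from zero using the constraints \eqref{Eqn: Counterexample 1, Eligible y bound}--\eqref{Eqn: Counterexample 1, x bound}.

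\textbf{Step 1 (pick an optimal DBCP policy).} Since $S_{D,\boldlambda} \neq \emptyset$ and Assumption \ref{Assum, h: zero sublevel set of h is bounded} holds, Prop. \ref{Prop: DBCP Minimizer Set is Non-empty} gives a minimizer $(\hat\boldy, \hat\boldx, \hat\boldtau, \boldalpha) \in S_{D,\boldlambda}^\star$. Write $\tau := \hat\tau_{e,1}$, $\alpha := \alpha_{e,1}$ and $\hat y^E := \hat y^{g^E}_{e,1,1}$.

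\textbf{Step 2 (a uniform lower bound on the out-of-pocket toll).}
\begin{itemize}
\item By \eqref{Eqn: Counterexample 1, x bound}, $\hat x_{e,1,1} \leq M_x < \tfrac12 d_e$, so $\hat x_{e,2,1} \geq d_e - M_x > M_x$.
\item Hence the express lane is strictly faster, with gap at least $\Delta := \ell_e(d_e - M_x) - \ell_e(M_x) > 0$, using strict monotonicity of $\ell_e$.
\item By \eqref{Eqn: Counterexample 1, Eligible y bound}, $\hat y^E \leq M_y^E < d^{g^E}$, so a positive mass of the eligible group uses the GP lane.
\item The DBCP equilibrium condition \eqref{Eqn: DBCP Equilibrium, Def} then forces these users to weakly prefer the GP lane, i.e.
\begin{align*}
(1-\alpha)\tau \;\geq\; v^{g^E}_1 \big(\ell_e(\hat x_{e,2,1}) - \ell_e(\hat x_{e,1,1})\big) \;\geq\; v^{g^E}_1 \Delta .
\end{align*}
\end{itemize}
Fix any $m_\alpha \in (0,1)$ and set $M_{\boldtau} := v^{g^E}_1 \Delta / (1-m_\alpha) > 0$. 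Then $(1-\alpha)\tau \geq (1-m_\alpha) M_{\boldtau}$ for every feasible DBCP point, and in particular for the optimal one.

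\textbf{Step 3 (construct a matching CBCP policy).} Take tolls $\tau$ and budget $B^{g^E} := \tau \hat y^E$. We claim $\hat\boldy$ is a $(\tau, B)$-CBCP equilibrium.
\begin{itemize}
\item \emph{Feasibility.} $\hat\boldy$ satisfies the budget constraint with equality.
\item \emph{Ineligible groups.} Their costs in \eqref{Eqn: Costs, CBCP} coincide with those in \eqref{Eqn: Costs, DBCP} at the same lane flows. So the DBCP equilibrium inequalities for them transfer verbatim.
\item \emph{The eligible group.} Its CBCP cost is pure latency, and the express lane is strictly faster. So any feasible deviation can only lower its cost by moving more flow to the express lane, which the saturated budget forbids (if $\tau > 0$, which Step 2 guarantees). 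The inequality \eqref{Eqn: CBCP Equilibrium, Def} therefore holds.
\item \emph{Constraint $h$.} By Assumption \ref{Assum, h: Eligible user total express lane flow}, $h$ depends only on total eligible flows, ineligible flows and tolls, all of which are unchanged. So $h(\hat\boldy, \hat\boldtau) \leq 0$, and the point lies in $S_{C,\boldlambda}$.
\end{itemize}

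\textbf{Step 4 (compare costs).} At identical flows and tolls, the $\lambda_I$ terms and the latency parts of the $\lambda_E$ terms in \eqref{Eqn: f C lambda, Def} and \eqref{Eqn: f D lambda, Def} coincide, as does the ineligible part of the revenue term. What remains is
\begin{align*}
f_{D,\boldlambda} - f_{C,\boldlambda} = (\lambda_E - \lambda_R)(1-\alpha)\tau\, \hat y^E \;\geq\; (\lambda_E - \lambda_R)(1-m_\alpha) M_{\boldtau} m_y^E ,
\end{align*}
using Step 2 together with $\hat y^E \geq m_y^E$ from \eqref{Eqn: Counterexample 1, Eligible y bound}. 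Since $f^\star_{C,\boldlambda} \leq f_{C,\boldlambda}$ at this CBCP point and $f_{D,\boldlambda} = f^\star_{D,\boldlambda}$ at the chosen DBCP point, \eqref{Eqn: Counterexample 1, optimal f C bound} follows. The inequality is strict because $\lambda_E > \lambda_R$ and $m_y^E > 0$.

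\textbf{Main obstacle.} The delicate step is Step 2: producing a lower bound on the effective toll $(1-\alpha)\tau$ that holds uniformly over feasible DBCP policies and does not depend on the particular optimizer. It hinges on the eligible group being only partially on the express lane ($M_y^E < d^{g^E}$), combined with the strict latency gap from $M_x < d_e/2$. A secondary check is the CBCP equilibrium verification in Step 3, which needs $\tau > 0$ so that the budget actually binds.
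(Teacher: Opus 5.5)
Your proof is correct. Its skeleton matches the paper's:
\begin{enumerate}
\item start from a feasible DBCP point;
\item keep the toll and set $B^{g^E}=\tau\, y^{g^E}_{e,1,1}$;
\item verify that the same flow is a CBCP equilibrium;
\item note that the two societal costs differ exactly by $(\lambda_E-\lambda_R)(1-\alpha)\tau\, y^{g^E}_{e,1,1}$.
\end{enumerate}

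The genuine difference is how you lower-bound the effective toll $(1-\alpha)\tau$. The paper proves two separate bounds by contradiction, with $\Delta=\ell_e(d_e-M_x)-\ell_e(M_x)$ as in your Step 2:
\begin{itemize}
\item $\tau\ge M_{\boldtau}:=\bar v_\epsilon\,\Delta$, since otherwise every group whose VoT is at least a threshold $\bar v_\epsilon$ would strictly prefer the express lane and push $x_{e,1,1}$ above $M_x$;
\item $\alpha\le m_\alpha:=1-v^{g^E}/\bar v^I_{\max}$, since otherwise the effective-VoT ordering of Lemma \ref{Lemma: DBCP Equilibria and Effective VoTs} would put the eligible group fully on the express lane. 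Assumption \ref{Assum: Eligible users cannot access express lane without subsidy under DBCP} is invoked to place $m_\alpha$ in $(0,1)$.
\end{itemize}
You instead read off the single equilibrium inequality of the partially served eligible group, which gives $(1-\alpha)\tau\ge v^{g^E}_1\Delta$ in one line. Your bound is shorter. It is at least as strong as the paper's product bound $(1-m_\alpha)M_{\boldtau}=\tfrac{v^{g^E}}{\bar v^I_{\max}}\,\bar v_\epsilon\,\Delta$, because $\bar v_\epsilon\le\bar v^I_{\max}$. It also needs neither the ordering lemma, nor Assumption \ref{Assum: Eligible users cannot access express lane without subsidy under DBCP}, nor the lower bound $m_y^I$.

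What the paper's route buys is meaning for the constants. Its $M_{\boldtau}$ and $m_\alpha$ are a toll floor and a discount ceiling valid separately at every feasible DBCP point. Your $m_\alpha$ is arbitrary, and only the product $(1-m_\alpha)M_{\boldtau}$ carries content. That is all the statement asks for.

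Your Step 3 treatment of the eligible group (binding budget with $\tau>0$, plus a strictly faster express lane) is the right justification. The paper's write-up of that step rests on $\Y^g(\G^C)\subseteq\Y^g(\G^D)$ alone. That inclusion does not by itself account for the eligible group's express-lane cost differing between the two games.

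Two minor redundancies:
\begin{itemize}
\item Step 1 does not need the existence of a DBCP minimizer. Any feasible DBCP point followed by an infimum suffices, as in the paper.
\item Assumption \ref{Assum, h: Eligible user total express lane flow} is unnecessary for the $h$-constraint, since the flow and tolls are literally unchanged.
\end{itemize}
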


In words, Prop. \ref{Prop: Assump lambda R geq lambda E removed} states that if $\lambda_R < \lambda_E$, then even in the single-edge ($\edges = \{e\}$) networks and single-period ($T = 1$) setting, there exists mild conditions under which CBCP policies would provably outperform DBCP policies at minimizing the societal cost. Indeed, as described in Sec. \ref{subsec: When DBCP Outperforms CBCP}, the superior performance of DBCP policies in minimizing the societal cost compared to CBCP policies when $\lambda_R \geq \lambda_E$ is contingent upon the following two facts: (1) DBCP assigns eligible users to express lanes in order of decreasing VoT, which lowers the total travel latency-related cost incurred by eligible users, and (2) DBCP generates more revenue from toll collection, which results in a lower societal cost when $\lambda_R \geq \lambda_E$. 
Prop. \ref{Prop: Assump lambda R geq lambda E removed} formalizes the intuition that when the two facts above fail to hold (specifically, when eligible users have the same VoT and $\lambda_R < \lambda_E$), the optimal CBCP policy can attain a lower societal cost than the optimal DBCP policy.

The proof of Prop. \ref{Prop: Assump lambda R geq lambda E removed} follows by reversing the DBCP policy construction process that underlies the proofs of Lemma \ref{Lemma: Main, Explicit Construction of DBCP Policy}, Thm. \ref{Thm: Main}, and Cor. \ref{Cor: Main Thm, Strictness}. Rather than construct DBCP policies from arbitrarily fixed CBCP policies, we instead show that for any DBCP policy, we can construct a CBCP policy that attains a strictly lower societal cost at equilibrium.

For the full proof of Prop. \ref{Prop: Assump lambda R geq lambda E removed}, see App. \ref{subsec: App, Proof of Thm, Assump lambda R geq lambda E removed}.


\section{Numerical Experiments}
\label{sec: Numerical Experiments}

In this section, we perform numerical experiments to compare the effects of deploying CBCP and DBCP based on a real-world case study of the 101 Express Lanes Project in the San Francisco Bay Area \citep{CBCP-SanMateo}.
First, in Sec. \ref{subsec: Network Formulation and Calibration}, we 
construct our network model, estimate user demand and VoT distributions, calibrate latency functions across network edges, and present optimization algorithms for computing optimal CBCP and DBCP policies to first-order stationarity, as well as corresponding equilibrium flows.
Our numerical results, which we analyze in Sec. \ref{subsec: Equilibrium Comparisons Between Stationary CBCP and DBCP Policies}, validate our theoretical results in Sec. \ref{subsec: When DBCP Outperforms CBCP} regarding the relative benefits of DBCP over CBCP policies, and provide substantive insight into the relative impact of CBCP and DBCP policies on societal welfare.
%
We conclude in Sec. \ref{subsec: Discussion and Future Research Directions} by discussing the policy implications of our theoretical and numerical results, and presenting avenues for future research.
All code associated with this paper has been attached with the submission.
\hspace{-6mm}

\subsection{Network Formulation and Calibration}
\label{subsec: Network Formulation and Calibration}

We construct a traffic network $\network_\ELP = (V_\ELP, E_\ELP)$, containing a set of nodes $V_\ELP$ and a set of consecutive edges $E_\ELP$, 
wherein each edge describes traffic on the 101-N freeway segment passing through one of the cities
in $E_\ELP = \{$\text{Palo Alto}, \text{East Palo Alto}, \text{Redwood City}, \text{Belmont}, \text{San Mateo}, \text{Burlingame}, \text{Millbrae}$\}$ (see
Fig. \ref{fig: Chain Network}).
Each edge $e \in \edges_\ELP$ represents a two-lane Pigou network, whose first and second lanes respectively describe a tolled express lane and a toll-free GP lane. As noted in Remark \ref{Remark: Latency Function Model for Multiple GP Lanes}, our framework can be generalized to capture the fact that the real-world 101-N freeway 
contains
multiple GP lanes (e.g., $n_{GP} = 3$).

We then calibrate latency functions for each lane across $\network_\ELP$
using weekday vehicle flow and travel speed data collected from the Caltrans' Performance Measurement System (PeMS) database \citep{pems-database}. We also estimate user demand and VoT distributions across origin-destination pairs using vehicle flow data collected from the PeMS database, and 2020 US Census American Community Survey (ACS) data. For details, see App. \ref{subsubsec: App, Latency Calibration}-\ref{subsubsec: App, User Value-of-Time (VoT)}.
We then use the inferred VoT and demand values to formulate cost and constraint functions for the optimization programs \eqref{Eqn: f C lambda star, Def} and \eqref{Eqn: f D lambda star, Def}, which characterize optimal CBCP and DBCP policies.
Specifically, we solve \eqref{Eqn: f C lambda star, Def} and \eqref{Eqn: f D lambda star, Def} with the societal cost parameterized by Pareto weights $\boldlambda$ in the set
$S_{\boldlambda}^\paren{1} \cup S_{\boldlambda}^\paren{2}$ 
considered in App. \ref{sec: App, Data Table for Numerical Experiments}, 
Table \ref{Table: (Avg) All equilibrium results, for all lambda}.


To formulate the constraint $h(\cdot)$ in \eqref{Eqn: f C lambda star, Def, Constraint of h} and \eqref{Eqn: f D lambda star, Def, Constraint of h}, we first fix an upper bound $C_\tau = \$5$ for the toll $\tau_{e,t}$ deployed on each edge $e \in \edges$ at each period $t \in [T]$.
We note that the specific value of $C_\tau$ does not affect our numerical results, provided that $C_\tau$ is reasonably large.
We then define $h(\cdot)$ to constrain all $\tau_{e,t}$ across edges $e \in \edges$ and times $t \in [T]$ to be between $\$0$ and $\$5$. 

We note that \eqref{Eqn: f C lambda star, Def} and \eqref{Eqn: f D lambda star, Def}, which characterize optimal CBCP and DBCP policies and their corresponding equilibria, are in general non-convex bilevel optimization problems whose global minimizers are computationally intractable to compute.
To circumvent
these challenges,
we formulate and execute a refinement of the iterative, zeroth-order optimization algorithm presented in \citet{Maheshwari2024FollowerAgnosticLearninginStackelbergGames} (see App. \ref{subsec: App, Algorithm Details for Computing First-Order Stationary DBCP and CBCP Policies}, Alg. \ref{Alg: Opt CBCP, Zeroth-order Optimization} and \ref{Alg: Opt DBCP, Zeroth-order Optimization}). 
Although zeroth-order algorithms are, in theory, only guaranteed to converge with high probability to first-order stationary points, 
such methods have in practice recovered locally optimal solutions 
for bilevel optimization problems across a wide range of engineering contexts \citep{Maheshwari2024FollowerAgnosticLearninginStackelbergGames, Zhang2024AnIntroductiontoBilevelOptimization}.
Accordingly,
our empirical results should be interpreted as comparisons between first-order stationary DBCP and CBCP policies computed via analogous optimization methods, rather than as exact comparisons between globally optimal DBCP and CBCP policies.


\subsection{Equilibrium Comparisons 
Between Stationary CBCP and DBCP Policies}
\label{subsec: Equilibrium Comparisons Between Stationary CBCP and DBCP Policies}

To validate our theoretical results in Sec. \ref{sec: CBCP and DBCP Comparison Study for the Chain Network Setting}, we use Algs. \ref{Alg: Opt CBCP, Zeroth-order Optimization}-\ref{Alg: Opt DBCP, Zeroth-order Optimization} to compute toll, budget, and discount values for stationary DBCP and CBCP policies, as well as the corresponding equilibrium user flows $\boldy$ and lane flows $\boldx$. 
For each stationary DBCP and CBCP policy, we contrast their efficiency and equity impact by comparing the corresponding equilibrium total societal cost, toll revenues, user costs, travel times, and express lane usage (see Fig. \ref{fig: Figures___Metrics_Comparison_CBCP_vs_DBCP} and 
Table \ref{Table: (Avg) All equilibrium results, for all lambda}
).

\begin{figure}[ht]
    \vspace{-3mm}
    \centering
    \includegraphics[width=0.79\linewidth]{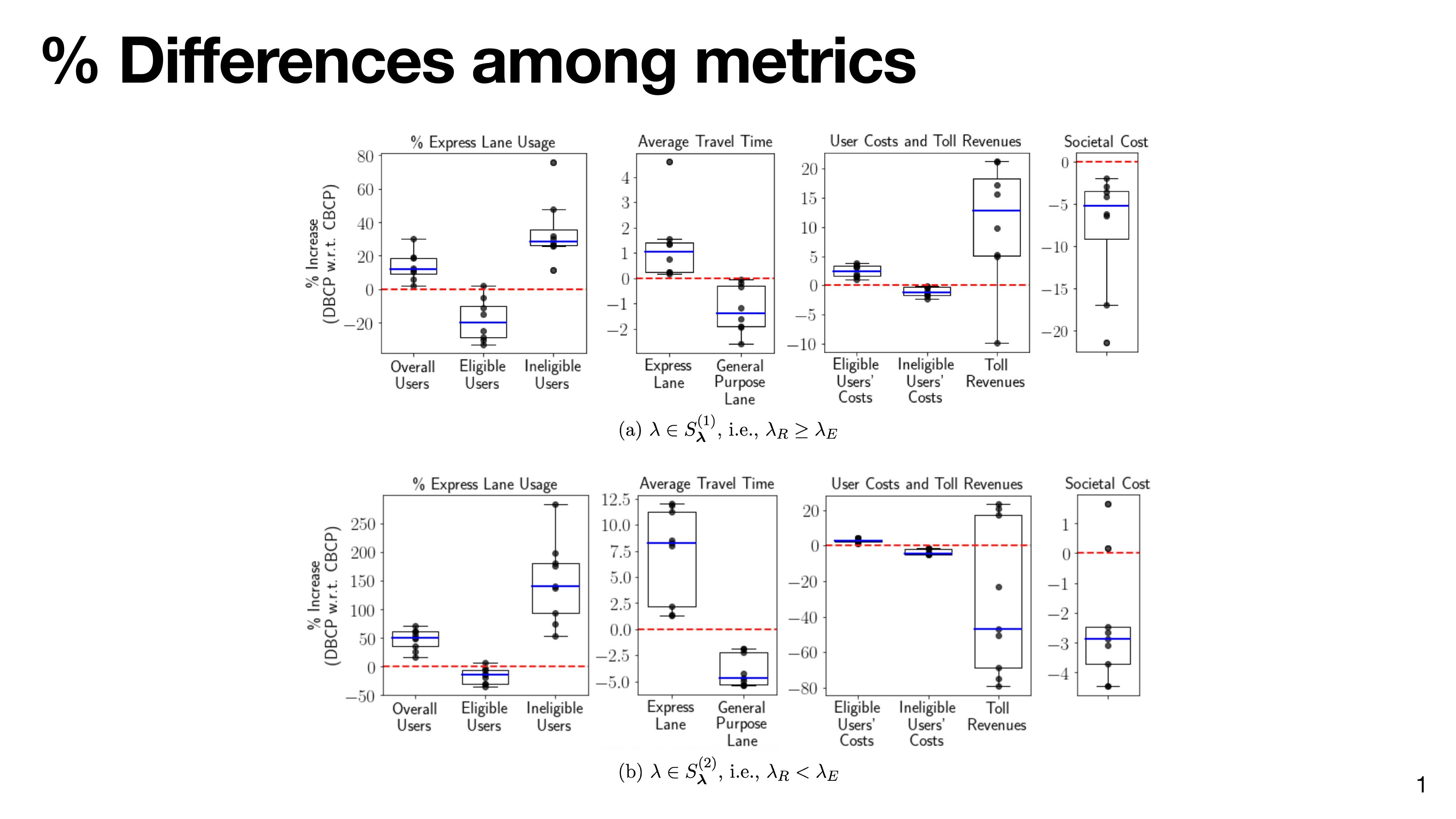}
    \caption{\small \sf Percent increase in the following metrics
    at equilibrium under the stationary DBCP policy, compared to the stationary CBCP policy, for (a) each $\boldlambda \in S_{\boldlambda}^\paren{1}$, and (b) each $\boldlambda \in S_{\boldlambda}^\paren{2}$: Overall, eligible, and ineligible users' express lane usage, average travel time on the express and GP lanes over $T = 5$ weekdays, eligible/ineligible users' costs, toll revenues, and societal costs.
    }
    \vspace{-3mm}
    \label{fig: Figures___Metrics_Comparison_CBCP_vs_DBCP}
\end{figure}

\subsubsection{Societal Cost, Toll Revenues, and User Travel Costs}
\label{subsubsec: Societal Cost, Toll Revenues, and User Travel Costs}


Our numerical results in 
Table \ref{Table: (Avg) All equilibrium results, for all lambda}
validate Thm. \ref{Thm: Main} by illustrating that, 
when
$\lambda_R \geq \lambda_E$, the stationary DBCP policy generates a lower societal cost compared to the stationary CBCP policy, primarily by generating higher toll revenues and lower ineligible user costs while not significantly impacting eligible user costs. 
For instance, for $\boldlambda = (1, 5, 1)$, compared to the stationary CBCP policy,
the stationary DBCP policy
generated equilibrium flows at which the toll revenues rose by $5.3\%$, ineligible users' travel costs decreased by $1.7\%$, and eligible users' travel costs increased by $1.5\%$,
resulting in a net societal cost decrease of $3.0\%$.
Our computed first-order stationary DBCP policies often yield higher toll revenues and lower ineligible user costs compared to their CBCP counterparts, since they often 
provide more express lane access to ineligible users
at equilibrium (see Sec. \ref{subsubsec: Express Lane Usage}).

Moreover, when $\lambda_R \geq \lambda_E$, increases in $\lambda_R$ and $\lambda_E$ relative to $\lambda_I$ amplify the extent to which the stationary DBCP policy outperforms its CBCP counterpart in attaining lower equilibrium societal costs, primarily by generating a greater rise in toll revenues and assigning higher priority to toll revenue generation.
For instance, whereas the stationary DBCP policy for $\boldlambda = (1, 5, 1)$ produced a $5.3\%$ increase in toll revenues and $3.0\%$ reduction in societal cost compared to its CBCP counterpart, the stationary DBCP policy for $\boldlambda = (10, 10, 1)$ produced a $21.1\%$ rise in toll revenues and a $6.5\%$ reduction in societal cost compared to its CBCP counterpart.

\looseness=-1
Next, 
when $\lambda_R < \lambda_E$, the stationary DBCP policy \textit{can but does not always} produce a lower equilibrium societal cost compared to the stationary CBCP policy (see Table \ref{Table: (Avg) All equilibrium results, for all lambda}).
In particular, when $\lambda_E$ slightly exceeds $\lambda_R$, compared to the stationary CBCP policy, the stationary DBCP policy may generate sufficiently higher toll revenues (e.g., when $\boldlambda = (5, 1, 0)$) or lower ineligible user costs (e.g., when $\boldlambda = (5, 1, 1)$) at equilibrium to yield a net decrease in societal cost.
However, when $\lambda_E$ far exceeds $\lambda_R$, stationary CBCP policies may outperform their DBCP counterpart in attaining a lower equilibrium societal cost (see Prop. \ref{Prop: Assump lambda R geq lambda E removed}), primarily by lowering eligible user costs.
For example, at $\boldlambda = (20, 1, 0)$, the stationary DBCP policy  generated a $2.7\%$ increase in eligible user costs at equilibrium compared to its CBCP counterpart, which contributed to a $1.7\%$ rise in societal cost, despite lower ineligible user costs and higher toll revenues.

\subsubsection{Express Lane Usage}
\label{subsubsec: Express Lane Usage}

We observe from 
Table \ref{Table: (Avg) All equilibrium results, for all lambda}
and Fig. \ref{fig: Figures___Metrics_Comparison_CBCP_vs_DBCP} that across all $\boldlambda \in S_{\boldlambda}^\paren{1} \cup S_{\boldlambda}^\paren{2}$, the stationary DBCP policy for $\boldlambda$ provided express lane access at equilibrium to a \textit{higher proportion of both overall users} (to generate higher toll revenues) \textit{and ineligible users} (to reduce ineligible user costs) compared to its CBCP counterpart.
For instance, under the stationary DBCP (resp., CBCP) policy for $\boldlambda = (5, 5, 1) \in S_{\boldlambda}^\paren{1}$, 

Table \ref{Table: (Avg) All equilibrium results, for all lambda}
also indicates that,
compared to their CBCP counterparts,
the stationary DBCP policies for $\boldlambda \in S_{\boldlambda}^\paren{1} \cup S_{\boldlambda}^\paren{2}$ \textit{often but not always} reallocated a small portion of eligible users' express lane access to ineligible users, in order to generate higher toll revenues or reduce ineligible user costs.
For instance, 
the stationary DBCP and CBCP policies for $\boldlambda = (5, 5, 1) \in S_{\boldlambda}^\paren{1}$ respectively enabled $39\%$ and $52\%$ of eligible users to access the tolled express lane at equilibrium. 
However, when $\lambda_E$ exceeds $\lambda_I$ and $\lambda_R$ substantially, e.g., when $\boldlambda = (20, 1, 0) \in S_{\boldlambda}^\paren{2}$, the stationary DBCP policy provided eligible users a higher express lane access rate ($49\%$) at equilibrium compared to the stationary CBCP policy ($40\%$).

\subsubsection{User Travel Times}
\label{subsubsec: User Travel Times}

From 
Table \ref{Table: (Avg) All equilibrium results, for all lambda}
and Fig. \ref{fig: Figures___Metrics_Comparison_CBCP_vs_DBCP}, we observe that, compared to their CBCP counterparts, stationary DBCP policies across weights $\boldlambda \in S_{\boldlambda}^\paren{1} \cup S_{\boldlambda}^\paren{2}$ induce a longer (resp., shorter) average express (resp., GP) lane travel time at equilibrium. Thus, compared to stationary CBCP policies, stationary DBCP policies offer reduced travel time savings on the express lane at equilibrium, in exchange for providing 
express lane access to a higher fraction of overall users.
For instance, the stationary DBCP (resp., CBCP) policy for $\boldlambda = (5, 5, 1) \in S_{\boldlambda}^\paren{1}$ induced average travel time savings of $2.9$ min (resp., $3.7$ min) on the express lane at equilibrium. Similarly, the stationary DBCP (resp., CBCP) policy for $\boldlambda = (10, 1, 1) \in S_{\boldlambda}^\paren{2}$ induced average travel time savings of $0.8$ min (resp., $3.8$ min) on the express lane at equilibrium.


\subsection{Discussion and Future Research Directions}
\label{subsec: Discussion and Future Research Directions}

Our numerical analysis in Sec. \ref{subsec: Equilibrium Comparisons Between Stationary CBCP and DBCP Policies} has several practical implications for the design of real-world DBCP and CBCP policies. 
First, given a Pareto weight $\boldlambda$, the stationary DBCP policy typically generates lower societal costs at equilibrium compared to its CBCP counterpart, unless $\lambda_E$ far exceeds $\lambda_R$ (i.e., unless eligible users' welfare is heavily prioritized). 
Thus, our numerical results highlight that if toll revenue generation is favored over the minimization of eligible users’ costs, it would be advisable to implement DBCP rather than CBCP. In contrast, policy makers who prioritize reducing eligible users’ travel costs over toll revenue maximization should carefully evaluate the relative merits and demerits of deploying DBCP and CBCP policies, before finalizing their decision to deploy one policy type or the other.

It is also worthwhile to study whether \textit{hybrid} DBCP-CBCP policies, which provide both travel credits and toll discounts, can more effectively minimize societal cost than either policy alone.
As analyzed in Sec. \ref{subsec: When DBCP Outperforms CBCP}, DBCP policies can outperform CBCP policies in maximizing toll revenue,
while
CBCP policies 
provide
each eligible user a degree of express lane access proportional to their allotted budget, regardless of VoT.
Thus, compared to pure DBCP policies, hybrid 
policies have the potential to reduce eligible users' costs and assign express lane access to eligible user groups more equitably,
while retaining effectiveness in
generating toll revenue.

Moreover,
CBCP induces the highest equilibrium travel cost among lowest-VoT ineligible users consigned to the GP lane, since their VoTs are neither high enough to warrant paying express lane toll fees out-of-pocket, nor low enough to justify receiving subsidies for express lane access. Similarly, our theoretical analysis and numerical results indicate that, under DBCP, ineligible users with the lowest VoT incur high travel costs. 
Moreover, under DBCP, since eligible users are assigned to the express lane in order of decreasing VoT at equilibrium (Lemma \ref{Lemma: DBCP Equilibria and Effective VoTs}), 
high-VoT eligible users will outmatch low-VoT eligible users in securing express lane access. 
Thus, at equilibrium, DBCP provides scant express lane access to both ineligible and eligible users with low VoTs, who must then bear the brunt of longer travel times on GP lanes.
We
believe that multi-level subsidy allocations, which provide income-dependent discounts or budgets to each user, are critical to 
provide express lane access more equitably across all user groups.

While our 
discussion above focused on toll and subsidy design for single-occupancy vehicles (SOVs) on freeway express lanes, the design of DBCP and CBCP policies must also account for high-occupancy modes of travel, e.g., carpooling and public transit, to fully attain their efficiency and equity objectives. 
Indeed, many real-world equitable congestion pricing policies incentivize users to carpool or take public transit as a means to reduce freeway congestion. 
For example, under DBCP policies in the San Francisco Bay Area, users with income below $200\%$ of the federal poverty level
receive larger discounts if they carpool rather than drive alone.
Similarly, CBCP policies in the Bay Area offer users the option to receive public transit fares in lieu of express lane credits \citep{CBCP-SanMateo}. 
The design of express lane tolls and subsidies for HOVs, though beyond the scope of this work, may nonetheless benefit from insights derived in Sec. \ref{sec: CBCP and DBCP Comparison Study for the Chain Network Setting} concerning the impact of DBCP and CBCP on users who commute in SOVs.


\section{Conclusion}
\label{sec: Conclusion}



Our work presented theoretical and empirical analyses to compare the efficacy of deploying CBCP and DBCP on traffic systems to minimize users' travel costs and generate toll revenue.
Building upon existing CBCP and heterogeneous tolling formulations, we developed a mixed-economy model in which eligible users receive either lump-sum budgets or toll discounts to access tolled express lanes, while ineligible users must pay out-of-pocket for express lane access. We presented convex programs to compute or approximate equilibrium flow sets under
CBCP and DBCP.
As our main contribution, we proved that under practically relevant conditions 
(Assumptions \ref{Assum, h: zero sublevel set of h is bounded}-\ref{Assum: Eligible users cannot access express lane without subsidy under DBCP}), when the societal cost is defined to prioritize maximizing toll revenue over minimizing eligible users' costs (i.e., $\lambda_R \geq \lambda_E$), CBCP cannot strictly outperform DBCP in minimizing the societal cost (Thm. \ref{Thm: Main}). 
To complement Thm. \ref{Thm: Main}, we 
proved
that when
$\lambda_R < \lambda_E$,
CBCP may strictly outperform DBCP in minimizing the societal cost (Prop. \ref{Prop: Assump lambda R geq lambda E removed}).
We validated our theoretical results by simulating the deployment of
CBCP and DBCP on segments of the U.S. 101 freeway in the San Francisco Bay Area, and derived key policy implications.

\bibliographystyle{plainnat}
\bibliography{references}

\clearpage
\appendix

\section{Proofs of Main Lemma and Main Theorem}
\label{sec: App, Proofs of Main Lemma and Main Theorem}

\subsection{Proof of Lemma \ref{Lemma: Main, Explicit Construction of DBCP Policy}}
\label{subsec: App, Proof of Lemma, Main, Explicit Construction of DBCP Policy}

First, if $S_{C, \boldlambda} = \emptyset$, 
then $f_{C, \boldlambda}^\star = \infty \geq f_{D, \boldlambda}^\star$ and we are done. Thus, below, we assume that there exists some $(\boldy, \boldx, \boldtau, \boldB) \in S_{C, \boldlambda}$.
Fix $e \in \edges$ and $t \in [T]$ arbitrarily. For each of the following cases, we will construct $\alpha_{e,t} \in [0, 1]$ and $\hat y_{e,t} := (\hat y_{e,k,t}^g: k \in [2], g \in G_e) \in \R_{\geq 0}^{2|G_e|}$ that satisfy \eqref{Eqn: Main Lemma, matching total eligible user express lane flow}  and \eqref{Eqn: Main Lemma, matching ineligible user express lane flow}: (Case 1) $\tau_{e,t} > 0$ and $x_{e,1,t} < x_{e,2,t}$; (Case 2) $\tau_{e,t} > 0$ and $x_{e,1,t} \geq x_{e,2,t}$; (Case 3) $\tau_{e,t} = 0$.
We then show that $\hat y_{e,t}$ is a $(\tau_{e,t}, \alpha_{e,t})$-DBCP equilibrium on edge $e$ for each $e \in \edges$, $t \in [T]$. Since Lemma \ref{Lemma: DBCP Eq Decomposition, Chain Network} asserts that the computation of DBCP equilibria can be decomposed across edges $e \in \edges$ and across periods $t \in [T]$, we conclude that $\hat \boldy$ is a $(\boldtau, \boldalpha)$-DBCP equilibrium.

\noindent
\textbf{Case 1: $\tau_{e,t} > 0$ and $x_{e,1,t} < x_{e,2,t}$.}
%
Define:
{
\setlength{\abovedisplayskip}{2pt}
\setlength{\belowdisplayskip}{3pt}
\begin{align} 
\label{Eqn: bar vI and bar vE, Def}
    \bar v_{e,t}^I &:= \frac{\tau_{e,t}}{\ell_e(x_{e,2,t}) - \ell_e(x_{e,1,t})}, \qquad
    \bar v_{e,t}^E := \max_{g \in G_e^E} \Bigg\{ v_t^g: \sum_{g' \in G_e^E: v_t^{g'} \geq v_t^g} d^{g'} \geq \sum_{g' \in G_e^E} y_{e,1,t}^{g'} \Bigg\}.
\end{align}
}
$\hspace{-1mm}$Under the assumption that $\tau_{e,t} > 0$ and $x_{e,1,t} < x_{e,2,t}$, we have $0 < \bar v_{e,t}^I < \infty$. (Recall that $\ell_e(\cdot)$ is strictly increasing). Moreover, since $\sum_{g' \in G_e^E: v_t^{g'} \geq \min_{g \in G_e^E} v_t^g} d^{g'} = \sum_{g' \in G_e^E} d^{g'} \geq \sum_{g' \in G_e^E} y_{e,1,t}^{g'}$:
{
\setlength{\abovedisplayskip}{3pt}
\setlength{\belowdisplayskip}{3pt}
\begin{align*}
    \Big\{\min_{g \in G_e^E} v_t^g \Big\} \subseteq \Bigg\{ v_t^g: \ g \in G_e^E, \ \sum_{g' \in G_e^E: v_t^{g'} \geq v_t^g} d^{g'} \geq \sum_{g' \in G_e^E} y_{e,1,t}^{g'} \Bigg\} \subseteq \{v_t^g: \ g \in G_e^E \}.
\end{align*}
}
$\hspace{-1mm}$In words, the set whose maximum value defines $\bar v_{e,t}^E$ is non-empty and finite, so $0 < \bar v_{e,t}^E < \infty$. 
To
interpret $\bar v_{e,t}^E$, consider the scenario where eligible users access the express lane on edge $e$ at period $t$ in descending order of VoT 
until the total eligible users' express lane flow equals the level specified by 
$\boldy$, i.e, $\sum_{g' \in G_e^E} y_{e,1,t}^{g'}$. Then, 
$\bar v_{e,t}^E$ equals the lowest VoT among eligible users with express lane access. 
Since all
users in $g \in G_e$ with $v_t^g \geq v_{e,t}^I$ can access the express lane at equilibrium without subsidy (because $v_t^g \ell_e(x_{e,1,t}) + \tau_{e,t} \leq v_t^g \ell_e(x_{e,2,t})$ for all such $g \in G_e$), by Assumption \ref{Assum: Eligible users cannot access express lane without subsidy under DBCP}, all such groups must be ineligible.
Thus,
$\bar v_{e,t}^E < \bar v_{e,t}^I$.

Next, we define $\boldalpha := (\alpha_{e,t}: e \in \edges, t \in [T]) \in [0, 1]^{|\edges|T}$ component-wise as follows:
{
\setlength{\abovedisplayskip}{3pt}
\setlength{\belowdisplayskip}{2pt}
\begin{align} \label{Eqn: alpha, main theorem, Def}
    \alpha_{e,t} := 1 - \frac{\bar v_{e,t}^E}{\bar v_{e,t}^I}, \hspace{5mm} \forall \ e \in \edges, \ t \in [T].
\end{align}
}
$\hspace{-1mm}$Since $\bar v_{e,t}^I > \bar v_{e,t}^E > 0$, we have $\alpha_{e,t} \in (0, 1)$. 
Next,
for each $k \in [2]$, set $\hat y_{e,k,t}^g := y_{e,k,t}^g$ for each $g \in \G_e^I$, and set, for each $g \in \G_e^E$:
{
\setlength{\abovedisplayskip}{3pt}
\setlength{\belowdisplayskip}{3pt}
\begin{align} 
\label{Eqn: hat y def, expr and gp lanes, Case 1}
    \hat y_{e,1,t}^g &:= \begin{cases}
    d^g, \hspace{4.5cm} &\text{if } v^g > \bar v_{e,t}^E, \\
    \sum\limits_{g \in G_e^E} y_{e,1,t}^g - \sum\limits_{g \in G_e^E: v^g > \bar v_{e,t}^E} d^g, &\text{ if } v^g = \bar v_{e,t}^E, \\
    0, &\text{ if } v^g < \bar v_{e,t}^E.
    \end{cases}, 
    \hspace{1cm} \hat y_{e,2,t}^g = d^g - \hat y_{e,1,t}^g.
\end{align}
}
$\hspace{-1mm}$By construction, $\hat y_{e,t} := (\hat y_{e,k,t}^g: k \in [2], g \in G_e)$ satisfies \eqref{Eqn: Main Lemma, matching total eligible user express lane flow}-\eqref{Eqn: Main Lemma, matching ineligible user express lane flow}, and
assigns users to the express lane in decreasing order of effective VoT; 
we can then readily verify that $\hat y_{e,t}$ satisfies 
\eqref{Eqn: DBCP Equilibrium, Def}.
Thus, $\hat y_{e,t}$ is a $(\tau_{e,t}, \alpha_{e,t})$-DBCP equilibrium.
\noindent
\textbf{Case 2: $\tau_{e,t} > 0$ and $x_{e,1,t} \geq x_{e,2,t}$.} 
%
We first prove that no ineligible user accesses the express lane, and $x_{e,1,t} = x_{e,2,t} = \frac{1}{2}d_e$. For any $g \in G_e^I$, we have $v_t^g \ell_e(x_{e,1,t}) + \tau_{e,t} > v_t^g \ell_e(x_{e,1,t}) \geq v_t^g \ell_e(x_{e,2,t})$ 
so no ineligible user would access the express lane (a strictly more costly travel option).
Moreover, if $x_{e,1,t} > x_{e,2,t}$, then since $\ell_e$ strictly increases, $v_t^g \ell_e(x_{e,1,t}) > v_t^g \ell_e(x_{e,2,t})$ for any $g \in G_e^E$. Thus, if $x_{e,1,t} > x_{e,2,t}$, no user on edge $e$ (eligible or ineligible) would access the express lane, so $x_{e,1,t} = 0$, a contradiction. Therefore, $x_{e,1,t} = x_{e,2,t}$. Since $\network$ is a chain network, $x_{e,1,t} + x_{e,2,t} = d_e$, and so $x_{e,1,t} = x_{e,2,t} = \frac{1}{2}d_e$.

We now set $\alpha_{e,t} := 1$ and consider the $(\tau_{e,t}, \alpha_{e,t})$-DBCP game $\G_{e,t}^D$. Take $\hat y_{e,t} = y_{e,t}$.
Then $\hat y_{e,t}$ satisfies \eqref{Eqn: Main Lemma, matching total eligible user express lane flow} and \eqref{Eqn: Main Lemma, matching ineligible user express lane flow} and $\hat x_{e,1,t} = x_{e,1,t} = \frac{1}{2}d$, $\hat x_{e,2,t} = x_{e,2,t} = \frac{1}{2}d$. 
Since eligible users can access the express lane toll-free when $\alpha_{e,t} = 1$, while ineligible users must pay a non-zero toll, $c_{e,1,t}^g(\hat \boldy; \G_{e,t}^D) > c_{e,2,t}^g(\hat \boldy; \G_{e,t}^D)$ for each $g \in G_e^I$ and $c_{e,1,t}^g(\hat \boldy; \G_{e,t}^D) = c_{e,2,t}^g(\hat \boldy; \G_{e,t}^D)$ for each $g \in G_e^E$.
Thus, $\hat y_{e,t}$ satisfies \eqref{Eqn: DBCP Equilibrium, Def}, and is therefore a $(\tau_{e,t}, \alpha_{e,t})$-DBCP equilibrium flow.
\noindent
\textbf{Case 3: $\tau_{e,t} = 0$.} 
%
We
first observe that the equilibrium $\boldy$ for the $(\boldtau, \boldB)$-CBCP Game $\G^C$ with associated lane flows $\boldx$ satisfies $x_{e,1,t} = x_{e,2,t}$. Indeed, if $x_{e,1,t} < x_{e,2,t}$, then the fact that $\ell_e$ is strictly increasing implies $c_{e,1,t}^g(\boldy; \G^C) = v_t^g \ell_e(x_{e,1,t}) < v_t^g 
\ell_e(x_{e,2,t}) = c_{e,2,t}^g(\boldy; \G^C))$. The variational inequalities that characterize CBCP equilibria (Def. \ref{Def: CBCP Equilibria}) then imply $y_{e,1,t}^g = 0$ for each $g \in G_e$, and so $x_{e,2,t} = \sum_{g \in G_e} y_{e,2,t}^g = 0$.
But then we obtain that $x_{e,1,t} < x_{e,2,t} = 0$, a contradiction.
Similarly, if $x_{e,1,t} > x_{e,2,t}$, then $y_{e,1,t}^g = 0$ for each $g \in G_e$, and so $x_{e,1,t} = \sum_{g \in G_e} y_{e,1,t}^g = 0$, which yields $x_{e,2,t} < x_{e,1,t} = 0$, a contradiction. Thus, $x_{e,1,t} = x_{e,2,t} = \frac{1}{2}d_e$.
We now fix $\hat y_{e,t} = y_{e,t}$,
and $g \in G_e$ and fix $\alpha_{e,t} \in [0, 1]$ arbitrarily. 
Clearly, $\hat y_{e,t}$ satisfies \eqref{Eqn: Main Lemma, matching total eligible user express lane flow} and \eqref{Eqn: Main Lemma, matching ineligible user express lane flow}.
Moreover, since $\tau_{e,t} = 0$ and $x_{e,1,t} = x_{e,2,t}$, we have $c_{e,1,t}^g(\hat \boldy; \G_{e,t}^D) = c_{e,2,t}^g(\hat \boldy; \G_{e,t}^D)$, so by \eqref{Eqn: DBCP Equilibrium, Def}, any user flow on edge $e$ and period $t$ (including $\hat y_{e,t}$) is a $(\tau_{e,t}, \alpha_{e,t})$-DBCP equilibrium flow.

In summary, given any $\boldtau, \boldB \geq 0$, and $(\boldtau, \boldB)$-CBCP equilibrium flow $\boldy$, for each $e \in \edges$ and $t \in [T]$, there exists $\alpha_{e,t} \in [0, 1]$ and a $(\tau_{e,t}, \alpha_{e,t})$-DBCP equilibrium flow $\hat y_{e,t}$ such that $\hat \boldy$ satisfies \eqref{Eqn: Main Lemma, matching total eligible user express lane flow}-\eqref{Eqn: Main Lemma, matching ineligible user express lane flow}. By Lemma \ref{Lemma: DBCP Eq Decomposition, Chain Network}, $\hat \boldy \in \Y^\eq(\G^D(\boldtau, \boldalpha))$ satisfies \eqref{Eqn: Main Lemma, matching total eligible user express lane flow} and \eqref{Eqn: Main Lemma, matching ineligible user express lane flow}, completing our proof.

\subsection{Proof of Thm. \ref{Thm: Main}}
\label{subsec: App, Proof of Thm, Main}

We leverage
Lemma \ref{Lemma: Main, Explicit Construction of DBCP Policy} to show that, for any $(\boldy, \boldx, \boldtau, \boldB) \in S_{C, \boldlambda}$, there exists some $\hat \boldy$, $\hat \boldx$ and $\alpha \in [0, 1]$ such that $(\hat \boldy, \hat \boldx, \boldtau, \boldalpha) \in S_{D, \boldlambda}$ and $f_{D, \boldlambda}(\hat \boldy, \hat \boldx, \boldtau, \boldalpha) \leq f_{C, \boldlambda}(\boldy, \boldx, \boldtau, \boldB)$. We then take the infimum over $(\boldy, \boldx, \boldtau, \boldB) \in S_{C, \boldlambda}$ and $(\hat \boldy, \hat \boldx, \boldtau, \boldalpha) \in S_{D, \boldlambda}$, to establish 
that $f_{D, \boldlambda}^\star \leq f_{C, \boldlambda}^\star$.

Fix $\boldtau \in \R_{\geq 0}^{|\edges|T}, \boldB \geq 0$ arbitrarily.
Let $\hat\boldtau \in \R_{\geq 0}^{|\edges|T}$, $\hat \boldy \in \Y$, and $\hat \boldx \in \R_{\geq 0}^{2|\edges|T}$ be as constructed in the proof of Lemma \ref{Lemma: Main, Explicit Construction of DBCP Policy}, i.e., $\hat\boldtau \geq \boldtau$ component-wise, and $\hat \boldy$ is a $(\hat \boldtau, \boldalpha)$-DBCP equilibrium flow with corresponding equilibrium lane flows $\hat \boldx \in \R_{\geq 0}^{2|\edges|T}$, with $\hat \boldy$ satisfying $\hat y_{e,1,t}^g = \hat y_{e,1,t}^g, \ \forall \ g \in G_e^I$ and $\sum_{g \in G_e^E} \hat y_{e,1,t}^g = \sum_{g \in G_e^E} y_{e,1,t}^g$.
We claim that, for each $e \in \edges$ and $t \in [T]$:
{
\setlength{\abovedisplayskip}{3pt}
\setlength{\belowdisplayskip}{3pt}
\begin{align} \label{Eqn: DBCP policy constructed has lower weighted eligible user cost} 
    \sum_{k=1}^2 \sum_{g \in G_e^E} v_t^g \ell_e(\hat x_{e,k,t}) \hat y_{e,k,t}^g &\leq \sum_{k=1}^2 \sum_{g \in G_e^E} v_t^g \ell_e(x_{e,k,t}) y_{e,k,t}^g.
\end{align}
}
$\hspace{-1mm}$First, the left-hand side of \eqref{Eqn: DBCP policy constructed has lower weighted eligible user cost} can be written as:
{
\setlength{\abovedisplayskip}{3pt}
\setlength{\belowdisplayskip}{3pt}
\begin{align} \nonumber
    \sum_{k=1}^2 \sum_{g \in G_e^E} v_t^g \ell_e(\hat x_{e,k,t}) \hat y_{e,k,t}^g &= \ell_e(x_{e,2,t}) \cdot \sum_{g \in G_e^E} v_t^g \sum_{k=1}^2 \hat y_{e,k,t}^g - \big(\ell_e(x_{e,2,t}) - \ell_e(x_{e,1,t}) \big) \cdot \sum_{g \in G_e^E} v_t^g \hat y_{e,1,t}^g \\ \nonumber
    &= \ell_e(x_{e,2,t}) \cdot \sum_{g \in G_e^E} v_t^g d_e^g - \big(\ell_e(x_{e,2,t}) - \ell_e(x_{e,1,t}) \big) \\ \label{Eqn: Main Lemma, LHS of Eligible User total cost of travel time, rewritten}
    &\hspace{1cm} \cdot \Big( \sum_{g \in G_e^E: \frac{1}{1-\alpha_{e,t}} v_t^g = \bar v_{e,t}^E} v_t^g \hat y_{e,1,t}^g + \sum_{g \in G_e^E: \frac{1}{1-\alpha_{e,t}} v_t^g > \bar v_{e,t}^E} v_t^g d^g \Big),
\end{align}
}
\hspace{-1mm}where $\bar v_{e,t}^E$ is as defined in \eqref{Eqn: bar vI and bar vE, Def}.
Similarly, the right-hand side of \eqref{Eqn: DBCP policy constructed has lower weighted eligible user cost} can be written as:
{
\setlength{\abovedisplayskip}{3pt}
\setlength{\belowdisplayskip}{3pt}
\begin{align} 
\label{Eqn: Main Lemma, RHS of Eligible User total cost of travel time, rewritten}
    &\sum_{k=1}^2 \sum_{g \in G_e^E} v_t^g \ell_e(x_{e,k,t}) y_{e,k,t}^g = \ell_e(x_{e,2,t}) \cdot \sum_{g \in G_e^E} v_t^g d_e^g - \big(\ell_e(x_{e,2,t}) - \ell_e(x_{e,1,t}) \big) \cdot \sum_{g \in G_e^E} v_t^g y_{e,1,t}^g.
\end{align}
}
$\hspace{-1mm}$Thus, establishing \eqref{Eqn: DBCP policy constructed has lower weighted eligible user cost} is equivalent to showing that \eqref{Eqn: Main Lemma, LHS of Eligible User total cost of travel time, rewritten} $\leq$ \eqref{Eqn: Main Lemma, RHS of Eligible User total cost of travel time, rewritten}, which, since $\ell_e(x_{e,2,t}) \geq \ell_e(x_{e,2,t})$, is in turn equivalent to establishing 
the following inequality:
{
\setlength{\abovedisplayskip}{3pt}
\setlength{\belowdisplayskip}{3pt}
\begin{align} \label{Eqn: DBCP policy constructed has lower weighted eligible user cost, re-stated}
    \sum_{g \in G_e^E: \frac{1}{1-\alpha_{e,t}} v_t^g = \bar v_{e,t}^E} v_t^g \hat y_{e,1,t}^g + \sum_{g \in G_e^E: \frac{1}{1-\alpha_{e,t}} v_t^g > \bar v_{e,t}^E} v_t^g d^g &\geq \sum_{g \in G_e^E} v_t^g y_{e,1,t}^g.
\end{align}
}
We prove \eqref{Eqn: DBCP policy constructed has lower weighted eligible user cost, re-stated} as follows:
{
\setlength{\abovedisplayskip}{3pt}
\setlength{\belowdisplayskip}{3pt}
\begin{align*}
    &\sum_{g \in G_e^E: \frac{1}{1-\alpha_{e,t}} v_t^g = \bar v_{e,t}^E} v_t^g \hat y_{e,1,t}^g + \sum_{g \in G_e^E: \frac{1}{1-\alpha_{e,t}} v_t^g > \bar v_{e,t}^E} v_t^g d^g \\
    \geq \ &\sum_{g \in G_e^E: \frac{1}{1-\alpha_{e,t}} v_t^g = \bar v_{e,t}^E} v_t^g \hat y_{e,1,t}^g + \sum_{g \in G_e^E: \frac{1}{1-\alpha_{e,t}} v_t^g > \bar v_{e,t}^E} v_t^g (d^g - y_{e,1,t}^g) + \sum_{g \in G_e^E: \frac{1}{1-\alpha_{e,t}} v_t^g > \bar v_{e,t}^E} v_t^g y_{e,1,t}^g \\
    \geq \ &(1-\alpha) \bar v_{e,t}^E \cdot \Bigg( \sum_{g \in G_e^E: \frac{1}{1-\alpha_{e,t}} v_t^g = \bar v_{e,t}^E} \hat y_{e,1,t}^g + \sum_{g \in G_e^E: \frac{1}{1-\alpha_{e,t}} v_t^g > \bar v_{e,t}^E} (d^g - y_{e,1,t}^g) \Bigg) + \sum_{g \in G_e^E: \frac{1}{1-\alpha_{e,t}} v_t^g > \bar v_{e,t}^E} v_t^g y_{e,1,t}^g \\
    = \ &(1-\alpha) \bar v_{e,t}^E \cdot \Bigg( \sum_{g \in G_e^E} y_{e,1,t}^g - \sum_{g \in G_e^E: \frac{1}{1-\alpha_{e,t}} v_t^g > \bar v_{e,t}^E} y_{e,1,t}^g \Bigg) + \sum_{g \in G_e^E: \frac{1}{1-\alpha_{e,t}} v_t^g > \bar v_{e,t}^E} v_t^g y_{e,1,t}^g \\
    = \ &(1-\alpha) \bar v_{e,t}^E \cdot \sum_{g \in G_e^E: \frac{1}{1-\alpha_{e,t}} v_t^g \leq \bar v_{e,t}^E} y_{e,1,t}^g + \sum_{g \in G_e^E: \frac{1}{1-\alpha_{e,t}} v_t^g > \bar v_{e,t}^E} v_t^g y_{e,1,t}^g \\
    \geq \ &\sum_{g \in G_e^E} v_t^g y_{e,1,t}^g,
\end{align*}
}
$\hspace{-1mm}$thus establishing \eqref{Eqn: DBCP policy constructed has lower weighted eligible user cost}.
For any $\lambda_E, \lambda_I, \lambda_R \geq 0$ satisfying $\lambda_R \geq \lambda_E$, 
from
\eqref{Eqn: f C lambda, Def} and \eqref{Eqn: f D lambda, Def}:
{
\setlength{\abovedisplayskip}{3pt}
\setlength{\belowdisplayskip}{3pt}
\begin{align} \nonumber
    &f_{D, \boldlambda}(\hat \boldy, \hat \boldx, \boldtau, \boldalpha) \\ 
    \label{Eqn: f D leq f C, term for extra revenue collected}
    = \ &\lambda_E \cdot \sum_{t=1}^T \sum_{e \in \edges} \sum_{k=1}^2 \sum_{g \in G_e^E} v_t^g \ell_e(\hat x_{e,k,t}) \hat y_{e,k,t}^g - (\lambda_R - \lambda_E) \cdot \sum_{t=1}^T \sum_{e \in \edges} \sum_{g \in G_e^E} (1 - \alpha_{e,t}) \tau_{e,t} \hat y_{e,1,t}^g \\ \nonumber
    &\hspace{1cm} + \lambda_I \cdot \sum_{t=1}^T \sum_{e \in \edges} \sum_{k=1}^2 \sum_{g \in G_e^I} \big( v_t^g \ell_e(\hat x_{e,k,t}) + \textbf{1}\{k=1\} \tau_{e,t} \big) \hat y_{e,k,t}^g - \lambda_R \cdot \sum_{t=1}^T \sum_{e \in \edges} \sum_{g \in G_e^I} \tau_{e,t} \hat y_{e,1,t}^g \\ 
    \label{Eqn: f D leq f C, comparing hat y to y}
    \leq \ &\lambda_E \cdot \sum_{t=1}^T \sum_{e \in \edges} \sum_{k=1}^2 \sum_{g \in G_e^E} v_t^g \ell_e(x_{e,k,t}) y_{e,k,t}^g \\ \nonumber
    &\hspace{1cm} + \lambda_I \cdot \sum_{t=1}^T \sum_{e \in \edges} \sum_{k=1}^2 \sum_{g \in G_e^I} \big( v_t^g \ell_e(x_{e,k,t}) + \textbf{1}\{k=1\} \tau_{e,t} \big) y_{e,k,t}^g - \lambda_R \cdot \sum_{t=1}^T \sum_{e \in \edges} \sum_{g \in G_e^I} \tau_{e,t} y_{e,1,t}^g \\ \nonumber 
    = \ &f_{C, \boldlambda}(\boldy, \boldx, \boldtau, \boldB),
\end{align}
}
\hspace{-1mm}where 
\eqref{Eqn: f D leq f C, comparing hat y to y} follows from \eqref{Eqn: Main Lemma, matching total eligible user express lane flow}, \eqref{Eqn: Main Lemma, matching ineligible user express lane flow}, \eqref{Eqn: DBCP policy constructed has lower weighted eligible user cost}, and the inequality $\lambda_R \geq \lambda_E$. Thus,
$f_{D, \boldlambda}^\star \leq f_{D, \boldlambda}(\hat \boldy, \hat \boldx, \boldtau, \boldalpha) \leq f_{C, \boldlambda}(\boldy, \boldx, \boldtau, \boldB)$.
Minimizing over $(\boldy, \boldx, \boldtau, \boldB) \in S_{C, \boldlambda}$, we obtain $f_{D, \boldlambda}^\star \leq f_{C, \boldlambda}^\star$. 

\section{Additional Related Work}
\label{sec: App, Additional Related Work}

Below,
we review manuscripts relevant to our work beyond those covered in Sec. \ref{sec: Related Work}.



\paragraph{Artificial Currency Mechanisms:}

Our 
work is related to the use of artificial currency mechanisms to address equity and efficiency concerns in traffic management \citep{Budish2011CombinatorialAssignmentProblem, Elokda2024CARMA, Gorokh2020FromMonetarytoNonmonetaryMechanismDesignviaArtificialCurrencies}. However, most existing artificial currency mechanisms are formulated within a \textit{single-economy} setting, in which each user can access priced resources only by using artificial currencies. In contrast, our \revisionApp{work} considers a \textit{mixed-economy} setting in which only a subset of (\say{eligible}) users are allotted credits that facilitate access to tolled express lanes, while \revisionApp{all other} (\say{ineligible}) users must pay tolls out-of-pocket to access express lanes.


\paragraph{Zeroth-order Descent Methods for Bilevel Optimization:}

In addition to introducing novel insights on CBCP and DBCP equilibrium flows, our work also presents a bilevel optimization framework and a gradient-free optimization method to compute optimal CBCP and DBCP policies. Although bilevel optimization was also utilized in \cite{ChiuJalotaPavone2024CreditvsDiscount} and \cite{Jalota2022CreditBasedCongestionPricing} to compute optimal pricing policies, both \cite{ChiuJalotaPavone2024CreditvsDiscount} and \cite{Jalota2022CreditBasedCongestionPricing} used \textit{grid-based sampling approaches} to identify optimal toll, budget, and discount values. Unfortunately, the computational complexity of grid-based sampling methods scales exponentially in the 
dimension of the parameter space.
Thus, dense sampling methods are unsuitable for designing optimal pricing policies under the multi-segment highway setting considered in this work, in which the optimal toll or discount value may differ across distinct highway segments and time periods. Instead, we adapt the gradient-free bilevel optimization algorithm presented in \citep{Maheshwari2024CongestionPricingforEfficiencyandEquity} to approximately compute toll, budget, and discount values that characterize optimal CBCP and DBCP policies.

\paragraph{Comparisons to the Conference Manuscript \citep{ChiuJalotaPavone2024CreditvsDiscount}:}

While a preliminary version of this work appeared in a conference proceeding \citep{ChiuJalotaPavone2024CreditvsDiscount}, the present article delivers novel theoretical contributions and numerical analysis, rooted in more nuanced traffic model formulations and methodologies.
First, 
our work provides theoretical results comparing the deployment of CBCP and DBCP policies over consecutive highway segments that serve an arbitrary number of eligible and ineligible user groups, each of which may be associated with a distinct value of time and origin-destination pair. 
In contrast,
\cite{ChiuJalotaPavone2024CreditvsDiscount} formulates CBCP and DBCP policies only over single origin-destination traffic networks,
and analyzes the deployment of CBCP and DBCP policies only under the restrictive assumption that all users
share one or two VoTs.
Second, our work establishes theoretical results 
which compare the efficacy of CBCP and DBCP policies in optimizing a range of societal cost metrics that incorporate users’ travel costs and the generated toll revenue. In contrast, 
\cite{ChiuJalotaPavone2024CreditvsDiscount} 
only compares CBCP and DBCP policies on the extent to which eligible users access the tolled express lane, thus capturing the equity and efficiency objectives of congestion pricing in a far more limited sense relative to the societal cost metric studied in our article. 


\section{Sec. 
\ref{sec: CBCP and DBCP Policies}-\ref{sec: CBCP and DBCP Equilibrium Computation} Proofs and Additional Results}
\label{sec: App, Sections 3-4 Proofs and Additional Results}

\subsection{Low-VoT Users Will Not Voluntarily Pay Express Lane Tolls Out-of-Pocket in Chain Networks}
\label{subsec: Low-VoT Users Will Not Voluntarily Pay Express Lane Tolls Out-of-Pocket in Chain Networks}


\revisionApp{
Specifically, consider a CBCP policy which allows eligible users to pay express lane tolls out-of-pocket, specified by $(\boldtau, \boldB)$, where
$\boldtau := (\tau_{e,t}: e \in \edges, t \in [T]) \in \R_{\geq 0}^{|\edges|T}$ denotes express lane tolls in $\network$ across edges and periods, while $\boldB := (B^g: g \in G) \in \R_{\geq 0}^{|G|}$ denotes the total travel budget provided to each group for use throughout the $T$ periods,
with $B^g = 0$ for each $g \in G^I$.
At each period $t$, each ineligible user who uses the express lane on an edge $e \in \edges$ must pay the toll $\tau_{e,t}$ out-of-pocket, while eligible users may expend their available budget or pay out-of-pocket to access the express lane. 
For each edge $e \in \edges$ and period $t \in [T]$, we use $\tilde y_{e,1,t}^g$ (resp., $\hat y_{e,1,t}^g$) to model the flow level of users from each group $g \in \G$ who use part of their budget (resp., pays out of pocket) to access the express lane on edge $e$ at period $t$, and $y_{e,2,t}^g$ to denote the flow level of users from each group $g \in G$ who access the general purpose lane on edge $e$ at period $t$.
Then the $(\boldtau, \boldB)$-CBCP policy induces a congestion game $G^C(\boldtau, \boldB)$ with a feasible flow set $\Y(G^C(\boldtau, \boldB))$ encoding flow continuity, non-negativity, and budget constraints, as defined below in \eqref{Eqn: Flow Constraint Set, CBCP, with Out-of-Pocket Payments}:
{
\setlength{\abovedisplayskip}{3pt}
\setlength{\belowdisplayskip}{3pt}
\begin{align} \label{Eqn: Flow Constraint Set, CBCP, with Out-of-Pocket Payments}
    \Y(G^C(\boldtau, \boldB)) := \Bigg\{ \boldy \in \R_{\geq 0}^{3nT}:
    \ &d^g \cdot \textbf{1}\{i = p_o^g \} + \sum_{\hat e \in \edges_i^{\In}} (\tilde y_{\hat e, 1, t}^g + \hat y_{\hat e, 1, t}^g + y_{\hat e, 2, t}^g) \\ \nonumber
    = \ &d^g \cdot \textbf{1}\{i = p_d^g \} + \sum_{\hat e \in \edges_i^{\Out}} (\tilde y_{\hat e, 1, t}^g + \hat y_{\hat e, 1, t}^g + y_{\hat e, 2, t}^g), \forall \ i \in V, g \in G, t \in [T], \\ \nonumber
    &\sum_{t \in [T]} \sum_{e \in \edges} \tilde y_{e,1,t}^g \tau_{e,t} \leq B^g, \ \forall \ g \in G^E. \Bigg\}
\end{align}
}
}
\revisionApp{
To specify the cost incurred by each user from each group $g \in G$ on each edge $e \in \edges$ and at period $t \in [T]$, we define the maps $\tilde c_{e,1,t}^g(\cdot; G^C(\boldtau, \boldB)), \hat c_{e,1,t}^g(\cdot; G^C(\boldtau, \boldB)), c_{e,2,t}^g(\cdot; G^C(\boldtau, \boldB)): \Y(G^C(\boldtau, \boldB)) \ra \R$ below, 
which respectively denote the cost per user of accessing the express lane while paying tolls using budgets, accessing the express lane while paying tolls out-of-pocket, and using the GP lane: 
$\tilde c_{e,1,t}^g(\boldy; G^C(\boldtau, \boldB)) := v_t^g \ell_e(x_{e,1,t})$; $\hat c_{e,1,t}^g(\boldy; G^C(\boldtau, \boldB)) := v_t^g \ell_e(x_{e,1,t}) + \tau_{e,t}$; and $c_{e,2,t}^g(\boldy; G^C(\boldtau, \boldB)) := v_t^g \ell_e(x_{e,2,t})$.
}
\revisionApp{
Given a $(\boldtau, \boldB)$-CBCP policy of the form described above,
the CBCP equilibrium concept described below in Def. \ref{Def: CBCP Equilibrium, with Out-of-Pocket Payments Allowed} describes the steady-state user flow pattern at which no user can strictly lower their total travel cost via a unilateral deviation.
}

\begin{definition}[\textbf{CBCP Equilibrium, with Out-of-Pocket Payments Allowed}]
\label{Def: CBCP Equilibrium, with Out-of-Pocket Payments Allowed}
\revisionApp{
Given $\boldtau \in \R_{\geq 0}^{|\edges|T}$ and $\boldB \in \R_{\geq 0}^{|G^E|}$, we call $\boldy^\star$ a ($\boldtau, \boldB$)-CBCP equilibrium  if $\boldy^\star \in \Y(G^C(\boldtau, \boldB))$ and for each group $g \in \G$:
{
\setlength{\abovedisplayskip}{3pt}
\setlength{\belowdisplayskip}{3pt}
\begin{align} \label{Eqn: CBCP Equilibrium (with Out-of-Pocket Payments Allowed), Def}
    &\sum_{e \in \edges} \sum_{t=1}^T \Bigg[ (\tilde y_{e,1,t}^g - \tilde y_{e,1,t}^{g \star}) \tilde c_{e,1,t}^g(\boldy^\star; G^C(\boldtau, \boldB)) + (\hat y_{e,1,t}^g - \hat y_{e,1,t}^{g \star}) \hat c_{e,1,t}^g(\boldy^\star; G^C(\boldtau, \boldB)) \\ \nonumber
    &\hspace{2cm} + (y_{e,2,t}^g - y_{e,2,t}^{g \star}) c_{e,2,t}^g (\boldy^\star; G^C(\boldtau, \boldB)) \Bigg] \geq 0, \hspace{1cm} \forall \ \boldy \in \Y(G^C(\boldtau, \boldB)).
\end{align}
}
}
\end{definition}

\revisionApp{
In Prop. \ref{Prop: (CBCP) No Eligible User with Sufficiently Low VoT will Pay Express Lane Tolls Out-of-Pocket} below, we prove that when traversing consecutive highway segments (Assumption \ref{Assum: Chain Network}), no eligible group with below-median VoT, as specified by \eqref{Eqn: Eligible Users' VoT are Sufficiently Low} below, would pay out-of-pocket for express lane access under CBCP equilibrium flows specified by Def. \ref{Def: CBCP Equilibrium, with Out-of-Pocket Payments Allowed}.
}

\begin{proposition}
\label{Prop: (CBCP) No Eligible User with Sufficiently Low VoT will Pay Express Lane Tolls Out-of-Pocket}
\revisionApp{
Suppose Assumption \ref{Assum: Chain Network} holds, and for some $e \in \edges$ and $t \in [T]$:
{
\setlength{\abovedisplayskip}{3pt}
\setlength{\belowdisplayskip}{3pt}
\begin{align}
\label{Eqn: Eligible Users' VoT are Sufficiently Low}
    \sum_{g' \in G_e: v_t^{g'} > v_t^g} d^{g'} \geq \frac{1}{2} \cdot \sum_{g' \in G_e} d^{g'}, \hspace{1cm} \forall \ g \in G_e^E.
\end{align}
}
and $\tau_{e,t} > 0$. Then, for
any $\boldB \in \R_{\geq 0}^{|G|}$, if $\boldy^\star$ is a $(\boldtau, \boldB)$-CBCP equilibrium flow in the sense of Def. \ref{Def: CBCP Equilibrium, with Out-of-Pocket Payments Allowed}, then $\hat y_{e,1,t}^{g\star} = 0$ for each $g \in G_e^E$.
}
\end{proposition}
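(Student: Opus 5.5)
The plan is to argue by contradiction, using the variational inequality in Def. \ref{Def: CBCP Equilibrium, with Out-of-Pocket Payments Allowed} only through \emph{local} deviations on the fixed edge $e$ and period $t$. Suppose some $g \in G_e^E$ has $\hat y_{e,1,t}^{g\star} > 0$. The contradiction will come from comparing two consequences of this. On one hand, the lane flows must satisfy $x_{e,1,t} < x_{e,2,t}$. On the other hand, the express lane must then absorb more than half of the demand $d_e$.

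The first step is to justify the deviations we need. Under Assumption \ref{Assum: Chain Network}, each group's route through the network is fixed; the only choice on edge $e$ at period $t$ is which lane, and how to pay. Consider any group $g'$ and a feasible $\boldy$ that differs from $\boldy^\star$ only by moving mass $\delta>0$ of $g'$ on $(e,t)$ between the out-of-pocket express option $\hat y_{e,1,t}^{g'}$ and the GP lane $y_{e,2,t}^{g'}$. Such a $\boldy$ remains in $\Y(G^C(\boldtau,\boldB))$:
\begin{itemize}
\item flow continuity is preserved, because both options traverse the same edge;
\item the budget constraint involves only $\tilde y$ and is therefore untouched.
\end{itemize}
Plugging such a $\boldy$ into the variational inequality yields the standard pointwise conditions. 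Namely, if $\hat y_{e,1,t}^{g'\star}>0$ then $\hat c_{e,1,t}^{g'} \le c_{e,2,t}^{g'}$, and if $y_{e,2,t}^{g'\star}>0$ then $c_{e,2,t}^{g'} \le \hat c_{e,1,t}^{g'}$. I expect this to be the main point needing care, since the budget couples eligible users' decisions across edges and periods; the observation that out-of-pocket/GP swaps never touch the budget is what makes the argument go through.

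The second step applies the first condition to $g$ itself. This gives $v_t^g \ell_e(x_{e,1,t}) + \tau_{e,t} \le v_t^g \ell_e(x_{e,2,t})$. Since $\tau_{e,t}>0$, we get $\ell_e(x_{e,1,t})<\ell_e(x_{e,2,t})$, and hence $x_{e,1,t}<x_{e,2,t}$ because $\ell_e$ is strictly increasing.

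The third step considers any $g' \in G_e$ with $v_t^{g'} > v_t^g$. Because $\ell_e(x_{e,1,t})-\ell_e(x_{e,2,t})<0$, we have
\[
v_t^{g'}\big(\ell_e(x_{e,1,t})-\ell_e(x_{e,2,t})\big)+\tau_{e,t} < v_t^{g}\big(\ell_e(x_{e,1,t})-\ell_e(x_{e,2,t})\big)+\tau_{e,t} \le 0 .
\]
Thus $\hat c_{e,1,t}^{g'} < c_{e,2,t}^{g'}$, and the second pointwise condition forces $y_{e,2,t}^{g'\star}=0$. So every such user rides the express lane, paying either by budget or out of pocket. Together with $g$'s positive out-of-pocket flow, and using \eqref{Eqn: Eligible Users' VoT are Sufficiently Low}, this gives
\[
x_{e,1,t} \ge \sum_{g': v_t^{g'}>v_t^g} d^{g'} + \hat y_{e,1,t}^{g\star} > \tfrac12 d_e .
\]

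Finally, in a chain network every group in $G_e$ traverses $e$, so $x_{e,1,t}+x_{e,2,t}=d_e$. The bound $x_{e,1,t} > \tfrac12 d_e$ therefore forces $x_{e,1,t}>x_{e,2,t}$, contradicting the second step. Hence $\hat y_{e,1,t}^{g\star}=0$ for every $g\in G_e^E$.
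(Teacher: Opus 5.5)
Your proposal is correct and follows essentially the same argument as the paper. Both assume $\hat y_{e,1,t}^{g\star}>0$, use $g$'s equilibrium condition to get $\ell_e(x_{e,1,t})<\ell_e(x_{e,2,t})$, and use a GP-to-out-of-pocket deviation, which never touches the budget constraint, to force every group with $v_t^{g'}>v_t^g$ off the GP lane; the low-VoT condition and $x_{e,1,t}+x_{e,2,t}=d_e$ then give $x_{e,1,t}\ge x_{e,2,t}$, a contradiction. Your explicit derivation of the pointwise conditions from local deviations and your strict bound $x_{e,1,t}>\tfrac12 d_e$ are minor refinements; the paper's non-strict bound already suffices against the strict inequality $x_{e,1,t}<x_{e,2,t}$.
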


\begin{proof}{Proof}
\revisionApp{
Suppose by contradiction that $\hat y_{e,1,t}^{g\star} > 0$ for some $g \in G_e^E$. 
Since all users select cost-minimizing travel options \citep[Lecture 11]{Roughgarden2016TwentyLecturesonAlgorithmicGameTheory} at equilibrium, $\hat y_{e,1,t}^{g\star} > 0$ implies:
{
\setlength{\abovedisplayskip}{3pt}
\setlength{\belowdisplayskip}{3pt}
\begin{align}
\label{Eqn: VoT condition in Prop, No Eligible User with Sufficiently Low VoT will Pay Express Lane Tolls Out-of-Pocket}
    v_t^g \ell_e(x_{e,1,t}^\star) + \tau_{e,t} &\leq v_t^g \ell_e(x_{e,2,t}^\star).
\end{align}
}
}
\revisionApp{
Fix any group $\bar g \in G_e$ satisfying $v_t^{\bar g} > v_t^g$; by \eqref{Eqn: Eligible Users' VoT are Sufficiently Low}, at least one such group exists.
Define $\boldy \in \Y\big(G^C(\boldtau, \boldB) \big)$ such that $y_{e',k,t'}^{g'} = y_{e',k,t'}^{g'\star}$ for each $e' \in \edges$, $k \in [2]$, $t' \in [T]$, and $g' \ne g$, and:
{
\setlength{\abovedisplayskip}{3pt}
\setlength{\belowdisplayskip}{3pt}
\begin{align*}
    \tilde y_{e',1,t'}^{\bar g} &= \tilde y_{e',1,t'}^{\bar g \star}, \quad
    \hat y_{e',1,t'}^{\bar g} = \begin{cases}
        d^{\bar g} - \tilde y_{e',1,t'}^{\bar g\star}, \ &e' = e, t' = t, \\
        \hat y_{e',1,t'}^{g\star}, &\text{else}, 
    \end{cases}, \quad
    y_{e',2,t'}^{\bar g} = d^{\bar g} - \tilde y_{e',1,t'}^{\bar g} - \hat y_{e',1,t'}^{\bar g}.
\end{align*}
}
$\hspace{-1mm}$Since $\boldy^\star$ is a $(\boldtau, \boldB)$-CBCP equilibrium 
in the sense of Def. \ref{Def: CBCP Equilibrium, with Out-of-Pocket Payments Allowed}:
{
\setlength{\abovedisplayskip}{3pt}
\setlength{\belowdisplayskip}{3pt}
\begin{align}
    0 &\leq \sum_{e' \in \edges} \sum_{t' \in [T]} \Bigg[ (\tilde y_{e,1,t}^{\bar g} - \tilde y_{e',1,t'}^{\bar g \star}) \tilde c_{e',1,t'}^{\bar g}\big( \G^C(\boldtau, \boldalpha) \big) + (\hat y_{e',1,t'}^{\bar g} - \hat y_{e',1,t'}^{\bar g \star}) \hat c_{e',1,t'}^{\bar g} \big( \G^C(\boldtau, \boldalpha) \big) \\ \nonumber
    &\hspace{3cm} + (y_{e',2,t'}^{\bar g} - y_{e',2,t'}^{\bar g \star}) c_{e',2,t'}^{\bar g} \big( \G^C(\boldtau, \boldalpha) \big) \Big] \\ \nonumber
    &= (d^{\bar g} - \hat y_{e,1,t}^{\bar g \star} - \tilde y_{e,1,t}^{\bar g \star}) \cdot \big[ v_t^{\bar g} \ell_e(x_{e,1,t}^\star) + \tau_{e,t} \big] -  y_{e,2,t}^{\bar g\star} \cdot v_t^{\bar g} \ell_e(x_{e,2,t}^\star) \\ \nonumber
    &= y_{e,2,t}^{\bar g\star} \cdot \big[ v_t^{\bar g} \ell_e(x_{e,1,t}^\star) + \tau_{e,t} - v_t^{\bar g} \ell_e(x_{e,2,t}^\star) \big]
\end{align}
}
$\hspace{-1mm}$From \eqref{Eqn: VoT condition in Prop, No Eligible User with Sufficiently Low VoT will Pay Express Lane Tolls Out-of-Pocket} and the fact that $v_t^{\bar g} > v_t^g$, we have $v_t^{\bar g} \ell_e(x_{e,1,t}^\star) + \tau_{e,t} - v_t^{\bar g} \ell_e(x_{e,2,t}^\star) < 0$. Thus, $y_{e,2,t}^{\bar g\star} = 0$, so $\tilde y_{e,1,t}^{\bar g\star} + \hat y_{e,1,t}^{\bar g\star} = d^{\bar g}$ for each $\bar g \in G_e$ satisfying $v_t^{\bar g} > v_t^g$. As a result:
{
\setlength{\abovedisplayskip}{3pt}
\setlength{\belowdisplayskip}{3pt}
\begin{align}
    &x_{e,1,t}^\star = \sum_{g' \in G_e} (\tilde y_{e,1,t}^{g'\star} + \hat y_{e,1,t}^{g'\star}) \geq \sum_{\bar g \in G_e: v_t^{\bar g} > v_t^g} d^{\bar g} \geq \frac{1}{2} \sum_{g' \in G_e} d^{g'}.
\end{align}
}
$\hspace{-1mm}$Moreover, under Assumption \ref{Assum: Chain Network}, $x_{e,1,t}^\star + x_{e,2,t}^\star = \sum_{g' \in G_e} d^{g'}$, so $x_{e,1,t}^\star \geq x_{e,2,t}^\star$,
contradicting \eqref{Eqn: VoT condition in Prop, No Eligible User with Sufficiently Low VoT will Pay Express Lane Tolls Out-of-Pocket}. 
We conclude that $\hat y_{e,1,t}^{g\star} = 0$ for each $g \in G_e^E$.
}



\end{proof}

\begin{proposition}
\label{Prop: (DBCP) No Eligible User with Sufficiently Low VoT will Pay Express Lane Tolls Out-of-Pocket}
\revisionApp{
Suppose Assumption \ref{Assum: Chain Network} holds, and for some $e \in \edges$ and $t \in [T]$, \eqref{Eqn: Eligible Users' VoT are Sufficiently Low} and $\tau_{e,t} > 0$ hold.
If $\boldy^\star$ is a $(\boldtau, \boldzero)$-DBCP equilibrium flow, then $\hat y_{e,1,t}^{g\star} = 0$ for each $g \in G_e^E$.
}
\end{proposition}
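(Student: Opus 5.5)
The plan mirrors the proof of Prop. \ref{Prop: (CBCP) No Eligible User with Sufficiently Low VoT will Pay Express Lane Tolls Out-of-Pocket}, with one simplification: under a $(\boldtau, \boldzero)$-DBCP policy every user pays the full toll. So the variable $\hat y_{e,1,t}^{g\star}$ in the statement is simply the express lane flow $y_{e,1,t}^{g\star}$ of an eligible group $g \in G_e^E$. By Lemma \ref{Lemma: DBCP Eq Decomposition, Chain Network}, it suffices to work with the single-edge, single-period $(\tau_{e,t}, 0)$-DBCP game on edge $e$ at period $t$, whose feasible set is $\{y_{e,1,t}^{g'} + y_{e,2,t}^{g'} = d^{g'},\ y \geq 0\}$ for $g' \in G_e$.

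Suppose, for contradiction, that $y_{e,1,t}^{g\star} > 0$ for some $g \in G_e^E$. Shifting an $\epsilon$ amount of group $g$'s flow from the express lane to the GP lane is feasible. The variational inequality \eqref{Eqn: DBCP Equilibrium, Def} then gives $v_t^g \ell_e(x_{e,1,t}^\star) + \tau_{e,t} \leq v_t^g \ell_e(x_{e,2,t}^\star)$.

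Now take any $\bar g \in G_e$ with $v_t^{\bar g} > v_t^g$; such groups exist by \eqref{Eqn: Eligible Users' VoT are Sufficiently Low}. Since $\tau_{e,t} > 0$, the inequality above forces $\ell_e(x_{e,2,t}^\star) - \ell_e(x_{e,1,t}^\star) \geq \tau_{e,t}/v_t^g > 0$. Multiplying by $v_t^{\bar g} > v_t^g$ yields the strict inequality $v_t^{\bar g} \ell_e(x_{e,1,t}^\star) + \tau_{e,t} < v_t^{\bar g} \ell_e(x_{e,2,t}^\star)$. I will then test \eqref{Eqn: DBCP Equilibrium, Def} for group $\bar g$ against the deviation that moves all of $\bar g$'s GP flow onto the express lane. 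The resulting quantity is $y_{e,2,t}^{\bar g\star}$ times a strictly negative number, so the inequality forces $y_{e,2,t}^{\bar g\star} = 0$, that is, $y_{e,1,t}^{\bar g\star} = d^{\bar g}$. The one point to check is that this deviation is admissible. Under Assumption \ref{Assum: Chain Network} the lane choice on edge $e$ at period $t$ does not affect continuity elsewhere, so it is.

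Summing over all such $\bar g$ and applying \eqref{Eqn: Eligible Users' VoT are Sufficiently Low} gives $x_{e,1,t}^\star \geq \sum_{\bar g: v_t^{\bar g} > v_t^g} d^{\bar g} \geq \tfrac{1}{2}\sum_{g' \in G_e} d^{g'}$. On a chain network, $x_{e,1,t}^\star + x_{e,2,t}^\star = \sum_{g' \in G_e} d^{g'}$, hence $x_{e,1,t}^\star \geq x_{e,2,t}^\star$. Because $\ell_e$ is increasing, $\ell_e(x_{e,1,t}^\star) \geq \ell_e(x_{e,2,t}^\star)$, and together with $\tau_{e,t} > 0$ this contradicts the first-step inequality $v_t^g \ell_e(x_{e,1,t}^\star) + \tau_{e,t} \leq v_t^g \ell_e(x_{e,2,t}^\star)$. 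This proves the claim.

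The main delicacy I expect is the identity $x_{e,1,t}^\star + x_{e,2,t}^\star = d_e$. It needs every group in $G_e$ to actually traverse $e$, which holds on a chain because each o-d pair has a unique route. I would state this explicitly. The remaining steps are routine adaptations of the CBCP argument already given.
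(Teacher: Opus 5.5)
Your proposal is correct and follows the paper's route. The paper's proof of this proposition says only that it parallels the CBCP version, which likewise derives $v_t^g \ell_e(x_{e,1,t}^\star) + \tau_{e,t} \leq v_t^g \ell_e(x_{e,2,t}^\star)$, uses the variational inequality to push every strictly higher-VoT group in $G_e$ entirely onto the express lane, and then combines \eqref{Eqn: Eligible Users' VoT are Sufficiently Low} with $x_{e,1,t}^\star + x_{e,2,t}^\star = d_e$ to obtain $x_{e,1,t}^\star \geq x_{e,2,t}^\star$ and a contradiction; two small refinements are that lane-swap deviations are admissible in any network, since flow continuity involves only $\sum_k y_{e,k,t}^g$, and that before dividing by $v_t^g$ you should note $v_t^g = 0$ would already contradict $\tau_{e,t} > 0$.
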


\begin{proof}{Proof}
\revisionApp{
The proof of Prop. \ref{Prop: (DBCP) No Eligible User with Sufficiently Low VoT will Pay Express Lane Tolls Out-of-Pocket} parallels that of Prop. \ref{Prop: (CBCP) No Eligible User with Sufficiently Low VoT will Pay Express Lane Tolls Out-of-Pocket} and is omitted for brevity.
}
\end{proof}


\subsection{Route Flow Formulation for Credit-Based Congestion Pricing (CBCP)}
\label{subsec: Route Flow Formulation for Credit-Based Congestion Pricing (CBCP)}

\revisionApp{
In this section, we present an alternative, route-based CBCP formulation, in which budget constraints are enforced on the level of each individual eligible user, 
who would only be allowed to traverse routes over which the total express lane toll does not exceed their allotted budget.
}

\revisionApp{
We adopt the network and user population formulations described in the first two paragraphs of Sec. \ref{subsec: Setup}, and characterize a CBCP policy by a tuple $(\boldtau, \boldB)$. 
Here, $\boldtau := (\tau_{e,t}: e \in \edges, t \in [T]) \in \R_{\geq 0}^{|\edges|T}$ denotes express lane tolls on each edge and period, while $\boldB := (B^g: g \in G^E) \in \R_{\geq 0}^{|G^E|}$ describes the total travel credit allotted to each eligible group for use throughout the $T$ periods.
Consistent with our discussion in Sec. 3.3, we stipulate that ineligible users access express lanes by paying tolls out-of-pocket, while eligible users pay for express lane access solely via travel credits.
We denote by $\G^C(\boldtau, \boldB)$ the congestion game induced by a given $(\boldtau, \boldB)$-CBCP policy.
}

\revisionApp{
Next, we specify the set of admissible routes for each group $g \in G$ under a given $(\boldtau, \boldB)$-CBCP policy. First, by a \textit{route}, we refer to an ordered set $r := \{(e_1^r, k_1^r), \cdots, (e_{|r|}^r, k_{|r|}^r) \}$ of edge-lane index pairs satisfying (if $|r| \geq 2$) $(e_\alpha^r)_j = (e_{\alpha+1}^r)_i$ for each $\alpha \in [|r| - 1]$. We denote by $\routes$ the set of all routes. 
For each group $g \in G$ with o-d pair $p^g := (p_o^g, p_d^g) \in \nodes \times \nodes$, we define the set of routes connecting $p^g$ by $\routes^g := \{r \in \routes: (e_1^r)_i = p_o^g, (e_{|r|}^r)_j = p_d^g \}$.
Next, 
we define sets $\routes_{1:T}^g$ of allowable route tuples over the time horizon $T$ for each group $g$ under a given $(\boldtau, \boldB)$-CBCP policy:
{
\setlength{\abovedisplayskip}{3pt}
\setlength{\belowdisplayskip}{3pt}
\begin{subequations}
\begin{align}
\label{Eqn: Feasible Routes, Eligible Users}
    \routes_{1:T}^g(\G^C(\boldtau, \boldB)) &:= \Big\{\boldr := (r_1, \cdots, r_T): \ r_t \in \routes^g, \ \sum_{t \in [T]} \sum_{(e,k) \in r_t} \tau_{e,t} \cdot \textbf{1}\{k=1\} \leq B^g. \Big\}, \quad \forall \ g \in G^E, \\
\label{Eqn: Feasible Routes, Ineligible Users}
    \routes_{1:T}^g(\G^C(\boldtau, \boldB)) &:= \big\{ \boldr := (r_1, \cdots, r_T): \ r_t \in \routes^g \big\}, \quad \forall \ g \in G^I.
\end{align}
\end{subequations}
}
$\hspace{-1mm}$We then define the feasible user route flows for each group $g$ over the time horizon $T$ by:
{
\setlength{\abovedisplayskip}{3pt}
\setlength{\belowdisplayskip}{3pt}
\begin{subequations}
\label{Eqn: Feasible Route Flows}
\begin{align}
\label{Eqn: Feasible Route Flows, Eligible Users}
    \flows_{1:T}^g(\G^C(\boldtau, \boldB)) &:= \Big\{ \boldy^g := (y_\boldr^g: \boldr \in \routes_{1:T}^g(\G^C(\boldtau, \boldB))):
    \sum_{\boldr \in \routes_{1:T}^g(\G^C(\boldtau, \boldB))} y_\boldr^g = d^g
    \Big\}, \quad \forall g \in G^E, \\
\label{Eqn: Feasible Route Flows, per period, Ineligible Users}
    \flows^g(\G^C(\boldtau, \boldB)) &:= \Big\{(y_r^g: r \in \routes^g) \in \R_{\geq 0}^{|\routes^g|}: \sum_{r \in \routes^g} y_r^g = d^g \Big\}, \quad \forall g \in G^I, \\
\label{Eqn: Feasible Route Flows, Ineligible Users}
    \flows_{1:T}^g(\G^C(\boldtau, \boldB)) &:= \big\{ \boldy^g := \big( (y_{r_1}^g: r_1 \in \routes^g), \cdots, (y_{r_T}^g: r_T \in \routes^g) \big)
    \in (\flows^g(\G^C(\boldtau, \boldB)))^T \big\}, \
    \forall g \in G^I.
\end{align}
\end{subequations}
}
$\hspace{-1mm}$We denote the set of all feasible route flows $(\boldy^g \in \flows_{1:T}^g(\G^C(\boldtau, \boldB)): g \in G)$ under the $(\boldtau, \boldB)$-equilibrium by $\flows_{1:T}(\G^C(\boldtau, \boldB))$.
}

\revisionApp{
Users’ routing decisions across all groups and periods generate user route flows $\boldy := (\boldy^g \in \flows_{1:T}^g: g \in G)$, which induce edge flows $\boldx := (x_{e,k,t}: e \in \edges, k \in [2], t \in [T]) \in \R^{2|\edges|T}$ defined by:
{
\setlength{\abovedisplayskip}{3pt}
\setlength{\belowdisplayskip}{3pt}
\begin{align}
    x_{e,k,t} &:= \sum_{g \in G} \sum_{\boldr \in \routes_{1:T}^g(\G^C(\boldtau, \boldB))} y_\boldr^g \cdot \textbf{1}\{(e,k) \in r_t \}, \quad \forall \ e \in \edges, k \in [2], t \in [T].
\end{align}
}
$\hspace{-1mm}$Given user route flows $\boldy$ and corresponding edge flows $\boldx$, we define, in \eqref{Eqn: Route Costs} below, the cost $c_\boldr^g\big(\boldy; \G^C(\boldtau, \boldB) \big)$ incurred by each user in group $g \in G$ who selects routes $\boldr \in \routes_{1:T}^g(\G^C(\boldtau, \boldB))$:
{
\setlength{\abovedisplayskip}{3pt}
\setlength{\belowdisplayskip}{3pt}
\begin{subequations}
\label{Eqn: Route Costs}
\begin{align}
\label{Eqn: Route Costs, Eligible Users}
    c_\boldr^g\big(\boldy; \G^C(\boldtau, \boldB) \big) &:= \sum_{t \in [T]} \sum_{(e,k) \in r_t} \big[ v_t^g \ell_e(x_{e,k,t}) + \tau_{e,t} \cdot \textbf{1}\{k=1\} \big], \quad \forall g \in G^E, \\
\label{Eqn: Route Costs, per period, Ineligible Users}
    c_{r_t}^g\big(\boldy; \G^C(\boldtau, \boldB) \big) &:= \sum_{(e,k) \in r_t} v_t^g \ell_e(x_{e,k,t}), \quad \forall g \in G^I, \\
\label{Eqn: Route Costs, Ineligible Users}
    c_\boldr^g\big(\boldy; \G^C(\boldtau, \boldB) \big) &:= \sum_{t \in [T]} c_{r_t}^g\big(\boldy; \G^C(\boldtau, \boldB) \big), \quad \forall g \in G^I.
\end{align}
\end{subequations}
}
}
\revisionApp{
We now define an alternative, route flow-based formulation of the CBCP equilibrium.
}

\begin{definition}[\textbf{CBCP Equilibrium Route Flows}]
\label{Def: CBCP Equilibrium Route Flows}
\revisionApp{
Given $\boldtau \in \R_{\geq 0}^{|\edges|T}$ and $\boldB \in \R_{\geq 0}^{|G^E|}$,
we call a set of route flows $\boldy^\star := \{ \boldy^{g\star} \in \flows_{1:T}^g\big(\G^C(\boldtau, \boldB)\big): g \in G \}$ CBCP equilibrium route flows if and only if, 
for any eligible group $g \in G^E$:
{
\setlength{\abovedisplayskip}{3pt}
\setlength{\belowdisplayskip}{3pt}
\begin{align}
\label{Eqn: Variational Inequalities for Eligible groups, CBCP Equilibria for Route-Based Flows}
    \sum_{\boldr \in \routes_{1:T}^g} (y_\boldr^g - y_\boldr^{g\star}) c_\boldr^g \big(\boldy^\star; \G^C(\boldtau, \boldB) \big) \geq 0, \quad \forall \boldy \in \flows_{1:T}^g\big(\G^C(\boldtau, \boldB) \big),
\end{align}
}
and for any ineligible group $g \in G^I$:
{
\setlength{\abovedisplayskip}{3pt}
\setlength{\belowdisplayskip}{1pt}
\begin{align}
\label{Eqn: Variational Inequalities for Ineligible groups, CBCP Equilibria for Route-Based Flows}
    \sum_{r_t \in \routes^g} (y_{r_t}^g - y_{r_t}^{g\star}) c_{r_t}^g \big(\boldy^\star; \G^C(\boldtau, \boldB) \big) \geq 0, \quad \forall \boldy \in \flows^g\big(\G^C(\boldtau, \boldB) \big), \ t \in [T].
\end{align}}
}
\end{definition}

\revisionApp{
The existence of route-based CBCP equilibria (Def. \ref{Def: CBCP Equilibrium Route Flows}) follows from the continuity of costs \eqref{Eqn: Route Costs}
and standard arguments in variational inequality theory.
}

\begin{remark}
\label{Remark: Main Theoretical Results hold under CBCP Equilibrium Route Flows}
\revisionApp{
Under the above route-based CBCP formulation, our main theoretical results (Lemma \ref{Lemma: Main, Explicit Construction of DBCP Policy} and Thm. \ref{Thm: Main}) remain true, with proofs proceeding by replacing the lane flow $y_{e,k,t}^g$, for any $e \in \edges$, $k \in [2]$, $t \in [T]$, and $g \in G_e$, with the sum of group $g$'s route flows over the associated edge, lane, and period, i.e., $\sum_{\boldr \in \routes_{1:T}^g(\G^C(\boldtau, \boldB)): (e, k) \in r_t} y_\boldr^g$.
}
\end{remark}

\begin{remark}
\label{Remark: Computing CBCP Equilibrium Route Flows Entails Explicit Route Enumeration}
\revisionApp{
Although our main theoretical results (Lemma \ref{Lemma: Main, Explicit Construction of DBCP Policy} and Thm. \ref{Thm: Main}) hold under both route-based and lane-based CBCP formulations, we restrict our discussion to the latter 
in the main text. Our choice is motivated by the fact that the route-based CBCP formulation requires defining user flows for each group $g$ over all feasible routes, whose number can grow exponentially in
the network edge count and time horizon.
Thus, the route-based CBCP equilibrium flows of Def. \ref{Def: CBCP Equilibrium Route Flows} 
are in general intractable to compute.
While the transportation literature commonly applies the Flow Decomposition Theorem \citep[Thm. 3.5]{Ahuja1993NetworkFlowsTheoryAlgorithmsandApplications} \citep[Thm. 2.2]{Patriksson2015TheTrafficAssignmentProblem} to bypass explicit route enumeration and compute equilibrium lane flows via convex programming, this approach does not readily apply to the computation of equilibrium route or lane flows characterized by Def. \ref{Def: CBCP Equilibrium Route Flows}.
Crucially, 
the Flow Decomposition Theorem establishes a correspondence between route and lane flows in networked congestion games in the setting where each user is allowed to select any route connecting their origin and destination.
This condition, however, does not hold in the context of the route-based CBCP 
equilibrium flows defined in Def. \ref{Def: CBCP Equilibrium Route Flows},
due to the eligible user route restrictions imposed by \eqref{Eqn: Feasible Routes, Eligible Users}.
}
\end{remark}


\subsection{Proof of Prop. \ref{Prop: CBCP Equilibria, Sensitivity}}
\label{subsec: App, Proof of Prop, CBCP Equilibria, Sensitivity}


As presented in Prop. \ref{Prop: CBCP Equilibria, Sensitivity}, let $\tilde v^g$ be as given in \eqref{Eqn: tilde v g, Def}, and note that the minimizer set of the convex program \eqref{Eqn: Convex Program Statement, CBCP} coincides with the set of CBCP equilibria in the hypothetical scenario where the VoT of each eligible group $g \in G^E$ at each period $t \in [T]$ were replaced by the time-invariant value $\tilde v^g$. Concretely, let $\bar \boldy \in \Y(\G^C(\boldtau, \boldB))$ denote any flow in the minimizer set of the convex program \eqref{Eqn: Convex Program Statement, CBCP} where the VoTs of each eligible group $g \in G^E$ across periods, i.e., $\{v_t^g: t \in [T]\}$, are all replaced by the time-invariant value $\tilde v^g$. Then the variational inequality defining CBCP equilibria, i.e., \eqref{Eqn: CBCP Equilibrium, Def}, implies that for any feasible flow $\boldy \in \Y(\G^C(\boldtau, \boldB))$:
{
\setlength{\abovedisplayskip}{3pt}
\setlength{\belowdisplayskip}{3pt}
\begin{subequations}
\begin{align} \label{Eqn: Proof of Cor, Variational Inequality with Time-Invariant VoTs, 1}
    &\sum_{t=1}^T \sum_{e \in \edges} \sum_{k=1}^2 (y_{e,k,t}^g - \bar y_{e,k,t}^g) \tilde v^g \ell(\bar x_{e,k,t}) \geq 0, \hspace{5mm} \forall \ g \in G^E, \\ \label{Eqn: Proof of Cor, Variational Inequality with Time-Invariant VoTs, 2}
    &\sum_{t=1}^T \sum_{e \in \edges} \sum_{k=1}^2 (y_{e,k,t}^g - \bar y_{e,k,t}^g) \Big( v_t^g \ell(\bar x_{e,k,t}) + \tau_{e,t} \cdot \textbf{1}\{k = 1\} \Big) \geq 0, \hspace{5mm} \forall \ g \in G^I.
\end{align}
\end{subequations}
}

By the definitions of $\epsilon$-$(\boldtau, \textbf{B})$-CBCP equilibria (Def. \ref{Def: epsilon CBCP Equilibria}) and user costs under CBCP policies \eqref{Eqn: Costs, CBCP}, in order to establish Prop. \ref{Prop: CBCP Equilibria, Sensitivity}, we must prove that for any feasible flow $\boldy \in \Y(\G^C(\boldtau, \boldB))$:
{
\setlength{\abovedisplayskip}{3pt}
\setlength{\belowdisplayskip}{3pt}
\begin{subequations}
\begin{align} 
\label{Eqn: Inequality to establish to prove Prop, CBCP Equilibria, Sensitivity, for eligible users}
    &\sum_{e \in \edges} \sum_{k=1}^2 \sum_{t=1}^T (y_{e,k,t}^g - \bar y_{e,k,t}^g) \cdot
    v_t^g \ell(\bar x_{e,k,t})
    \geq -\epsilon, \hspace{5mm} \forall \ g \in G^E, \\ 
    \label{Eqn: Inequality to establish to prove Prop, CBCP Equilibria, Sensitivity, for ineligible users}
    &\sum_{e \in \edges} \sum_{k=1}^2 \sum_{t=1}^T (y_{e,k,t}^g - \bar y_{e,k,t}^g) \cdot
    \Big( v_t^g \ell(\bar x_{e,k,t}) + \tau_{e,t} \cdot \textbf{1}\{k = 1\} \Big) 
    \geq -\epsilon, \hspace{5mm} \forall \ g \in G^I,
\end{align}
\end{subequations}
}
with $\epsilon$ given by \eqref{Def: epsilon CBCP Equilibria}.
First, \eqref{Eqn: Inequality to establish to prove Prop, CBCP Equilibria, Sensitivity, for ineligible users} follows from 
\eqref{Eqn: CBCP Equilibrium, Def},
since $\epsilon > 0$. Next, we note that \eqref{Eqn: Inequality to establish to prove Prop, CBCP Equilibria, Sensitivity, for eligible users} holds, since for any $g \in G^E$:
{
\setlength{\abovedisplayskip}{3pt}
\setlength{\belowdisplayskip}{3pt}
\begin{align} \nonumber
    &\sum_{e \in \edges} \sum_{k=1}^2 \sum_{t=1}^T (y_{e,k,t}^g - \bar y_{e,k,t}^g) \cdot
    v_t^g \ell(\bar x_{e,k,t}) \\ \nonumber
    = \ &\tilde v^g \cdot \sum_{t=1}^T \sum_{e \in \edges} \sum_{k=1}^2 (y_{e,k,t}^g - \bar y_{e,k,t}^g) \cdot \ell(\bar x_{e,k,t}) + \sum_{t=1}^T (v_t^g - \tilde v^g) \cdot \sum_{e \in \edges} \sum_{k=1}^2 (y_{e,k,t}^g - \bar y_{e,k,t}^g) \cdot
    \ell(\bar x_{e,k,t}) \\ \label{Eqn: Proof of Prop, CBCP Equilibria, Sensitivity, Inequality, apply def of CBCP equilibrium}
    \geq \ &\sum_{t=1}^T (v_t^g - \tilde v^g) \cdot \sum_{e \in \edges} \sum_{k=1}^2 (y_{e,k,t}^g - \bar y_{e,k,t}^g) \cdot
    \ell(\bar x_{e,k,t}) \\ \label{Eqn: Proof of Prop, CBCP Equilibria, Sensitivity, Inequality, Rewriting in terms of difference in y and ell}
    = \ &\sum_{t=1}^T (v_t^g - \tilde v^g) \cdot \sum_{e \in \edges} (y_{e,1,t}^g - \bar y_{e,1,t}^g) \cdot
    \big[\ell(\bar x_{e,1,t}) - \ell(\bar x_{e,2,t})\big] \\ 
    \label{Eqn: Proof of Prop, CBCP Equilibria, Sensitivity, Inequality, bound each absolute value}
    \geq \ &- \delta_v T d^g \cdot \sum_{e \in \edges: g \in G_e^E} \ell(d_e) \\ \nonumber
    \geq \ &- \epsilon,
\end{align}
}
\hspace{-1.5mm}where \eqref{Eqn: Proof of Prop, CBCP Equilibria, Sensitivity, Inequality, apply def of CBCP equilibrium} follows by applying \eqref{Eqn: Proof of Cor, Variational Inequality with Time-Invariant VoTs, 1}, \eqref{Eqn: Proof of Prop, CBCP Equilibria, Sensitivity, Inequality, Rewriting in terms of difference in y and ell} follows since $\sum_{k=1}^2 y_{e,k,t}^g = \sum_{k=1}^2 \bar y_{e,k,t}^g = d^g$ for all $t \in [T]$ and all $e \in \edges$ for which $g \in G_e^E$, so $y_{e,1,t}^g - \bar y_{e,1,t}^g = - (y_{e,2,t}^g - \bar y_{e,2,t}^g)$
and \eqref{Eqn: Proof of Prop, CBCP Equilibria, Sensitivity, Inequality, bound each absolute value} follows since $|v_t^g - \tilde v^g| \leq d^g$ by definition of $\tilde v^g$ and $\delta_v$ (as given by \eqref{Eqn: tilde v g, Def} and \eqref{Eqn: delta v, Def}, respectively).




\section{Sec. \ref{sec: CBCP and DBCP Comparison Study for the Chain Network Setting} Proofs and Additional Results}
\label{sec: App, Section 5 Proofs and Additional Results}

\subsection{Latency Characteristics of Outside Options in Chain Networks}
\label{subsec: Latency Characteristics of Outside Options in Chain Networks}


\revisionApp{
While
our main results are presented over a chain network for clarity (Assumption \ref{Assum: Chain Network}), our model can incorporate congestible, toll-free off-highway roads that provide outside options between consecutive network nodes. 
Specifically, on any outside option which offers slower travel than highway lanes under congestion-free settings (e.g., residential streets), the equilibrium CBCP or DBCP flow equals either zero
or the equilibrium GP highway lane flow between the same consecutive nodes (Prop. \ref{Prop: Outside Option, with CBCP or DBCP}).
In the former case, the outside option is unoccupied at equilibrium and may thus be ignored; in the latter, the outside option exhibits latency characteristics akin to a GP highway lane, and may thus be incorporated into our modeling framework via Remark \ref{Remark: Latency Function Model for Multiple GP Lanes}.
}

\begin{proposition}
\label{Prop: Outside Option, with CBCP or DBCP}
\revisionApp{
Suppose Assumption \ref{Assum: Chain Network} holds, and each edge $e \in \edges$ comprises a tolled express lane (indexed $k = 1$), a toll-free GP lane (indexed $k = 2$), and a toll-free outside option (indexed $k = 3$), with latency functions given by $\ell_{e,1}(\cdot)$, $\ell_{e,2}(\cdot)$, and $\ell_{e,3}(\cdot)$, respectively where $\ell_{e,1}(\cdot) = \ell_{e,2}(\cdot) = \ell_e(\cdot)$, and $\ell_{e,3}(0) \geq \ell_e(0)$. Let $\boldtau \geq \boldzero$, $\emph{\boldB} \geq 0$, and $\boldalpha \in [0, 1]^{|\edges|T}$ be given, and let $\boldy^\star$ be a $(\boldtau, \emph{\boldB})$-CBCP equilibrium or a $(\boldtau, \boldalpha)$-DBCP equilibrium with corresponding edge flows $\boldx^\star$. Then, for each edge $e \in \edges$ and period $t \in [T]$, we have $x_{e,3,t}^\star = 0$ or $\ell(x_{e,3,t}^\star) = \ell(x_{e,2,t}^\star)$.
}
\end{proposition}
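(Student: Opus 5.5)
We argue edge by edge and period by period, in the style of the proofs of Lemma \ref{Lemma: Main, Explicit Construction of DBCP Policy} (Case 3) and Prop. \ref{Prop: (CBCP) No Eligible User with Sufficiently Low VoT will Pay Express Lane Tolls Out-of-Pocket}. Fix $e \in \edges$ and $t \in [T]$, and suppose $x_{e,3,t}^\star > 0$. The goal is to show $\ell_{e,3}(x_{e,3,t}^\star) = \ell_e(x_{e,2,t}^\star)$. The key observation is that lanes $k=2$ and $k=3$ are both toll-free, so under either policy class the cost of a user in group $g$ on lane $k \in \{2,3\}$ is simply $v_t^g \ell_{e,k}(x_{e,k,t}^\star)$. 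In the CBCP case, moving eligible flow between these two lanes also leaves the budget constraint in \eqref{Eqn: Flow Constraint Set, CBCP} unchanged, since that constraint involves only $y_{e,1,t}^g$. Under Assumption \ref{Assum: Chain Network}, the continuity constraints on edge $e$ reduce to $\sum_k y_{e,k,t}^g = d^g$ for each $g \in G_e$. Hence, for any $g$, shifting an amount $\delta$ of group-$g$ flow from lane 3 to lane 2 on $(e,t)$, with everything else fixed, yields a feasible deviation in both $\Y(\G^C(\boldtau, \boldB))$ and $\Y(\G^D(\boldtau, \boldalpha))$; the reverse shift is likewise feasible.

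We plug these deviations into the variational inequalities \eqref{Eqn: CBCP Equilibrium, Def} and \eqref{Eqn: DBCP Equilibrium, Def}, extended to three lanes. Since $x_{e,3,t}^\star > 0$, there is some $g$ with $y_{e,3,t}^{g\star} > 0$. Shifting its lane-3 flow to lane 2 gives $v_t^g[\ell_e(x_{e,2,t}^\star) - \ell_{e,3}(x_{e,3,t}^\star)] \ge 0$, hence $\ell_{e,3}(x_{e,3,t}^\star) \le \ell_e(x_{e,2,t}^\star)$, using $v_t^g > 0$.

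For the reverse inequality we argue by contradiction. Suppose $\ell_{e,3}(x_{e,3,t}^\star) < \ell_e(x_{e,2,t}^\star)$. If $x_{e,2,t}^\star > 0$, some group has positive lane-2 flow, and shifting it to lane 3 strictly lowers that group's cost, contradicting equilibrium. If $x_{e,2,t}^\star = 0$, then $\ell_e(x_{e,2,t}^\star) = \ell_e(0) \le \ell_{e,3}(0) \le \ell_{e,3}(x_{e,3,t}^\star)$, provided $\ell_{e,3}$ is nondecreasing (as we assume of all latencies). This contradicts the strict inequality. Either way we obtain $\ell_{e,3}(x_{e,3,t}^\star) = \ell_e(x_{e,2,t}^\star)$.

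The main obstacle is the boundary case $x_{e,2,t}^\star = 0$, which is exactly where the hypothesis $\ell_{e,3}(0) \ge \ell_e(0)$ is needed. A second point to check is that the reallocation deviations are genuinely feasible in the CBCP budget-constrained set, so that the VI can be applied group by group. This holds because the reallocation only moves flow between toll-free lanes and so leaves each $\sum_{t,e} y_{e,1,t}^g \tau_{e,t}$ untouched. We also need $v_t^g > 0$, which holds implicitly in the paper since VoTs appear in denominators in Prop. \ref{Prop: Convex Program, DBCP}.
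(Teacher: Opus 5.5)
Your proof is correct and follows the same two-step argument as the paper. First, an occupied lane 3 must be cost-minimizing, which gives $\ell_{e,3}(x^\star_{e,3,t}) \le \ell_e(x^\star_{e,2,t})$. Second, a strict inequality would force $x^\star_{e,2,t}=0$, which contradicts $\ell_{e,3}(0)\ge \ell_e(0)$. Your explicit check that reallocating flow between the toll-free lanes leaves the CBCP budget constraint unchanged makes rigorous the paper's appeal to ``only cost-minimizing options are used.'' Your final contradiction also correctly carries the strict inequality, where the paper's chain writes $\leq$.
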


\begin{proof}{Proof}
\revisionApp{
It suffices to show that, if $x_{e,3,t}^\star > 0$ for some $e \in \edges$ and $t \in [T]$, then $\ell(x_{e,3,t}^\star) = \ell(x_{e,2,t}^\star)$.
First, we prove that $\ell_{e,3}(x_{e,3,t}^\star) \leq \ell_e(x_{e,2,t}^\star)$. 
Since, for each group $g$, only cost-minimizing travel options can be occupied with positive flow at equilibrium \cite[Lecture 11]{Roughgarden2016TwentyLecturesonAlgorithmicGameTheory}, we have $v_t^g \ell_{e,3}(x_{e,3,t}^\star) \leq v_t^g \ell_e(x_{e,2,t}^\star)$, and thus $\ell_{e,3}(x_{e,3,t}^\star) \leq \ell_e(x_{e,2,t}^\star)$, as desired.
Next, suppose by contradiction that $\ell_{e,3}(x_{e,3,t}^\star) < \ell_e(x_{e,2,t}^\star)$. 
Again, for each group $g \in G_e$, since only cost-minimizing travel options can be occupied with positive flow at equilibrium, we have $y_{e,2,t}^{g\star} = 0$, and thus $x_{e,2,t}^\star = 0$. 
But then $\ell_{e,3}(0) \leq \ell_{e,3}(x_{e,3,t}^\star) \leq \ell_e(x_{e,2,t}^\star) = \ell_e(0)$, a contradiction.
}
\end{proof}

\subsection{Proof of Corollary \ref{Cor: Main Thm, Strictness}}
\label{subsec: App, Proof of Cor, Main Thm, Strictness}

Given $(\boldy^\star, \boldx^\star, \boldtau^\star, \boldB^\star)$ satisfying the 
conditions presented
in the statement of Cor. \ref{Cor: Main Thm, Strictness}, the proof procedure for Lemma \ref{Lemma: Main, Explicit Construction of DBCP Policy} prescribes the construction of a $(\boldtau^\star, \boldalpha^\star)$-DBCP policy, with $\boldalpha^\star$ defined by \eqref{Eqn: bar vI and bar vE, Def} and \eqref{Eqn: alpha, main theorem, Def} with $(\boldy^\star, \boldx^\star, \boldtau^\star, \boldalpha^\star)$ replacing $(\boldy, \boldx, \boldtau, \boldalpha)$.
Moreover, the proof of Lemma \ref{Lemma: Main, Explicit Construction of DBCP Policy} establishes the existence of a corresponding $(\boldtau^\star, \boldalpha^\star)$-DBCP equilibrium flow $(\hat \boldy, \hat \boldx)$ satisfying \eqref{Eqn: Main Lemma, matching total eligible user express lane flow} and \eqref{Eqn: Main Lemma, matching ineligible user express lane flow} with $\boldy^\star$ replacing $\boldy$, i.e., $\hat y_{e,1,t}^g = (\boldy^\star)_{e,1,t}^g, \ \forall \ g \in G_e^I$ and $\sum_{g \in G_e^E} \hat y_{e,1,t}^g = \sum_{g \in G_e^E} (\boldy^\star)_{e,1,t}^g$.
The proof of Thm. \ref{Thm: Main} implies that to establish Cor. \ref{Cor: Main Thm, Strictness}, it suffices to verify that the term $(\lambda_R - \lambda_E) \cdot \sum_{t=1}^T \sum_{e \in \edges} \sum_{g \in G_e^E} (1 - \alpha_{e,t}^\star) \tau_{e,t}\hat y_{e,1,t}^g$, which appears as part of \eqref{Eqn: f D leq f C, term for extra revenue collected} with $(\boldy, \boldx, \boldtau, \boldB)$ replaced by $(\boldy^\star, \boldx^\star, \boldtau^\star, B^\star)$, must be strictly positive.
Since $\lambda_R > \lambda_E$, a sufficient condition
for this requirement
is the existence of some $e \in \edges$, $t \in [T]$ for which 
$\tau_{e,t}^\star > 0$, $\alpha_{e,t}^\star < 1$, and $\sum_{g' \in G_e^E} \hat y_{e,1,t}^{g'} > 0$:
By assumption, there exists $(\boldy^\star, \boldx^\star, \boldtau^\star, \boldB^\star) \in S_{C, \boldlambda}^\star$, $e \in \edges$, $t \in [T]$, and $g \in G_e^E$ such that $(\boldy^\star)_{e,1,t}^g > 0$ and $x_{e,1,t}^\star < x_{e,2,t}^\star$. First, if $\tau_{e,t}^\star = 0$, then as discussed in Lemma \ref{Lemma: Main, Explicit Construction of DBCP Policy}, we must have $x_{e,1,t}^\star = x_{e,2,t}^\star$, a contradiction to the fact that $x_{e,1,t}^\star < x_{e,2,t}^\star$. Thus, $\tau_{e,t}^\star > 0$.
Next, since $x_{e,1,t}^\star < x_{e,2,t}^\star$ and $\ell_e(\cdot)$ is strictly increasing for each $e \in \edges$ and $t \in [T]$, we have $\ell_e(x_{e,1,t}^\star) < \ell_e(x_{e,2,t}^\star)$. Moreover, we have established that $\tau_{e,t}^\star > 0$, so $\bar v_{e,t}^I \in (0, \infty)$. As discussed in the proof of Lemma \ref{Lemma: Main, Explicit Construction of DBCP Policy}, we also have $\bar v_{e,t}^E \in (0, \infty)$ with $\bar v_{e,t}^E < \bar v_{e,t}^I$, so $\alpha_{e,t}^\star \in (0, 1)$.
Finally, we have $\sum_{g' \in G_e^E} \hat y_{e,1,t}^{g'} = \sum_{g' \in G_e^E} (\boldy^\star)_{e,1,t}^{g'} \geq (\boldy^\star)_{e,1,t}^g > 0$.
This
concludes the proof of Cor. \ref{Cor: Main Thm, Strictness}.

\subsection{Proof of Prop. \ref{Prop: Assump lambda R geq lambda E removed}}
\label{subsec: App, Proof of Thm, Assump lambda R geq lambda E removed}


First, if $S_{D, \boldlambda} = \emptyset$, i.e., if the optimization problem \eqref{Eqn: f D lambda star, Def}, were infeasible, then $f_{D, \boldlambda}^\star = \infty \geq f_{C, \boldlambda}^\star$ and we are done. Thus, below, we assume there exists $\tau \in \R_{\geq 0}, \alpha \in [0, 1]$ and a $(\tau, \alpha)$-DBCP equilibrium flow $\boldy \in \R^{2n}$ and associated lane flows $\boldx \in \R_{\geq 0}^2$ such that $(\boldy, \boldx, \tau, \alpha) \in S_{D, \boldlambda}$. 
We first establish that, to guarantee
\eqref{Eqn: Counterexample 1, Eligible y bound} and \eqref{Eqn: Counterexample 1, x bound} hold, $\tau$ must be sufficiently large, and $\alpha$ must be sufficiently small. Formally, we will prove that there exists some $M_\tau > 0, m_\alpha \in (0, 1)$ such that, for any feasible $(\tau, \alpha, \boldy, \boldx)$ satisfying \eqref{Eqn: Counterexample 1, Eligible y bound} and \eqref{Eqn: Counterexample 1, x bound} in the statement of Prop. \ref{Prop: Assump lambda R geq lambda E removed}, we have $\tau \geq M_\tau$ and $\alpha \leq m_\alpha$.
We first define the following quantities:
{
\setlength{\abovedisplayskip}{3pt}
\setlength{\belowdisplayskip}{3pt}
\begin{align*}
    M_{\hat \boldx} &:= \frac{1}{2}\Big( M_x + \frac{1}{2}d_e \Big), \hspace{1cm}
    \bar v_\epsilon := \max\Bigg\{ v_1^g: \ g \in G, \sum_{g' \in G: v_1^{g'} \geq v_1^g} d_e^{g'} \geq M_{\hat \boldx} \Bigg\}, \\
    M_\tau &:= 
    \bar v_\epsilon \big[\ell_e(d_e - M_x) - \ell_e(M_x) \big].
\end{align*}
}
$\hspace{-1mm}$We will now prove that $\tau \geq M_\tau$. Suppose by contradiction that, for some $\tau < M_\tau$, there exists some $\alpha \in [0, 1]$ and $(\tau, \alpha)$-DBCP equilibrium flow $\boldy$, with corresponding lane flows $\boldx$, such that $x_{e,1,1} \leq M_x$. Then, for each $g \in G^I$ with $v_1^g \geq \bar v_\epsilon$:
{
\setlength{\abovedisplayskip}{3pt}
\setlength{\belowdisplayskip}{3pt}
\begin{align*}
    c_{e,1,1}^g(\boldy; \G^D(\tau, \alpha)) &= v_1^g \ell_e(x_{e,1,1}) + \tau_{e,1} = v_1^g \Big( \ell_e(x_{e,1,1}) + \frac{\tau_{e,1}}{v_1^g} \Big) < v_1^g \Big( \ell_e(M_x) + \frac{M_\tau}{\bar v_\epsilon} \Big) \\
    &= v_1^g \ell_e(d_e - M_x) \leq v_1^g \ell_e(d_e - x_{e,1,1}) = v_1^g \ell_e(x_{e,2,1}) = c_{e,2,1}^g(\boldy; \G^D(\tau, \alpha)).
\end{align*}
}
Similarly, if $v_1^{g^E} \geq \bar v_\epsilon$:
{
\setlength{\abovedisplayskip}{3pt}
\setlength{\belowdisplayskip}{3pt}
\begin{align*}
    c_{e,1,1}^{g^E}(\boldy; \G^D(\tau, \alpha)) &= v_1^{g^E} \ell_e(x_{e,1,1}) + (1 - \alpha_{e,1}) \tau_{e,1} 
    < v_1^{g^E} \Big( \ell_e(M_x) + \frac{M_\tau}{\bar v_\epsilon} \Big) = v_1^{g^E} \ell_e(d_e - M_x) \\
    &\leq v_1^{g^E} \ell_e(d_e - x_{e,1,1}) = v_1^{g^E} \ell_e(x_{e,2,1}) = c_{e,2,1}^{g^E}(\boldy; \G^D(\tau, \alpha)).
\end{align*}
}
$\hspace{-1mm}$Thus, for each group $g \in G_e$ such that $v_1^g \geq \bar v_\epsilon$, regardless of eligibility, $y_{e,1,t}^g = d_e^g$, so $x_{e,1,1} := \sum_{g \in G_e} y_{e,1,1}^g \geq \sum_{g \in G_e: v_1^g \geq \bar v_\epsilon} d_e^g \geq M_{\hat \boldx} > M_x$,
contradicting \eqref{Eqn: Counterexample 1, x bound}. 
We thus conclude that $\tau \geq M_\tau$.

Next, to prove that $\alpha < 1$, we define $\bar v_{\max}^I := \max\{v^g: g \in G_e^I \}$ and $m_\alpha := 1 - \frac{v^{g^E}}{\bar v_{\max}^I}$.
If $v^{g^E} \geq \bar v_{\max}^I$, the VoT of each eligible group would equal or exceed the VoT of each ineligible group, thus violating Assumption \ref{Assum: Eligible users cannot access express lane without subsidy under DBCP}.
Therefore, we have $v^{g^E} < \bar v_{\max}^I$, which implies $m_\alpha < 1$.

We now prove $\alpha \leq m_\alpha$.
Suppose by contradiction that $\alpha > m_\alpha$. Then, for any $g' \in G_e^I$, $v^{g'} \leq \bar v_{\max}^I = \frac{1}{1 - m_\alpha} v^{g^E} < \frac{1}{1 - \alpha} v^{g^E} = \bar v_{e,t}^{g^E}$,
where $\bar v_{e,t}^{g^E}$ denotes the effective VoT of eligible users in group $g^E$ with respect to $\alpha$.
Since at the DBCP equilibrium, users fill the express lane in decreasing order of effective VoT (Lemma \ref{Lemma: DBCP Equilibria and Effective VoTs}), the above inequality implies $y_{e,1,1}^{g^E} = d^{g^E}$, which violates the stipulation in \eqref{Eqn: Counterexample 1, Eligible y bound} that $y_{e,1,1}^{g^E} \leq M_y^E < d^{g^E}$. We conclude that $\alpha \leq m_\alpha < 1$.

Now, we define $B$ by $B^{g^E} := \tau \cdot y_{e,1,1}^{g^E}$. We denote by $\G^C$ (resp., $\G^D$) the $(\tau, B)$-CBCP (resp., $(\tau, \alpha)$-DBCP) game. For each $g \in G$, we denote by $\Y^g(\G^C)$ (resp., $\Y^g(\G^D)$) the feasible flow set with respect to the CBCP game $\G^C$ (resp., DBCP game $\G^D$).
Finally, we define $\Y(\G^C) := \prod_{g \in G} \Y^g(\G^C)$ and $\Y(\G^D) := \prod_{g \in G} \Y^g(\G^D).$
For any $g \in G_e^I$, we have $\Y^g(\G^C) = \Y^g(\G^D)$, and for any $g \in G_e^E$, we have $\Y^g(\G^C) \subseteq \Y^g(\G^D)$, since CBCP games impose limits on eligible users' express lane access, based on the amount of their assigned credit.

We now prove that 
$\boldy$ is also a $(\tau, B)$-CBCP equilibrium flow. 
For any $g \in G_e^I$:
{
\setlength{\abovedisplayskip}{3pt}
\setlength{\belowdisplayskip}{3pt}
\begin{align} \label{Eqn: Counterexample 1, CBCP eq establ, Ineligible}
    &c_{e,1,1}^g(y) (z_{e,1,1}^g - y_{e,1,1}^g) + c_{e,2,1}^g(y) (z_{e,2,1}^g - y_{e,2,1}^g) \\ \nonumber
    = \ &\big[ v_1^g \ell_e(x_{e,1,1}) + \tau \big] (z_{e,1,1}^g - y_{e,1,1}^g) + v_1^g \ell_e(x_{e,2,1}) (z_{e,2,1}^g - y_{e,2,1}^g) 
    \geq 0, \quad \forall \ z \in \Y^g(\G^D).
\end{align}
}
$\hspace{-1mm}$Since $\Y^g(\G^C) = \Y^g(\G^D)$, the above inequality holds for all $z \in \Y^g(\G^C)$. 
Similarly:
{
\setlength{\abovedisplayskip}{3pt}
\setlength{\belowdisplayskip}{3pt}
\begin{align} \label{Eqn: Counterexample 1, CBCP eq establ, Eligible}
    &c_{e,1,1}^{g^E}(y) (z_{e,1,1}^{g^E} - y_{e,1,1}^{g^E}) + c_{e,2,1}^{g^E}(y) (z_{e,2,1}^{g^E} - y_{e,2,1}^{g^E}) \\ \nonumber
    = \ &v_1^{g^E} \ell_e(x_{e,1,1}) (z_{e,1,1}^{g^E} - y_{e,1,1}^{g^E}) + v_1^{g^E} \ell_e(x_{e,2,1}) (z_{e,2,1}^{g^E} - y_{e,2,1}^{g^E}) \geq 0, \quad \forall \ z \in \Y^g(\G^C).
\end{align}
}
$\hspace{-1mm}$Since $\Y^g(\G^C) \subseteq \Y^g(\G^D)$, \eqref{Eqn: Counterexample 1, CBCP eq establ, Eligible} must hold for all $z \in \Y^g(\G^C)$. Thus, the variational inequalities characterizing CBCP equilibria are satisfied for $\boldy$, so $\boldy$ is indeed a $(\tau, B)$-CBCP equilibrium.

In summary, we have proved that $\tau \geq M_\tau$, $\alpha \leq m_\alpha$, and $y_{e,1,t}^g \geq m_y^E$, so we now have:
{
\setlength{\abovedisplayskip}{3pt}
\setlength{\belowdisplayskip}{3pt}
\begin{align*}
    f_{D, \boldlambda}(\boldy, \boldx, \tau, \alpha) = \ &\lambda_E \cdot \sum_{k=1}^2 
    v_t^{g^E} \ell_e(x_{e,k,t}) y_{e,k,t}^{g^E} - (\lambda_R - \lambda_I) \cdot \sum_{g \in G_e^I} \tau_{e,t} y_{e,1,t}^g \\
    &\hspace{1cm} - (\lambda_R - \lambda_E) \cdot (1 - \alpha_{e,t}) \tau_{e,t} y_{e,1,t}^g + \lambda_I \cdot \sum_{k=1}^2 \sum_{g \in G_e^I} v_t^g \ell_e(x_{e,k,t}) y_{e,k,t}^g \\
    \geq \ &\lambda_E \cdot \sum_{k=1}^2 
    v_t^{g^E} \ell_e(x_{e,k,t}) y_{e,k,t}^{g^E} - (\lambda_R - \lambda_I) \cdot \sum_{g \in G_e^I} \tau_{e,t} y_{e,1,t}^g \\
    &\hspace{1cm} - (\lambda_R - \lambda_E) \cdot (1 - m_\alpha) \tau_{e,t} y_{e,1,t}^g + \lambda_I \cdot \sum_{k=1}^2 \sum_{g \in G_e^I} v_t^g \ell_e(x_{e,k,t}) y_{e,k,t}^g \\
    = \ &f_{C, \boldlambda}(\boldy, \boldx, \tau, B) + (\lambda_E - \lambda_R) \cdot (1 - m_\alpha) M_\tau m_y^E.
\end{align*}
}
$\hspace{-1mm}$Thus, $f_{C, \boldlambda}^\star = f_{C, \boldlambda}(\boldy, \boldx, \tau, B) \leq f_{D, \boldlambda}(\boldy, \boldx, \tau, \alpha) - (\lambda_E - \lambda_R) \cdot (1 - m_\alpha) M_\tau m_y^E$, where the equality follows since
$\boldy$ is both a $(\tau, \alpha)$-DBCP and a $(\tau, B)$-CBCP equilibrium flow.
Taking the infimum over all $(\boldy, \boldx, \tau, \alpha) \in S_{D, \boldlambda}$, we find that \eqref{Eqn: Counterexample 1, optimal f C bound} holds, as desired.


\section{Additional Details on Numerical Experiments}
\label{sec: App, Additional Details on Numerical Experiments}

\subsection{Network Formulation and Calibration Details}
\label{subsec: App, Network Formulation and Calibration Details}

\subsubsection{Latency Calibration}
\label{subsubsec: App, Latency Calibration}

We note that the US-101 freeway segments over which the 101 Express Lanes Project is implemented, which $\network_\ELP$ models, contains one tolled express lane and $n_{GP} = 3$ toll-free GP lanes. Thus, 
we calibrate latency functions $\ell_{e,1}(\cdot), \ell_{e,2}(\cdot)$ for each $e \in \edges_\ELP$ such that $\ell_{e,2}(x) = \ell_{e,1}\big(\frac{x}{3} \big)$ for each $x \geq 0$ (Remark \ref{Remark: Latency Function Model for Multiple GP Lanes}). 
To tractably solve \eqref{Eqn: Convex Program Statement, DBCP}-\eqref{Eqn: Convex Program Statement, CBCP} at high dimensions, we adopt a standard piecewise affine approximation \citep{Jalota2022CreditBasedCongestionPricing, Salazar2019ACongestionAwareRoutingScheme} of the Bureau of Public Roads (BPR) travel latency model \citep{BPR1964TrafficAssignmentManual}: (Fig. \ref{fig: Latency params estimation})
{
\setlength{\abovedisplayskip}{4pt}
\setlength{\belowdisplayskip}{4pt}
\begin{align} 
\label{Eqn: Latency Functions for simulations, Piecewise Linear}
    \ell_{e,1}(x) &= \bar \ell_e + \beta_e \max\{x - \kappa_e, 0\}, \quad \ell_{e,2}(x) = \bar \ell_e + \beta_e \max\left\{\frac{x}{3} - \kappa_e, 0 \right\}, \hspace{5mm} \forall \ e \in \edges_\ELP.
\end{align}
}
$\hspace{-1mm}$where, for each lane on edge $e$,
$\bar \ell_e$ describes a constant travel time when the traffic flow is below $\kappa_e$, and $\beta_e$ describes the rate of travel time increase once the traffic flow exceeds $\kappa_e$.
To compute $\bar \ell_e$, $\beta_e$, and $\kappa_e$ for each $e \in \edges_\ELP$, we accessed Caltrans' Performance Measurement System (PeMS) \citep{pems-database} to collect weekday vehicle flow and travel speed data on the 101-N freeway from September to November in 2024. As in \cite{Jalota2022CreditBasedCongestionPricing}, we compute optimal values of $\bar \ell_e$, $\beta_e$, and $\kappa_e$ across $e \in \edges_\ELP$ that minimize the mean-squared error between travel latency values computed from PeMS data and from \eqref{Eqn: Latency Functions for simulations, Piecewise Linear} (see Table \ref{Table: Latency Function Parameters and VoT by group values} in App. \ref{sec: App, Data Table for Numerical Experiments}). 


\begin{figure}
    \centering
    \includegraphics[width=0.99\linewidth]{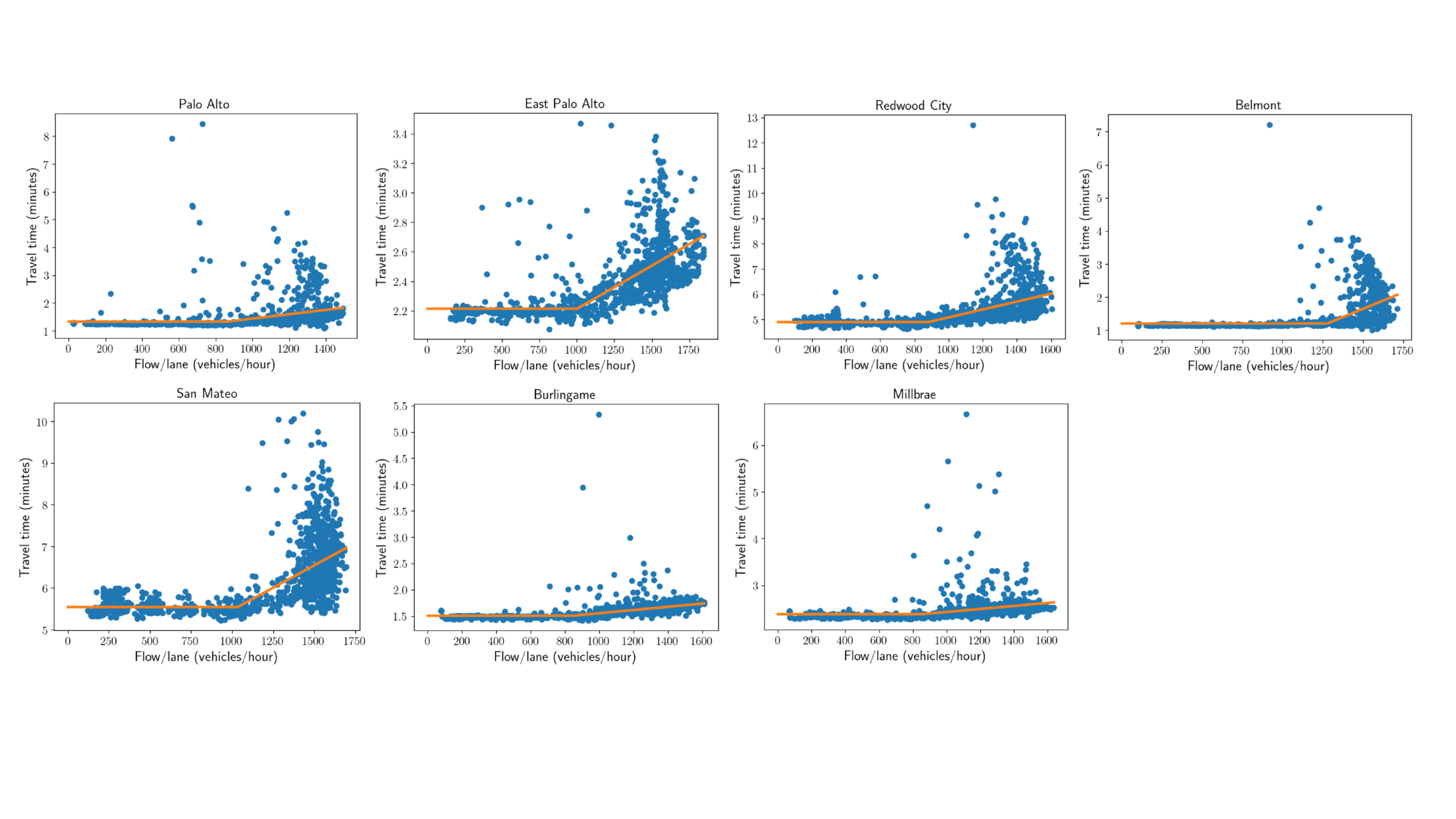}
    \caption{
    \small \sf
    Piecewise affine approximations (orange) of the latency functions $\ell_e$ across $e \in \edges_\ELP$
    in our network model $\network_\ELP$ for the 101 Express Lanes Project, calibrated using flow and travel latency data from Caltrans' Performance Measurement System (PeMS) database.
    }
    \label{fig: Latency params estimation}
\end{figure}

\subsubsection{User Demand}
\label{subsubsec: App, User Demand}

To estimate user demands between all origin-destination (o-d) pairs in $\network_\ELP$, we first retrieved, from PeMS,
the mainline, on-ramp, and off-ramp vehicle flow data on the 101-N freeway on weekdays from September to November 2024.
We use this data to estimate the traffic flow on each edge $e \in \edges_\ELP$, as well as input and output traffic flow at each node $i \in I$ in $\network_\ELP$, that satisfy flow continuity constraints at each node $i \in I$.
Then, to infer the user demand across each o-d pair in $\network_\ELP$ from traffic flow data
across edges and nodes,
we apply the entropy maximization approach proposed by \cite{Wilson1969UseOfEntropyMaximizingModels} and commonly used in the literature \citep{Kim2024EstimateThenPredict, Gonzalez2020TemporalODMatrixEstimation, Afifah2022SpatialPricing}. 

For each edge $e \in \edges_\ELP$, let $\Set_e$ denote the set of traffic flow measurements on edge $e$ collected from PeMS. For each node $i \in I$, let $\Set_i^\In$ and $\Set_i^\Out$ denote the set of data measurements of the traffic flow entering $\network_\ELP$ in and exiting $\network_\ELP$ from node $i$, respectively. Let $\mu_e$, $\mu_i^\In$, and $\mu_i^\Out$ denote the mean of $\Set_e$, $\Set_i^\In$, and $\Set_i^\Out$, respectively, and let $\sigma_e$, $\sigma_i^\In$, and $\sigma_i^\Out$ denote the standard deviation of $\Set_e$, $\Set_i^\In$, and $\Set_i^\Out$, respectively. 
We aim to compute traffic flows on each edge $e \in \edges_\ELP$, denoted $d_e$ below, and for the input and output traffic flow at each node $i \in I$, respectively denoted $d_i^\In$ and $d_i^\Out$ below, that satisfy flow continuity constraints at each node $i \in I$. For simplicity, we write $d := (d_e: e \in \edges_\ELP) \in \R_{\geq 0}^{|\edges|}$, $d^\In := (d_i^\In: i \in I) \in \R_{\geq 0}^{|I|}$, $d^\Out := (d_i^\Out: i \in I) \in \R_{\geq 0}^{|I|}$,
and denote by $i_{\ori}$ and $i_{\dest}$ the overall origin and destination nodes for $\network_\ELP$, respectively, i.e., $E_{i_{\ori}}^- = E_{i_{\dest}}^+ = \phi$.
We then solve:
{
\setlength{\abovedisplayskip}{4pt}
\setlength{\belowdisplayskip}{4pt}
\begin{subequations} \label{Eqn: Opt, Weighted LLS for d e}
\begin{align} \label{Eqn: Opt, Weighted LLS for d e, Objective}
    \min_{d \in \R_{\geq 0}^{|\edges|}, d^\In, d^\Out \in \R_{\geq 0}^{|\nodes|}} \hspace{5mm} &\sum_{e \in \edges_\ELP} \frac{(d_e - \mu_e)^2}{\sigma_e^2} + \sum_{i \in I} \frac{(d_i^\In - \mu_i^\In)^2}{(\sigma_i^\In)^2} + \sum_{i \in I} \frac{(d_i^\Out - \mu_i^\Out)^2}{(\sigma_i^\Out)^2}, \\ 
    \label{Eqn: Opt, Weighted LLS for d e, Constraint, Flow Continuity}
    \text{s.t.} \hspace{5mm} 
    &d_{e_i^\In} \cdot \textbf{1}\{i \ne i_\ori\} + d_i^\In = d_{e_i^\Out} \cdot \textbf{1}\{i \ne i_\dest\} + d_i^\Out, \quad \forall \ i \in \nodes.
\end{align}
\end{subequations}
}


Next, we aim to recover the user demand associated with the origin-destination pair $(i, j) \in P$, denoted below by $\bar d_{ij}$.
For simplicity, we write $\bar d := (\bar d_{ij}: (i,j) \in P) \in \R_{\geq 0}^{|P|}$. We then apply the entropy maximization procedure employed by \cite{Wilson1969UseOfEntropyMaximizingModels, Kim2024EstimateThenPredict}:
{
\setlength{\abovedisplayskip}{4pt}
\setlength{\belowdisplayskip}{4pt}
\begin{subequations} \label{Eqn: Opt, Trip Distribution}
\begin{align} \label{Eqn: Opt, Trip Distribution, Entropy Objective}
    \max_{\bar d \in \R_{\geq 0}^{|P|}} \hspace{5mm} &\sum_{(i,j) \in P} \big( \bar d_{ij} \log \bar d_{ij} + \bar d_{ij} \big) \\ 
    \label{Eqn: Opt, Trip Distribution, Origin and Destination Constraints}
    \text{s.t.} \hspace{5mm} &\sum_{i \in I \backslash \{i_{\dest}\}} \bar d_{ij} = d_i^\In, \quad \sum_{i \in I \backslash \{i_{\ori}\}} \bar d_{ij} = d_i^\Out. 
\end{align}
\end{subequations}
}
For our computed values of $\bar d_{ij}$, see 
Table \ref{Table: Demand by group values}.

\subsubsection{User Value-of-Time (VoT)}
\label{subsubsec: App, User Value-of-Time (VoT)}

We estimate user VoTs in cities between Palo Alto and Millbrae from the 2020 US Census American Community Survey (ACS) data \citep{ACS2021} on annual household wages. 
Although other valid methods for estimating users' VoTs exist, we assume, as is standard  \citep{Athira2016EstimationofValueofTravelTimeforWorkTrips, 
Tveter2022TheValueOfTravelTime}, that users' VoT distributions are proportional to the income distribution in their city of origin. 
For each city in $S_\city$, 
we
adjoin consecutive income intervals from ACS data to form the set of income brackets $S_\income := \{$ $\$0$-$\$14,999$, $\$15,000$-$\$34,999$, $\$35,000$-$\$99,999$, $\$100,000$-$\$199,999$, $> \$200,000 \}.$
We
designate only users in the 2 lowest income brackets to be eligible for receiving toll subsidies (credits or discounts). 
This eligibility threshold approximates that, of twice the federal poverty level, used in the real-world 101 Express Lanes Project. 
Our model designates 16.77\% of all users to be eligible, a figure consistent with existing CBCP and DBCP studies \citep{Jalota2022CreditBasedCongestionPricing, ChiuJalotaPavone2024CreditvsDiscount}.


For each city in $S_\city$ and income bracket in $S_\income$, 
we estimate the mean incomes of the four lower income brackets by taking a weighted average of the center of each income interval, with weights given by the relative fraction of users in each interval. 
We estimate \textit{individual} hourly wages of group $g$ by halving the computed family hourly wages. 
Next, we compute a \textit{baseline average VoT} for each group by dividing its estimated average individual income by the number of total work hours in a year (i.e., 40 hrs/week $\times$ 52 weeks). 
The computed baseline average VoTs are reported in Table \ref{Table: Latency Function Parameters and VoT by group values}.
To model time-varying VoTs for (only) ineligible users,
we draw $\delta_{t,g}$ uniformly from $[-0.2, 0.2]$ for each $g \in G^I$, while setting $\delta_{t,g} = 1$ for each $g \in G^E$.
We then compute the VoT of each group $g$ at each $t \in [T]$ via $v_t^g = 1 + v_g \delta_{t,g}$.

Finally, we compute the user demand for each o-d pair in $\network_{ELP}$ and income bracket in $\mathcal{S}_\income$ (Table \ref{Table: Demand by group values}), assuming for simplicity
that user groups with the same origin share VoT distributions.

\subsection{Algorithm Details for Computing First-Order Stationary DBCP and CBCP Policies}
\label{subsec: App, Algorithm Details for Computing First-Order Stationary DBCP and CBCP Policies}

\subsubsection{Zeroth-Order Optimization Algorithm}
\label{subsubsec: App, Zeroth-Order Optimization Algorithm}


To compute toll and budget parameters corresponding to the first-order stationary CBCP policy for a given set of Pareto weights $\boldlambda = (\lambda_E, \lambda_R, \lambda_I)$, we first randomly select initial iterates for the tolls $\boldtau^\paren{0} \in \R_{\geq 0}^{|\edges|T}$ and budgets $\boldB^\paren{0} \in \R_{\geq 0}^{|G^E|}$,  and fix a maximum number of algorithm iterations $N_\iter \in \N$. We then follow the iterative procedure prescribed in Alg. \ref{Alg: Opt CBCP, Zeroth-order Optimization}.
Concretely, at each iterate $i \in [N_\iter-1]$, given $\boldtau^\paren{i}$ and $\boldB^\paren{i}$, we randomly query $\hat{\boldtau}^\paren{i}$ and $\hat{\boldB}^\paren{i}$ near $\boldtau^\paren{i} \in \R_{\geq 0}^{|\edges|T}$ and $\boldB^\paren{i} \in \R_{\geq 0}^{|G^E|}$ 
(Alg. \ref{Alg: Opt CBCP, Zeroth-order Optimization}, Lines \ref{Algline: CBCP, Query v}-\ref{Algline: CBCP, Compute hat tau, hat B}). We then compute CBCP equilibrium flows corresponding to  $(\boldtau^\paren{i}, \boldB^\paren{i})$ and $(\hat{\boldtau}^\paren{i}, \hat{\boldB}^\paren{i})$, which we denote respectively by $y^\paren{i}$ and $\hat{y}^\paren{i}$, as well as the respective corresponding lane flows $x^\paren{i}$ and $\hat{x}^\paren{i}$ (Alg. \ref{Alg: Opt CBCP, Zeroth-order Optimization}, Lines \ref{Algline: CBCP, Compute hat y}-\ref{Algline: CBCP, Compute hat x}).
Next, we compute the societal cost $f_{C, \lambda}$ at 
$(\boldtau^\paren{i}, \boldB^\paren{i}, y^\paren{i}, x^\paren{i})$ and at $(\hat{\boldtau}^\paren{i}, \hat{\boldB}^\paren{i}, \hat y^\paren{i}, \hat x^\paren{i})$, and use the difference
to construct a zeroth-order surrogate $\hat \nabla f_{C, \lambda}(\boldtau^\paren{i}, \boldB^\paren{i}, y^\paren{i}, x^\paren{i})$ (Alg. \ref{Alg: Opt CBCP, Zeroth-order Optimization}, Line \ref{Algline: CBCP, Compute hat y}) for the gradient
$\nabla f_{C, \lambda}(\boldtau^\paren{i}, \boldB^\paren{i}, y^\paren{i}, x^\paren{i})$,
which we then use to compute the next-iterate toll and budget values,
i.e., $(\boldtau^\paren{i+1}, \boldB^\paren{i+1})$ (Alg. \ref{Alg: Opt CBCP, Zeroth-order Optimization}, Line \ref{Algline: CBCP, Gradient Update}), and compute the corresponding equilibrium user and lane flows, i.e., $y^\paren{i}$ and $x^\paren{i}$ (Alg. \ref{Alg: Opt CBCP, Zeroth-order Optimization}, Lines \ref{Algline: CBCP, Compute y, at iteration i+1}-\ref{Algline: CBCP, Compute x, at iteration i+1}). Finally, Alg. \ref{Alg: Opt CBCP, Zeroth-order Optimization} returns the toll and budget values corresponding to the stationary CBCP policy (Alg. \ref{Alg: Opt CBCP, Zeroth-order Optimization}, Line \ref{Algline: Extract optimal CBCP Policy Parameters}).

Next, we apply
a similar zeroth-order method
(Alg. \ref{Alg: Opt DBCP, Zeroth-order Optimization}) to compute tolls and discounts for the stationary DBCP policy given weights $\boldlambda = (\lambda_E, \lambda_R, \lambda_I)$. 
However, we leverage Lemma \ref{Lemma: DBCP Eq Decomposition, Chain Network} compute DBCP equilibria 
\textit{separately}
across edges $e \in \edges$ and periods $t \in [T]$, in contrast to the \textit{centralized} CBCP equilibrium computation described in Alg. \ref{Alg: Opt CBCP, Zeroth-order Optimization}, Lines \ref{Algline: CBCP, Compute hat y}-\ref{Algline: CBCP, Compute hat x} and \ref{Algline: CBCP, Compute y, at iteration i+1}-\ref{Algline: CBCP, Compute x, at iteration i+1}.

\begin{algorithm} 
\small
\caption{
Zeroth-Order Optimization for Computing Stationary CBCP Policies via \eqref{Eqn: f C lambda, Def}
}
\label{Alg: Opt CBCP, Zeroth-order Optimization} 
\begin{algorithmic}
\Procedure{ZO-CBCP}{$T$, $\network_\ELP = (V_\ELP, E_\ELP)$, $\eta$, $\delta$, $\boldtau^\paren{0}$, $\textbf{B}^\paren{0}$, $N_\iter$}


\State $S_{\tau B} \gets [0, C_\tau]^{|\edges|T} \times [0, \infty)$

\State $y^\paren{0} \gets \Y^\eq\big(\G^C(\boldtau^\paren{0}, \boldB^\paren{0}) \big)$

\State $x_{e,k,t}^\paren{0} \gets \sum_{g \in G_e} (y^\paren{0})_{e,k,t}^g$, $\hspace{1cm} \forall \ e \in \edges, k \in [2], t \in [T]$

\For{$i = 0, 1, \cdots, N_{\emph{iter}}-1$} 

    \State $v^\paren{i} \gets \Unif(\Sphere(\R^{|\edges|T+1}))$
    \label{Algline: CBCP, Query v}
    


    \State $(\hat \boldtau^\paren{i}, \hat{\boldB}^\paren{i}) \gets \proj_{S_{\boldtau, \boldB}} \left( (\boldtau^\paren{i}, \boldB^\paren{i}) + \frac{\delta}{(i+1)^{1/4} \sqrt{|\edges|T + 1}} v^\paren{i} \right)$
    \label{Algline: CBCP, Compute hat tau, hat B}

    \State $\hat y^\paren{i} \gets \Y^\eq\big(\G^C(\hat \boldtau^\paren{i}, \hat{\boldB}^\paren{i}) \big)$
    \label{Algline: CBCP, Compute hat y}

    \State $\hat x_{e,k,t}^\paren{i} \gets \sum_{g \in G_e} (\hat y^\paren{i})_{e,k,t}^g$, $\hspace{1cm} \forall \ e \in \edges, k \in [2], t \in [T]$
    \label{Algline: CBCP, Compute hat x}

    \State $\hat \nabla f_{C, \lambda}^\paren{i} 
    \gets \frac{(i+1)^{1/4} (|\edges|T + 1)^{3/2}}{\delta}  \big(f_{C, \lambda}\big(\hat y^\paren{i}, \hat x^\paren{i}, \hat \boldtau^\paren{i}, \hat{\boldB}^\paren{i}) - f_{C, \lambda}(y^\paren{i}, x^\paren{i}, \boldtau^\paren{i}, \boldB^\paren{i}) \big) v^\paren{i}$
    \label{Algline: CBCP, Zeroth-Order Gradient Estimate}


    \State $(\boldtau^\paren{i+1}, \boldB^\paren{i+1}) \gets \proj_{S_{\boldtau, \boldB}} \left( (\boldtau^\paren{i}, \boldB^\paren{i}) - \frac{\eta}{(i+1)^{1/2}(|\edges|T + 1)} \hat \nabla f_{C, \lambda}^\paren{i} 
    \right)$
    \label{Algline: CBCP, Gradient Update}

    \State $y^\paren{i+1} \gets \Y^\eq\big(\G^C(\boldtau^\paren{i+1}, \boldB^\paren{i+1}) \big)$
    \label{Algline: CBCP, Compute y, at iteration i+1}

    \State $x_{e,k,t}^\paren{i+1} \gets \sum_{g \in G_e} (y^\paren{i+1})_{e,k,t}^g$, $\hspace{1cm} \forall \ e \in \edges, k \in [2], t \in [T]$
    \label{Algline: CBCP, Compute x, at iteration i+1}
    
\EndFor

\State $i^\star \gets \arg\min_{i \in [N_\iter]} f_{C, \lambda}(\boldtau^\paren{i}, \boldB^\paren{i}, y^\paren{i}, x^\paren{i})$
\label{Algline: Extract optimal CBCP Policy Parameters}


\EndProcedure
\end{algorithmic}
\end{algorithm}

\begin{algorithm} 
\small
\caption{
Zeroth-Order Optimization for Computing Stationary DBCP Policies via \eqref{Eqn: f D lambda, Def}
}
\label{Alg: Opt DBCP, Zeroth-order Optimization} 

\begin{algorithmic}

\Procedure{ZO-DBCP}{$T$, $\network_\ELP = (V_\ELP, E_\ELP)$, $\eta$, $\delta$, $\boldalpha^\paren{0}$, $\textbf{B}^\paren{0}$, $N_\iter$}


\State $S_{\tau \alpha} \gets [0, C_\tau] \times [0, \infty)$

\For{$(e, t) \in \edges \times [T]$}
    \State $\boldy_{e,t}^\paren{0} \gets \Y^\eq\big(\G^D(\tau_{e,t}^\paren{0}, \alpha_{e,t}^\paren{0}) \big)$
    
    \State $x_{e,k,t}^\paren{0} \gets \sum_{g \in G_e} (y^\paren{0})_{e,k,t}^g$, $\hspace{1cm} \forall \ k \in [2]$
    
    \For{$i = 0, 1, \cdots, N_{\emph{iter}}-1$} 
    
        \State $v_{e,t}^\paren{i} \gets \Unif(\Sphere(\R^2))$
        \label{Algline: DBCP, Query v}
    

        \State $(\hat \tau_{e,t}^\paren{i}, \hat \alpha_{e,t}^\paren{i}) \gets \proj_{S_{\tau\alpha}} \left( (\tau_{e,t}^\paren{i}, \alpha_{e,t}^\paren{i}) + \frac{\delta}{\sqrt{2} (i+1)^{1/4}} v_{e,t}^\paren{i} \right)$
        \label{Algline: DBCP, Compute hat tau, hat B}
    
        \State $\hat \boldy_{e,t}^\paren{i} \gets \Y^\eq\big(\G^D(\hat \tau_{e,t}^\paren{i}, \hat \alpha_{e,t}^\paren{i}) \big)$
        \label{Algline: DBCP, Compute hat y}
    
        \State $\hat x_{e,k,t}^\paren{i} \gets \sum_{g \in G_e} (\hat y^\paren{i})_{e,k,t}^g$, $\hspace{1cm} \forall \ k \in [2]$
        \label{Algline: DBCP, Compute hat x}
    
        \State $\hat \nabla f_{D, \lambda, e, t}^\paren{i} \gets \frac{(i+1)^{1/4} 2^{3/2}}{\delta} \big( f_{D, \lambda, e, t}(\hat y_{e,t}^\paren{i}, \hat x_{e,t}^\paren{i}, \hat \tau_{e,t}^\paren{i}, \hat \alpha_{e,t}^\paren{i}) - f_{D, \lambda, e, t}(y_{e,t}^\paren{i}, x_{e,t}^\paren{i}, \tau_{e,t}^\paren{i}, \alpha_{e,t}^\paren{i}) \big)$
        \label{Algline: DBCP, Zeroth-Order Gradient Estimate}
    

        \State $(\boldtau_{e,t}^\paren{i+1}, \boldalpha_{e,t}^\paren{i+1}) \gets \proj_{S_{\tau\alpha}} \left( (\tau_{e,t}^\paren{i}, \alpha_{e,t}^\paren{i}) - \frac{\eta}{2(i+1)^{1/2}} \hat \nabla f_{D, \lambda, e, t}^\paren{i} v^\paren{i} \right)$
        \label{Algline: DBCP, Gradient Update}
    
        \State $\boldy^\paren{i+1} \gets \Y^\eq\big(\G^D(\boldtau^\paren{i+1}, \boldalpha^\paren{i+1}) \big)$
        \label{Algline: DBCP, Compute y, at iteration i+1}
    
        \State $x_{e,k,t}^\paren{i+1} \gets \sum_{g \in G_e} (y^\paren{i+1})_{e,k,t}^g$, $\hspace{1cm} \forall \ k \in [2]$
        \label{Algline: DBCP, Compute x, at iteration i+1}

    \EndFor
    
\EndFor

\State $i^\star \gets \arg\min_{i \in [N_\iter]} f_{D, \lambda}(\boldtau^\paren{i}, \boldalpha^\paren{i}, y^\paren{i}, x^\paren{i})$
\label{Algline: Extract optimal DBCP Policy Parameters}


\EndProcedure

\end{algorithmic}
\end{algorithm}


\section{Data Table for Numerical Experiments}
\label{sec: App, Data Table for Numerical Experiments}

\begin{table}
\begingroup
    \centering
    \small
    \begin{tabular}{|c|c|c|c|c|c|c|c|c|} 
    \hline
    \multirow{3}{4em}{\centering o-d Node Indices} & \multicolumn{2}{c|}{o-d Cities} & \multicolumn{5}{c|}{Demand of group $g \in [5]$} & \multirow{3}{4em}{\centering Demand $\bar d_{ij}$} \\
    \cline{2-8}
    & \multirow{2}{*}{Origin} & \multirow{2}{*}{Destination} & $g=1$ & $g=2$ & $g=3$ & $g=4$ & $g=5$ &  \\
    & & & (El) & (El) & (Inel) & (Inel) & (Inel) &  \\
    \hline
    \hline
    $(1, 2)$ & Palo Alto & Palo Alto & 60.03 & 36.75 & 93.70 & 134.76 & 287.28 & 617.31 \\
    \hline 
    $(1, 3)$ & Palo Alto & East Palo Alto & 9.61 & 5.88 & 15.00 & 21.56 & 45.98 & 92.23 \\
    \hline
    $(1, 4)$ & Palo Alto & Redwood City & 111.52 & 68.28 & 174.11 & 250.36 & 533.73 & 1505.47 \\
    \hline
    $(1, 6)$ & Palo Alto & San Mateo & 95.07 & 58.20 & 148.42 & 213.41 & 454.96 & 891.13 \\
    \hline
    $(1, 7)$ & Palo Alto & Burlingame & 23.77 & 14.55 & 37.10 & 53.35 & 113.74 & 268.41 \\
    \hline
    $(1, 8)$ & Palo Alto & Millbrae & 150.04 & 91.86 & 234.25 & 336.83 & 718.07 & 1631.19 \\
    \hline
    $(2, 5)$ & East Palo Alto & Redwood City & 25.30 & 35.37 & 63.00 & 74.87 & 59.64 & 211.09 \\
    \hline
    $(2, 7)$ & East Palo Alto & San Mateo & 21.57 & 30.15 & 53.70 & 63.82 & 50.84 & 124.94 \\
    \hline
    $(2, 8)$ & East Palo Alto & Burlingame & 5.39 & 7.54 & 13.42 & 15.96 & 12.71 & 37.63 \\
    \hline
    $(2, 9)$ & East Palo Alto & Millbrae & 34.04 & 47.59 & 84.75 & 100.73 & 80.24 & 228.71 \\
    \hline
    $(3, 4)$ & Redwood City & Redwood City & 10.4 & 13.9 & 23.59 & 37.36 & 61.24 & 161.30 \\
    \hline
    $(3, 6)$ & Redwood City & San Mateo & 39.44 & 52.77 & 89.43 & 141.64 & 232.17 & 742.83 \\
    \hline
    $(3, 7)$ & Redwood City & Burlingame & 9.86 & 13.19 & 22.36 & 35.41 & 58.05 & 223.76 \\
    \hline
    $(3, 8)$ & Redwood City & Millbrae & 62.25 & 83.29 & 141.15 & 223.56 & 366.46 & 1359.80 \\
    \hline
    $(5, 6)$ & San Mateo & San Mateo & 32.91 & 34.88 & 65.44 & 108.94 & 150.09 & 285.83 \\
    \hline
    $(5, 7)$ & San Mateo & Burlingame & 22.55 & 23.89 & 44.84 & 74.64 & 102.83 & 274.06 \\
    \hline
    $(5, 8)$ & San Mateo & Millbrae & 142.38 & 150.85 & 238.06 & 471.20 & 649.17 & 1665.53 \\
    \hline
    $(6, 8)$ & Burlingame & Millbrae & 71.55 & 84.37 & 178.34 & 296.88 & 436.78 & 1044.21 \\
    \hline
    $(7, 8)$ & Millbrae & Millbrae & 113.36 & 101.04 & 192.22 & 343.78 & 480.55 & 1272.99 \\
    \hline
    \end{tabular}
    \caption{
    \small \sf
    Total demand,
    as well as demand
    for each of the 2 eligible (``el") and 3 ineligible (``inel") user groups for each o-d pair with nonzero flow in $\network_\ELP$. Group indices are assigned in order of increasing VoT.
    O-d pairs whose origin and destination cities coincide describe intra-city commutes.}
    \label{Table: Demand by group values}
\endgroup
\end{table}


\begin{table}
\begingroup
    \centering
    \small
    \begin{tabular}{|c|c||c|c|c|c|c|c|c|c|} 
    \hline
    Edge & City & $\bar \ell_e$ & $\beta_e$ & $\kappa_e$ & \multicolumn{5}{c|}{VoT (\$/min)} \\
    \cline{6-10}
    ($e$) & & (min) & (min/flow) & (flow) & $g = 1$ & $g = 2$ & $g = 3$ & $g = 4$ & $g = 5$ \\
     & & & & & (El) & (El) & (Inel) & (Inel) & (Inel) \\
    \hline
    \hline 
    1 & Palo Alto & 1.33 & $7.83 \times 10^{-4}$ & 1001.52 & 0.04 & 0.15 & 0.30 & 0.58 & 1.86 \\
    \hline
    2 & East Palo Alto & 2.21 & $5.84 \times 10^{-4}$ & 1001.52 & 0.05 & 0.15 & 0.29 & 0.57 & 1.27 \\
    \hline
    3 & Redwood City & 4.89 & $1.56 \times 10^{-3}$ & 881.18 & 0.04 & 0.15 & 0.29 & 0.57 & 1.68 \\
    \hline
    4 & Belmont & 1.20 & $1.99 \times 10^{-3}$ & 1278.95 & N/A & N/A & N/A & N/A & N/A \\
    \hline
    5 & San Mateo  & 5.54 & $2.15 \times 10^{-3}$ & 1034.09 & 0.05 & 0.15 & 0.30 & 0.59 & 1.69 \\
    \hline
    6 & Burlingame & 1.50 & $3.07 \times 10^{-4}$ & 845.15 & 0.04 & 0.15 & 0.30 & 0.59 & 1.76 \\
    \hline
    7 & Millbrae & 2.38 & $3.22 \times 10^{-4}$ & 853.18 & 0.06 & 0.16 & 0.30 & 0.60 & 1.56 \\
    \hline
    \end{tabular}
    \caption{
    \small \sf
    Travel latency parameters $\ell_e$, $\beta_e$, and $\kappa_e$, for each edge $e \in \edges_\ELP = [7]$, and VoTs for user groups whose origin city is $e$. Group indices are assigned in order of increasing VoT. (The PeMS traffic data used to compute these VoTs provided almost no data for users whose origin city is Belmont.)
    }
    \label{Table: Latency Function Parameters and VoT by group values}
\endgroup
\end{table}

\begin{table}
\begingroup
    \centering
    \small
    \begin{tabular}{|c|c|c|c|c|c|c|c|c|c|c|c|} 
    \hline
    Weight & Weights & Policy & \multicolumn{2}{c|}{Travel} & Toll & Societal & \multicolumn{3}{c|}{Express Lane} & \multicolumn{2}{c|}{Average Travel} \\
    Set & $(\lambda_E, \lambda_R, \lambda_I)$
    & & \multicolumn{2}{c|}{Costs ($10^4$ \$)} & Revenue & Cost & \multicolumn{3}{c|}{Use (\%)} & \multicolumn{2}{c|}{Time (min)} \\
    \cline{4-5}
    \cline{8-12}
    & & & \hspace{0.3mm} El. \hspace{0.3mm} & Inel. & ($10^4$ \$) & ($10^4$ \$) & All & El. & Inel. & Expr. & GP \\
    \hline
    \hline
    \multirow{16}{*}{\centering $S_{\boldlambda}^\paren{1}$} &
    \multirow{2}{*}{\centering $(1, 1, 1)$} & CBCP & $1.28$ & $75.80$ & $2.26$ & $78.90$ & 16.57 & 41.62 & 11.51 & 21.31 & 25.03 \\
    \cline{3-12}
    & & DBCP & $1.32$ & $78.00$ & $2.04$ & $77.30$ & 21.53 & 27.96 & 20.24 & 22.29 & 24.38 \\
    \cline{2-12}
    & \multirow{2}{*}{\centering $(1, 5, 1)$} & CBCP & $1.31$ & $80.40$ & $2.65$ & $68.50$ & 16.34 & 32.51 & 13.08 & 21.17 & 25.21 \\
    \cline{3-12}
    & & DBCP & $1.33$ & $79.00$ & $2.79$ & $66.40$ & 19.39 & 33.12 & 16.63 & 21.45 & 24.72 \\
    \cline{2-12}
    & \multirow{2}{*}{\centering $(1, 10, 1)$} & CBCP & $1.33$ & $79.70$ & $3.40$ & $47.10$ & 17.48 & 22.38 & 16.48 & 21.23 & 25.00 \\
    \cline{3-12}
    & & DBCP & $1.35$ & $79.60$ & $3.56$ & $45.40$ & 17.88 & 15.54 & 18.34 & 21.27 & 24.99 \\
    \cline{2-12}
    & \multirow{2}{*}{\centering $(5, 5, 1)$} & CBCP & $1.27$ & $80.10$ & $18.60$ & $77.20$ & 17.60 & 51.88 & 10.68 & 21.46 & 25.12 \\
    \cline{3-12}
    & & DBCP & $1.29$ & $78.80$ & $2.25$ & $74.00$ & 19.64 & 38.87 & 15.77 & 21.79 & 24.64 \\
    \cline{2-12}
    & \multirow{2}{*}{\centering $(5, 10, 1)$} & CBCP & $1.33$ & $80.80$ & $2.73$ & $60.10$ & 15.01 & 26.24 & 12.73 & 21.27 & 25.37 \\
    \cline{3-12}
    & & DBCP & $1.34$ & $79.70$ & $3.00$ & $56.40$ & 17.88 & 23.27 & 16.79 & 21.56 & 24.96 \\
    \cline{2-12}
    & \multirow{2}{*}{\centering $(10, 10, 1)$} & CBCP & $1.27$ & $80.90$ & $2.04$ & $73.10$ & 15.22 & 45.56 & 9.09 & 20.95 & 25.34  \\
    \cline{3-12}
    & & DBCP & $1.31$ & $80.00$ & $2.47$ & $68.40$ & 17.10 & 43.28 & 11.82 & 21.11 & 25.04 \\
    \cline{2-12}
    & \multirow{2}{*}{\centering $(1, 5, 0)$} & CBCP & $1.31$ & $81.2$ & $2.78$ & $-12.60$ & 14.07 & 32.01 & 10.44 & 20.86 & 25.49 \\
    \cline{3-12}
    & & DBCP & $1.35$ & $80.90$ & $3.21$ & $-14.70$ & 15.50 & 27.24 & 13.13 & 20.91 & 25.40 \\
    \cline{2-12}
    & \multirow{2}{*}{\centering $(5, 10, 0)$} & CBCP & $1.30$ & $81.20$ & $2.73$ & $-20.80$ & 15.03 & 33.57 & 11.30 & 20.77 & 25.46 \\
    \cline{3-12}
    & & DBCP & $1.35$ & $81.00$ & $3.21$ & $-25.30$ & 15.92 & 24.03 & 14.29 & 20.82 & 25.41 \\
    \hline
    \hline
    \multirow{18}{*}{$S_{\boldlambda}^\paren{2}$} &
    \multirow{2}{*}{\centering $(5, 1, 1)$} & CBCP & $1.27$ & $80.10$ & $1.56$ & $84.90$ & 15.67 & 58.00 & 7.13 & 21.42 & 25.12 \\
    \cline{3-12}
    & & DBCP & $1.29$ & $77.10$ & $0.78$ & $82.80$ & 23.46 & 40.68 & 19.98 & 23.14 & 24.07 \\
    \cline{2-12}
    & \multirow{2}{*}{\centering $(10, 1, 1)$} & CBCP & $1.26$ & $80.4$ & $1.39$ & $91.70$ & 15.86 & 61.21 & 6.73 & 21.40 & 25.23 \\
    \cline{3-12}
    & & DBCP & $1.29$ & $77.00$ & $0.74$ & $89.20$ & 23.45 & 40.22 & 20.09 & 23.22 & 24.06 \\
    \cline{2-12}
    & \multirow{2}{*}{\centering $(20, 1, 1)$} & CBCP & $1.26$ & $81.40$ & $1.62$ & $105.00$ & 13.89 & 49.01 & 6.80 & 20.82 & 25.53 \\
    \cline{3-12}
    & & DBCP & $1.28$ & $77.80$ & $1.25$ & $102.00$ & 22.33 & 39.98 & 18.77 & 22.53 & 24.28 \\
    \cline{2-12}
    & \multirow{2}{*}{\centering $(5, 1, 0)$} & CBCP & $1.27$ & $81.10$ & $1.73$ & $4.62$ & 14.46 & 47.84 & 7.71 & 21.11 & 25.43 \\
    \cline{3-12}
    & & DBCP & $1.30$ & $79.80$ & $2.09$ & $4.42$ & 16.78 & 41.56 & 11.77 & 21.38 & 24.97 \\
    \cline{2-12}
    & \multirow{2}{*}{\centering $(10, 1, 0)$} & CBCP & $1.26$ & $80.90$ & $2.23$ & $10.40$ & 14.73 & 49.33 & 7.76 & 20.71 & 25.35 \\
    \cline{3-12}
    & & DBCP & $1.31$ & $79.60$ & $2.76$ & $10.40$ & 18.46 & 42.65 & 13.57 & 21.00 & 24.87 \\
    \cline{2-12}
    & \multirow{2}{*}{\centering $(20, 1, 0)$} & CBCP & $1.25$ & $81.50$ & $1.61$ & $23.30$ & 13.74 & 53.88 & 5.65 & 20.77 & 25.52 \\
    \cline{3-12}
    & & DBCP & $1.28$ & $79.90$ & $1.89$ & $23.70$ & 18.73 & 57.30 & 10.95 & 21.22 & 24.95 \\
    \cline{2-12}
    & \multirow{2}{*}{\centering $(5, 0, 1)$} & CBCP & $1.27$ & $80.80$ & $2.08$ & $87.10$ & 14.69 & 46.18 & 8.34 & 20.93 & 25.34 \\
    \cline{3-12}
    & & DBCP & $1.30$ & $76.80$ & $0.44$ & $83.20$ & 23.74 & 43.10 & 19.83 & 23.44 & 23.98 \\
    \cline{2-12}
    & \multirow{2}{*}{\centering $(10, 0, 1)$} & CBCP & $1.24$ & $80.80$ & $1.51$ & $93.20$ & 13.95 & 56.46 & 5.37 & 20.95 & 25.30 \\
    \cline{3-12}
    & & DBCP & $1.30$ & $76.80$ & $0.47$ & $89.70$ & 23.77 & 39.55 & 20.59 & 23.44 & 23.97 \\
    \cline{2-12}
    & \multirow{2}{*}{\centering $(20, 0, 1)$} & CBCP & $1.25$ & $80.90$ & $1.74$ & $106.00$ & 14.94 & 52.52 & 7.35 & 20.96 & 25.36  \\
    \cline{3-12}
    & & DBCP & $1.29$ & $76.90$ & $0.43$ & $103.00$ & 23.24 & 50.69 & 17.71 & 23.32 & 24.01 \\
    \hline
    \end{tabular}
    \caption{
    \small 
    \sf
    Total travel costs, toll revenues, societal costs, express lane usage, and average travel times at equilibrium, under optimal CBCP and DBCP policies with $(\lambda_E, \lambda_R, \lambda_I)$ from $S_{\boldlambda}^\paren{1}$ (which satisfy $\lambda_R \geq \lambda_E$) and $S_{\boldlambda}^\paren{2}$ (which satisfy $\lambda_R < \lambda_E$).
    }
    \label{Table: (Avg) All equilibrium results, for all lambda}
\endgroup
\end{table}

\end{document}